\newcounter{theorem}
\newtheorem{thm}{Theorem}[section]
\newtheorem{corol}{Corollary}[theorem]
\newtheorem{lemma}[theorem]{Lemma}
\newtheorem{propos}[theorem]{Proposition}
\newtheorem{defin}[theorem]{Definition}
\newtheorem*{remark}{Remark}
\title{Topological Information Data Analysis}
\author{Pierre Baudot$^1,2$, Monica Tapia$^2$, Daniel Bennequin$^3$ and Jean-Marc Goaillard$^2$ \\$^1$Median technologies, Sophia antiplis, France,\\ $^2$Inserm UNIS UMR1072 - Universit\'{e} Aix-Marseille AMU, Marseille, France\\ $^3$Université Paris Diderot, Institut Mathématique de Jussieu, Paris, France.\\
	pierre.baudot@gmail.com}
\date{5th July 2019}
\begin{document}
	\maketitle
	
	\begin{abstract}
		This paper presents methods that quantify the structure of statistical interactions within a given data set, and was first used in \cite{Tapia2018}. It establishes new results on the $k$-multivariate mutual-informations ($I_k$) inspired by the topological formulation of Information introduced in \cite{Baudot2015a,Vigneaux2017}. In particular we show that the vanishing of all $I_k$  for $2\leq k \leq n$ of $n$ random variables is equivalent to their statistical independence. Pursuing the work of Hu Kuo Ting and Te Sun Han \cite{Hu1962,Han1975,Han1978}, we show that information functions provide co-ordinates  for binary variables, and that they are analytically independent on the probability simplex for any set of finite variables.  The maximal positive $I_k$ identifies the variables that co-vary the most in the population, whereas the minimal negative $I_k$ identifies synergistic clusters and the variables that differentiate-segregate the most the population. Finite data size effects and estimation biases severely constrain the effective computation of the information topology on data, and we provide simple statistical tests for the undersampling bias and the k-dependences following \cite{Pethel2014}. We give an example of application of these methods to genetic expression and unsupervised cell-type classification. The methods unravel biologically relevant subtypes, with a sample size of $41$ genes and with few errors. It establishes generic basic methods to quantify the epigenetic information storage and a unified epigenetic unsupervised learning formalism. We propose that higher-order statistical interactions and non identically distributed variables are constitutive characteristics of biological systems that should be estimated in order to unravel their significant statistical structure and diversity. The topological information data analysis presented here allows to precisely estimate this higher-order structure characteristic of biological systems.
	\end{abstract}

\epigraph{\textit{"When you use the word information, you should rather use the word form"}}{Ren\'{e} Thom}

	\tableofcontents

\section{Introduction}

This note presents a method of statistical analysis of a set of collected characters in a population, describing a kind of topology of the distribution of information in the data. New theoretical results are developed to justify the method. The data that concern us are represented by certain (observed or computed) parameters $s_1,...,s_n$ belonging to certain finite sets $E_1,...,E_n$ of respective cardinalities $N_1,...,N_n$, which depend on an element $z$ of a certain set $Z$, representing the tested population, of cardinality $N_Z$. In other terms we are looking at $N$ "experimental" functions $X_i:Z\rightarrow E_i, i=1,...,n$, then we will  refer to the data by the letters $(Z,X)$, where $X$ is the product function of the $X_i$, going from $Z$ to the product $E$ of all the sets $E_i,i=1,...,n$.\\
For instance, as in $[..]$, each $E_i,i=1,...,n$ has cardinality $8$ and is identified with the subset of integers $[8]=\{1,...,8\}$, each $s_i=S_i(z),i=1,...,n$ measures the level of expression of a gene $g_i$ in a neuron $z$ belonging to a set $Z$ of classified dopaminergic neurons (DA); another system to be compared to this one is made by the analog measurements for neurons $z'$ belonging to a set $Z'$ of classified non-dopaminergic neurons (NDA). To be precise, in this example, $n=41$, $N_Z=111$, $N_{Z'}=37$.\\

\indent The most usual hypothesis for interpreting the data is the existence of an objective and immutable joint probability $\mathbb{P}_X$ on the set $E$, coming from a hypothetical set $\Omega$ enclosing $Z$ and governing the results of the sample $(Z,X)$. Then, without any other knowledge, assuming that different experimental $z$ are independently chosen, the strong law of large numbers tells that the better possible approximation of $\mathbb{P}_X(s)$ is given by the number of $z$ such that $s=X(z)$ divided by the cardinality of $Z$. However, the standard inequalities of probability theory (Bienaymé-Tchebicheff, Markov, Chernov Kolmogorov, ...) show that the confidence that we can have in this approximation depends heavily on $\mathbb{P}_X$ itself. This generates a risk of circularity. \\
\indent Several approaches can be followed to escape circularity; part of them, maintaining the frequentist point of view, uses the Fisher information metric (cf. \cite{Ly2017}), part of them, using a Bayesian approach, puts probability laws on the set of probabilities themselves, and studies the evolution of these choices with the introduction of new data. Once the total joint probability is found, it is theoretically possible to verify its agreement with the marginal laws on $X_i,i=1,...,n$, or any partial joint variable $X_I=(X_{i_1},...,X_{i_k})$ for $I=\{i_1,...,i_k\}\subset [n]$, but practically it is not so easy, being one of the principal problems in Statistical Mechanics or in Bayesian Analysis.\\

\indent We will follow here a different approach, which consists in describing the manner the variables $X_i,i=1,...,n$ distribute the Information on $(Z,X)$. The experimented population $Z$ has its own characteristics that the data explore, and the frequency of every value $s_I$ of each one of the
variables $X_I,I\subset [n]$ is an information important by itself, without considering the hypothetical law on the whole set $E$. The \emph{information quantities}, derived from the Shannon entropy, offer a natural way for describing all these frequencies. In fact they define the form of the distribution of information contained in the raw data. For instance, the individual entropies $H(X_i),i=1,...,n$ tell us the shape of the individual variables: if $H(X_i)$ is small (with respect to its capacity $\log_2 N_i$), then $X_i$ corresponds to a well defined characteristic of $Z$; to the contrary if $H(X_j)$ is close to the capacity, that is the value of the entropy of the uniform distribution, the function $X_j$ corresponds to a non-trivial partition of $Z$, and does not correspond to a well defined invariant.
At the second degree, we can consider the entropies $H(X_i,X_j)$ for every pair $(i,j)$, giving the same kind of structures as before, but for pairs of variables.
To get a better description of this second degree with respect to the first one, we can look at the \emph{mutual information} as defined by Shannon,
$I(X_i;X_j)=H(X_i)+H(X_j)-H(X_i,X_j)$. If it is small, near zero, the variables are not far from being independent, if it is maximal, i.e. not far from
the minimum of $H(X_i)$ and $H(X_j)$, this means that one of the variables is almost determined by the other.
In fact $I(X_i;X_j)$ can be taken as a measure of dependence, due to its universality and its invariance. Consider the graph with vertices $X_i,i=1,...,N$ and edges $(X_i,X_j), i\neq j, i,j=1,...,N$; by labeling each vertex with the entropy and each edge with the mutual information, we get
a sort of one-dimensional skeleton of the data $(Z,X)$.
The information of higher degrees define in an analogous manner the higher dimensional skeletons of the data $(Z,X)$ (see figure \ref{figure_Supp_Result_dopa_nondopa_infopath} for example).\\

Works of Clausius, Boltzmann, Gibbs and Helmholtz underlined the importance of entropy and free energy in Statistical Physics. In particular, Gibbs gave the general definition of the entropy for the distribution of microstates, cf. \cite{Gibbs1902}. Later Shannon recognized in this entropy the basis of Information theory in his celebrated work on the mathematical theory of communication \cite{Shannon1948} (equation \ref{higher entropy}), and then further developed their structure in the lattice of variables \cite{Shannon1953}. Note that this kind of lattice takes its roots in the work of Boole on Logic and Probability \cite{Boole1854}. Defining the communication channel, information transmission and its capacity, Shannon also introduced to degree two (pairwise) mutual information functions \cite{Shannon1948}.\\
\indent The expression and study of multivariate higher degree mutual-informations (equation \ref{higher information}) was achieved in  two seemingly independent works: 1) McGill (1954) \cite{McGill1954} (see also Fano (1961) \cite{Fano1961}) with a statistical approach, who called these functions "interaction information", and 2) Hu Kuo Ting (1962) \cite{Hu1962} with an algebraic approach who also first proved the possible negativity of mutual-informations for degrees higher than $2$. The study of these functions was then pursued by Te Sun Han \cite{Han1975,Han1978}. \\
\indent Higher-order mutual-informations were then rediscovered in several different contexts, notably by Matsuda in 2001 in the context of spin glasses, who showed that negativity is the signature of frustrated states \cite{Matsuda2001} and by Bell in the context of Neuroscience, Dependent Component Analysis and Generalised Belief Propagation on hypergraphs \cite{Bell2003}. Brenner and colleagues have observed and quantified an equivalent definition of negativity of the $3$ variables mutual information, noted $I_3$, in the spiking activity of neurons and called it synergy \cite{Brenner2000}. Anastassiou and colleagues unraveled $I_3$ negativity within gene expression, corresponding in that case to cooperativity in gene regulation \cite{Watkinson2009,Kim2010}. \\
\indent Another important family of information functions, named  "total correlation", which corresponds to the difference between the sum of the entropies and the entropy of the joint, was introduced by Watanabe in 1960 \cite{Watanabe1960}. These functions were also rediscovered several times, notably by Tononi and Edelman who called them "integrated information" \cite{Tononi1998} in the context of consciousness quantification, and by Studen\'{y} and  Vejnarova \cite{Studeny1999} who called them "multi-information" in the context of graphs and conditional independences. Closely related to what we present here with respect to both the kind of data analyzed and the conclusions, Margolin and colleagues \cite{Margolin2010} used these functions of Watanabe to quantify higher order statistical dependences within genetic expression.\\

\indent In our approach, for any data $(Z,S)$, the full picture can be represented by a collection of numerical functions on the faces of a simplex $\Delta([n])$ having vertices corresponding to the random variables $X_1,...,X_n$.	We decided to focus on two subfamilies of Information functions: the first is the collection of entropies of the joint variables, denoted $H_k,k=1,...,n$, giving the numbers $H_k(X_{i_1};...;X_{i_k})$, and the degree $k$ information of the joint variables, denoted $I_k,k=1,...,n$, and giving the numbers $I_k(X_{i_1};...;X_{i_k})$; see the following section for their definition and their elementary properties.
In particular, the value on each face of a given dimension of these functions gives interesting curves (histograms, see section on statistics \ref{The independence criterion}) for testing the departure from independence, and their means over all dimensions for testing the departure from uniformity of the variables. These functions are information co-chains of degree $k$ (in the sense of ref \cite{Baudot2015a}) and have nice probabilistic interpretations. By varying in all possible manners the ordering of the variables, i.e. by applying all the permutations $\sigma$ of $[n]=\{1,...,n\}$, we obtain $n!$ paths $H_k(\sigma)$, $I_k(\sigma)$, $k=1,...,n$. They constitute respectively the \emph{$H_k$-landscape} and the \emph{$I_k$-landscape} of the data.\\
\indent When the data correspond to uniform and independent variables, that is the uninteresting null hypothesis, each path is monotonic, the $H_k$ growing linearly and the $I_k$ being equal to zero for $k$ between $2$ and $n$. Any departure from this behavior (estimated for instance in Bayesian probability on the allowed parameters) gives a hint of the \emph{form of information} in the particular data.\\
Especially interesting are the maximal paths, where $I_k(\sigma)$ decreases, being strictly positive, or strictly negative after $k=3$. Other kinds of paths could also be interesting, for instance the paths with the maximal total variation as they can be oscillatory. In the examples provided here and in \cite{Tapia2018}, we proposed to stop the empirical exploration of the information paths to their first minima, a condition of vanishing of conditional mutual-informational (conditional independence). \\

\indent As a preliminary illustration of the potential interest of such functions for general Topological Data Analysis, we quantify the information structures for the empirical measures of the expression of several genes in two pre-identified populations of cells presented in \cite{Tapia2018}, and we consider here both cases where genes or cells are considered as variables for gene or cell unsupervised classification tasks respectively.\\
In practice, the cardinality $N_Z$ of $Z$ is rather small with respect to the number of free parameters of the possible probability laws on $E$, that is $N-1=N_1...N_n-1$, then the quantities $H_k, I_k$ for $k$ larger than a certain $k_{u}$ have in general no meaning, a phenomenon commonly called undersampling or curse of dimensionality. In the example, $n$ is $20$, but $k_{u}$ is $11$. Moreover, the permutations $\sigma$ of the variables values can be applied to test the estimation of the dependences quantified by the $I_k$ against the null hypothesis of randomly generated statistical dependences. In this approach describing the raw data for themselves, undersampling is not a serious limitation. However, it is better to test the stability of the shape of the landscapes by studying random subsets of $Z$. Moreover, the analytic properties of $H_k$ and $I_k$ considered as functions of $P$ in a given face of the simplex of probabilities $\Delta([n])$ ensure that, if $P_X$ tends to $\mathbb{P}$ in this face, the shape is preserved.\\

The originality of our method is the systematic consideration of the entropy and the information landscapes and paths that can be associated to all possible permutations of the basic variables, and the extraction of exceptional paths from them, in order to define the overall form of the distribution of information among the set of variables. This new perspective has its origin in the local (topos) homological theory introduced in \cite{Baudot2015a}. Moreover this method was successfully applied to a concrete problem of gene expression in \cite{Baudot2018,Tapia2018}.\\

In the present article, we first remind the definitions and basic properties of the entropy and information chains and functions. We give equivalent formulations of the Hu Kuo Ting theorem \cite{Hu1962}, which allows to express every partial mutual conditioned higher information of collections of joint variables from elementary higher entropies $H_k(X_I)$ or by elementary higher mutual information functions $I_k(X_I)$, i.e. the functions that form the entropy landscape and information landscape, respectively. \\
\indent Second we establish that these "pure" functions are analytically independent as functions of the probability laws, in the interior of the large simplex $\Delta([n])$. This follows from the fact we also prove here, that these functions constitute coordinates (up to a finite ambiguity) on $\Delta([n])$ in the special case of binary variables $X_i,i=1,...,n$. In addition, we demonstrate that, for every set of numbers $N_i,i=1,...,n$, the cancellation of the functions $I_k(X_I), k\geq 2, I\subset [n]=\{1,...,n\}$ is a necessary and sufficient condition of the set of variables $X_1,...,X_n$ to be statistically independent. We were not able to find these results in the literature. They generalize results of Te Sun Han \cite{Han1975,Han1978}.\\
\indent Then this article not only presents a method of analysis but it gives proofs of basic results on information quantities that, to our knowledge, were not available until now in the literature.\\
\indent Third we study the statistical properties of the entropy and information landscapes and paths, and present the computational aspects. The mentioned examples of genetic expression are developed. Finally in an appendix, we show how these functions appear in the theory of Free energies, in Statistical Physics and in Bayesian Variational Analysis.\\

\section{Results}

\subsection{Entropy and Information functions}

Given a probability law $P_X$ on a finite set $E=E_X$, Shannon defined the information content of this law by the Boltzmann-Gibbs entropy \cite{Shannon1948}: 
\begin{equation}
H(P_X)=- \sum_{x\in E} P_X(x)\log_2 P_X(x).
\end{equation}
Shannon himself gave an axiomatic justification of this choice, that was developed further by Khinchin, Kendall and other mathematicians, see \cite{Khinchin1957}.\\
The article \cite{Baudot2015a} (Baudot and Bennequin) presented such a set of axioms inspired by algebraic topology, see also \cite{Vigneaux2017} (Juan-Pablo Vigneaux). In all these approaches, the
fundamental ingredient is the decomposition of the entropy for the joint variable of two variables. To better formulate this decomposition, we have proposed to consider the entropy as a function
of three variables: first a finite set $E_X$, second a probability law $P$ on $E_X$ and third, a random variable on $E_X$, i.e. a surjective map $Y:E_X \rightarrow E_Y$, considered only through
the partition of $E_X$ that it gives, indexed by the elements $y$ of $E_Y$. In this case we say that $Y$ is less fine than $X$, and write $Y\leq X$, or $X\rightarrow Y$. Then we define the entropy of $Y$ for $P$ at $X$:
\begin{equation}
H_X(Y;P)=H(Y_*(P)),
\end{equation}
where $Y_*(P)$ is the image law, also named the \emph{marginal} of $P$ by $Y$:
\begin{equation}
Y_*(P)(y)=\sum_{x|Y(x)=y}P(x).
\end{equation}
\begin{remark} 
Frequently, when the context is clear, we simply write $H_X(Y;P)=H(Y;P)$ or even $H(Y)$, as everybody does, however the "homological nature" of $H$ can only be understood with the index $X$, because it is here that the topos theory appears, see \cite{Baudot2015a,Vigneaux2017}.
\end{remark}
The second fundamental operation on probabilities (after marginalization) is the \emph{conditioning}: given $y\in E_Y$, such that $Y_*(P)(y)\neq 0$, the conditional probability $P|(Y=y)$
on $E_X$ is defined by the following rules:
\begin{align*}
\forall x| Y(x)=y,\quad &P|(Y=y)(x)=P(x)/Y_*(P)(y),\\
\forall x| Y(x)\neq y,\quad &P|(Y=y)(x)=0
\end{align*}
This allows to define the conditional entropy, as Shannon has done, for any $Z$ and $Y$ both less fine that $X$,
\begin{equation}
Y.H(Z;P)=\sum_{y\in E_Y} H(Z;P|(Y=y))Y_*(P)(y).
\end{equation}
Note that if $P|(Y=y)$ is not well defined, we can simply forget it in the formula, because it appears multiplied by zero.\\
This operation is associative (see \cite{Baudot2015a,Vigneaux2017}), i.e. for any triple $W,Y,Z$ of variables less fine than $X$,
\begin{equation}
(W,Y).H(Z;P)=W.(Y.H)(Z;P).
\end{equation}
With these notations, the fundamental functional equation of Information Theory, or its first axiom, according to Shannon, is
\begin{equation}\label{fundeqinfo}
H((Y,Z);P)=H(Y;P)+Y.H(Z;P).
\end{equation}
\begin{remark} 
In \cite{Baudot2015a,Vigneaux2017} it is shown that this equation can be understood as a co-cycle equation of degree one of a module in a topos, in the sense of Grothendieck and Verdier \cite{Artin1964}, and why the entropy is generically the only universal generator of the first co-homology functor.
\end{remark}

More generally, we consider a collection of sets $E_X,X\in\mathcal{C}$, such that each time $Y,Z$ are less fine than $X$ and belong to $\mathcal{C}$, then $(Y,Z)$ also belongs to $\mathcal{C}$; in this case we name $\mathcal{C}$ an \emph{information category}. An example is given by the joint variables $X=(X_{i_1},..., X_{i_m})$ of $n$ basic variables $X_1,...,X_n$ with values in finite sets $E_1,...,E_n$; the set $E_X$ being the product $E_{i_1}\times...\times E_{i_m}$.\\
Then for every natural integer $k\geq 1$, we can consider families indexed by $X$ of (measurable) functions of the probability $P_X$ that are indexed by several variables $Y_1,...,Y_k$
less fine than $X$
\begin{equation}	
P_X\mapsto F_X(Y_1;...;Y_k;P_X)
\end{equation}
satisfying the compatibility equations;
\begin{equation}
\forall X',X\leq X', \forall P_{X'},\quad  F_X(Y_1;...;Y_k;X_*(P_{X'}))=F_{X'}(Y_1;...;Y_k;P_{X'}).
\end{equation}
We call these functions the \emph{co-chains} of degree $k$ of $\emph{C}$ for the probability laws. An equivalent axiom is that $F_X(Y_1;...;Y_k;P_X)$ only depends on the image of $P_X$ by the joint variable $(Y_1,...,Y_k)$. We call this property \emph{locality} of the family $F=(F_X, X\in\mathcal{C})$.\\

The action by conditioning extends verbally to the co-chains of any degree:\\
if $Y$  is less fine than $X$,
\begin{equation}
Y.F_X(Y_1;...;Y_k;P)=\sum_{y\in E_Y} F_X(Y_1;...;Y_k;P|(Y=y))Y_*(P)(y).
\end{equation}
It satisfies again the associativity condition.\\
Higher mutual information quantities were defined by Hu Kuo Ting \cite{Hu1962} and McGill \cite{McGill1954}, generalizing the Shannon mutual information \cite{Baudot2015a,Tapia2018}:\\
in our terms, for $k$ random variables $X_1,...,X_k$ less fine than $X$ and one probability law $P$ on the set $E_X$,
\begin{equation}
H_k(X_1;...;X_k;P)=H((X_1,...,X_k);P).
\end{equation}
And more generally, for $j\leq k$, we define
\begin{equation} \label{higher entropy}
H_j(X_1;...;X_k;P)=\sum_{I\subset [k];card(I)=j}H(X_I;P),
\end{equation}
where $X_I$ denotes the joint variable of the $X_i$ such that $i\in I$.\\
We name these functions of $P$ the \emph{joint entropies}. \\
Then the higher information functions are defined by
\begin{equation}\label{higher information}
I_k(X_1;...;X_k;P)=\sum_{j=1}^{j=k}(-1)^{j-1}H_j(X_1;...;X_k;P),
\end{equation}
In particular we have $I_1=H$, the usual entropy.\\
\noindent Reciprocally the functions $I_k$ decompose the entropy of the finest joint partition:
\begin{equation}
H(X_1;X_2;...;X_n;\mathbb{P})= \sum_{k=1}^{k=n}(-1)^{k-1}\sum_{I\subset [n];card(I)=k}
I_{k}(X_{i_1};X_{i_2};...;X_{i_k};\mathbb{P})
\end{equation}

\noindent The following result is immediate from the definitions, and the fact that $H_X,X\in \mathcal{C}$ is local:\\
\begin{propos}
The joint entropies $H_k$ and the higher information quantities $I_k$ are information co-chains, i.e. they are local functions of $P$.\\
\end{propos}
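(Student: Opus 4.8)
The plan is to reduce everything to the single observation that the basic Shannon entropy $H_X(Y;P)=H(Y_*(P))$ is local, together with the fact that locality (equivalently, the compatibility equation) is preserved under composition with coarser variables and under finite linear combinations. Since both $H_k$ and $I_k$ are built from the elementary entropies $H(X_I;P)$ by formulas (\ref{higher entropy}) and (\ref{higher information}), once each such building block is known to be a co-chain, the conclusion follows by linearity.

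First I would record the locality of the elementary entropy. By definition $H(X_I;P)=H((X_I)_*(P))$, so this quantity depends on $P$ only through the marginal law $(X_I)_*(P)$. Since each $X_I$ for $I\subset[k]$ is less fine than the joint variable $(X_1,\ldots,X_k)$, the projection $X_I$ factors through $(X_1,\ldots,X_k)$, and functoriality of the pushforward gives $(X_I)_*(P)=(X_I)_*\circ(X_1,\ldots,X_k)_*(P)$. Hence each term $H(X_I;P)$ depends on $P$ only through the image $(X_1,\ldots,X_k)_*(P)$; this is exactly the locality property required of a co-chain of degree $k$ attached to the variables $X_1,\ldots,X_k$.

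Next I would assemble the pieces. The joint entropy $H_j(X_1;\ldots;X_k;P)=\sum_{I\subset[k],\,card(I)=j}H(X_I;P)$ is a finite sum of terms each of which, by the previous step, depends on $P$ only through $(X_1,\ldots,X_k)_*(P)$; a finite sum of such functions again depends on $P$ only through this marginal, so $H_j$ is local, hence a co-chain. The higher information $I_k=\sum_{j=1}^{k}(-1)^{j-1}H_j$ is a finite signed combination of the $H_j$, and since the co-chains of a fixed degree form a vector space (locality is stable under linear combinations), $I_k$ is local as well. Equivalently, one checks the compatibility equation directly: for $X\leq X'$ the associativity of marginalization yields $(X_I)_*(X_*(P_{X'}))=(X_I)_*(P_{X'})$, so every term, and thus every sum, is unchanged when $P_{X'}$ is replaced by its marginal $X_*(P_{X'})$.

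I do not expect any genuine obstacle: the only point requiring care is the factorization of marginals through the finest joint variable, i.e. the functoriality and associativity of the pushforward already invoked to define conditioning above. Once that is in place the result is, as stated, immediate from the definitions.
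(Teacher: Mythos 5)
Your argument is correct and is precisely the one the paper intends: the paper dismisses the proposition as ``immediate from the definitions, and the fact that $H_X$ is local,'' and your write-up simply makes that explicit by factoring each marginal $(X_I)_*(P)$ through $(X_1,\ldots,X_k)_*(P)$ and noting that locality is preserved under finite linear combinations. No discrepancy with the paper's approach.
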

\begin{remark} 
From the computational point of view, locality is important, because it means that only the less fine marginal probability has to be taken into account.
\end{remark}

The definition of $H_j,j\leq k$ and $I_k$ makes evident that they are symmetric functions, i.e. they are invariant by every permutation of the letters $X_1,...,X_k$.\\
The particular case $I_2(S;T)=H(S)+H(T)-H(S,T)$ is the usual mutual information defined by Shannon.\\
Using the concavity of the logarithm, it is easy to show that $I_1$ and $I_2$ have only positive values, but this ceases to be true for $I_k$ as soon as $k\geq 3$ \cite{Hu1962,Matsuda2001}.\\

Hu kuo Ting defined in \cite{Hu1962} other information quantities, by the following formulas:
\begin{equation}
I_{k,l}(Y_1;...;Y_k;P_X|Z_1,...,Z_l)=(Z_1,...,Z_l).I_k(Y_1;...;Y_k;P_X).
\end{equation}
For instance, considering a family of basic variables $X_i,i=1,...,n$,
\begin{equation}
I_{k,l}(X_{I_1};...;X_{I_k};(\mathbb{P}|X_J))=X_J.I_k(X_{I_1};...;X_{I_k};\mathbb{P}),
\end{equation}
for the joint variables $X_{I_1},...,X_{I_k}, X_J$, where  $I_1,...,I_k,J \subset [n]$.\\

\noindent The following remarkable result is due to Hu Kuo Ting \cite{Hu1962}:\\
\begin{thm} \label{thm Hu}
Let $X_1,...,X_n$ be any set of random variables and $\mathbb{P}$ a given probability on the product $E_X$ of the respective images $E_1,...,E_n$, then there exist finite sets $\Sigma_1,...,\Sigma_n$ and a numerical function $\varphi$ from the union $\Sigma$ of these sets to $\mathbb{R}$, such that for any collection of subsets $I_m;m=1,...,k$ of $\{1,...,n\}$, and any subset $J$ of $\{1,...,n\}$ of cardinality $l$, the following identity holds true
\begin{equation}
I_{k,l}(X_{I_1};...;X_{I_k};(\mathbb{P}|X_J))=\varphi(\Sigma_{I_1}\cap...\cap
\Sigma_{I_k}\backslash \Sigma_J),
\end{equation}
where we have denoted $X_I=(X_{i_1},...,X_{i_l})$ and $\Sigma_I=\Sigma_{i_1}\cup...\cup \Sigma_{i_l}$ for $I=\{i_1,...,i_l\}$, and where $\Omega\backslash \Sigma_J$ denotes the set of points in $\Omega$ that do not belong to $\Sigma_J$, i.e. the set $\Omega\cap (\Sigma\backslash \Sigma_J)$, named subtraction of $Y=\Sigma_J$ from $\Omega$.
\end{thm}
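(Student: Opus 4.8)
The plan is to realise the entire family of information quantities as the values of a single signed point-measure on an abstract finite set, engineered so that taking joints corresponds to set union, higher mutual informations to set intersection, and conditionings to set subtraction. Concretely, I would take $\Sigma$ to be the collection of all nonempty subsets of $[n]$ and set $\Sigma_i=\{S\in\Sigma:i\in S\}$, so that $\Sigma_I=\bigcup_{i\in I}\Sigma_i=\{S:S\cap I\neq\emptyset\}$ and $\bigcap_{i\in K}\Sigma_i=\{S:K\subseteq S\}$. For a point function $\varphi:\Sigma\to\mathbb{R}$ I write $\varphi(A)=\sum_{S\in A}\varphi(S)$, so $\varphi$ is additive on subsets of $\Sigma$. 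I would then \emph{define} $\varphi$ by imposing $\varphi(\bigcap_{i\in K}\Sigma_i)=I_{|K|}(X_K;\mathbb{P})$ for every nonempty $K\subseteq[n]$; since the system $\sum_{S\supseteq K}\varphi(S)=I_{|K|}(X_K)$ is triangular, Möbius inversion over the subset lattice determines the point values $\varphi(S)$ uniquely. Consistency with the union side, $\varphi(\Sigma_I)=H(X_I;\mathbb{P})$, is then nothing but the inversion formula $H(X_I)=\sum_{\emptyset\neq K\subseteq I}(-1)^{|K|-1}I_{|K|}(X_K)$ already recorded above, so no extra relation is forced.

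For the \emph{unconditioned case} I would prove $I_k(X_{I_1};\dots;X_{I_k};\mathbb{P})=\varphi(B_1\cap\dots\cap B_k)$ with $B_m=\Sigma_{I_m}$, for arbitrary and possibly overlapping subsets $I_1,\dots,I_k$. The crux is a union/intersection duality that I would establish at the level of indicator functions on $\Sigma$: expanding $\prod_m\mathbf{1}_{B_m}=\prod_m(1-\mathbf{1}_{B_m^c})$ and collecting terms yields $\mathbf{1}_{B_1\cap\dots\cap B_k}=\sum_{\emptyset\neq M\subseteq[k]}(-1)^{|M|-1}\mathbf{1}_{\bigcup_{m\in M}B_m}$. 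Summing this against $\varphi$ and using $\bigcup_{m\in M}B_m=\Sigma_{\bigcup_{m\in M}I_m}$ together with $\varphi(\Sigma_{\bigcup_{m\in M}I_m})=H(X_{\bigcup_{m\in M}I_m})=H((X_{I_m})_{m\in M})$ turns the right-hand side into $\sum_{j=1}^{k}(-1)^{j-1}\sum_{|M|=j}H_j$, which is exactly the definition of $I_k(X_{I_1};\dots;X_{I_k})$. Overlaps among the $I_m$ cause no trouble, because a joint variable corresponds to a union and unions absorb repetitions.

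For the \emph{conditioning step} I would reduce the general quantity to a set subtraction by means of the recursion $X_J.I_k(X_{I_1};\dots;X_{I_k})=I_k(X_{I_1};\dots;X_{I_k})-I_{k+1}(X_{I_1};\dots;X_{I_k};X_J)$, which follows from the chain rule $X_J.H(\,\cdot\,)=H(\,\cdot\,,X_J)-H(X_J)$ applied term by term in the definition of $I_k$. By the unconditioned case the two informations on the right equal $\varphi(B_1\cap\dots\cap B_k)$ and $\varphi(B_1\cap\dots\cap B_k\cap\Sigma_J)$, so by additivity their difference is $\varphi\big((B_1\cap\dots\cap B_k)\setminus\Sigma_J\big)$. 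Since $I_{k,l}(X_{I_1};\dots;X_{I_k};\mathbb{P}|X_J)=X_J.I_k(\dots)$ by definition, this is precisely the asserted identity $\varphi(\Sigma_{I_1}\cap\dots\cap\Sigma_{I_k}\setminus\Sigma_J)$.

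The step I expect to be most delicate is not any single computation but the \emph{universality}: one must verify that the single measure $\varphi$, built once and for all from the basic numbers $I_{|K|}(X_K)$ with $K\subseteq[n]$, simultaneously reproduces every $I_{k,l}$ for all families of (overlapping) subsets $I_1,\dots,I_k$ and all conditioning sets $J$. This is exactly what the two bookkeeping identities above secure — the indicator-function duality between intersections and unions, and the conditioning recursion mirroring $\varphi(A\setminus C)=\varphi(A)-\varphi(A\cap C)$ — so these are the load-bearing parts of the argument, while the remaining manipulations are routine inclusion–exclusion.
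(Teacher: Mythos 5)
Your proof is correct. Note first that the paper itself does not prove Theorem \ref{thm Hu}: it is stated with a citation to Hu Kuo Ting and the surrounding text only derives consequences from it, so there is no internal proof to compare against. What you have written is the standard construction of the signed ``$I$-measure'': atoms indexed by the nonempty subsets $S\subseteq[n]$, the point values $\varphi(S)$ obtained by M\"obius inversion from the constraints $\sum_{S\supseteq K}\varphi(S)=I_{|K|}(X_K)$, the union/intersection duality via inclusion--exclusion of indicators, and the reduction of conditioning to set subtraction through the recursion $X_J.I_k=I_k-I_{k+1}(\cdot;X_J)$, which is exactly the paper's identity \eqref{recinfok}. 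Each of the three load-bearing steps checks out: the consistency $\varphi(\Sigma_I)=H(X_I)$ follows from $\sum_{\emptyset\neq K\subseteq I\cap S}(-1)^{|K|-1}=\mathbf{1}_{S\cap I\neq\emptyset}$; the unconditioned identity reproduces the alternating-sum definition \eqref{higher information} because $\bigcup_{m\in M}\Sigma_{I_m}=\Sigma_{\bigcup_{m\in M}I_m}$ and entropy depends only on the generated partition, so overlapping $I_m$ are harmless; and the conditioning step is exactly $\varphi(A)-\varphi(A\cap\Sigma_J)=\varphi(A\setminus\Sigma_J)$. It is worth observing that your $\varphi(\{J\})$ coincides with the paper's $\eta_J(P)=X_{[n]\setminus J}.I_{|J|}(X_J;P)$ of Definition 2.1, and your defining relation $\sum_{S\supseteq K}\varphi(S)=I_{|K|}(X_K)$ is precisely Proposition 2.5, so your construction supplies the proof that the paper's $\eta$-apparatus implicitly presupposes. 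One cosmetic remark: the theorem asks for a function on the points of $\Sigma$, so you should state explicitly (as you do) that $\varphi$ applied to a subset means the sum of its point values, i.e.\ the additive extension; with that convention in place the argument is complete.
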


\noindent The Hu Kuo Ting theorem says that for a given joint probability law $\mathbb{P}$, and from the point of view of the information quantities $I_{k,l}$, the joint operation of variables corresponds to the union of sets, the graduation $k$ corresponds to the intersection, and the conditioning by a variable corresponds to the difference of sets. This can be precisely formulated as follows:

\begin{corol}\label{corol1}
Let $X_1,...,X_n$ be any set of random variables on the product $E_X$ of the respective goals $E_1,...,E_n$, then for any probability $\mathbb{P}$ on $E_X$, every universal identity between disjoint sums of subsets of a finite set that are obtained, starting with $n$ subsets $\Sigma_1,...,\Sigma_n$, by 1) forming collections of reunions, 2) taking successive intersections of these
unions, and 3) subtracting by one of them, gives an identity between sums of information quantities, by replacing the union by the joint variables $(.,.)$, the intersections by the juxtaposition $(.;.;.)$
and the subtraction by the conditioning.
\end{corol}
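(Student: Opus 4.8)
The plan is to deduce the corollary directly from Theorem~\ref{thm Hu}; the single extra ingredient needed is the additivity of the function $\varphi$ over disjoint unions. First I would make the meaning of $\varphi$ on subsets explicit. Although Theorem~\ref{thm Hu} produces $\varphi$ as a numerical function on the points of $\Sigma=\Sigma_1\cup\dots\cup\Sigma_n$, the notation $\varphi(A)$ for a subset $A\subset\Sigma$ is to be read as $\varphi(A)=\sum_{\omega\in A}\varphi(\omega)$. With this reading $\varphi$ is a signed measure on $\Sigma$, so that for disjoint subsets $A,B\subset\Sigma$ one has $\varphi(A\sqcup B)=\varphi(A)+\varphi(B)$, and more generally $\varphi$ distributes over every finite disjoint sum.

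Next I would record the dictionary furnished by the theorem. Each set produced by the three admissible operations --- forming reunions $\Sigma_I=\bigcup_{i\in I}\Sigma_i$, intersecting such reunions, and subtracting one further reunion $\Sigma_J$ --- has the canonical shape $\Sigma_{I_1}\cap\dots\cap\Sigma_{I_k}\setminus\Sigma_J$ occurring in Theorem~\ref{thm Hu}, and for each of them the theorem asserts
\[
\varphi\bigl(\Sigma_{I_1}\cap\dots\cap\Sigma_{I_k}\setminus\Sigma_J\bigr)=I_{k,l}\bigl(X_{I_1};\dots;X_{I_k};(\mathbb{P}\,|\,X_J)\bigr).
\]
This is exactly the stated correspondence within one canonical region: reunion $\leftrightarrow$ joint variable $(\cdot,\cdot)$, intersection $\leftrightarrow$ graduation/juxtaposition $(\cdot;\dots;\cdot)$, and subtraction $\leftrightarrow$ conditioning.

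Given these two facts the argument is immediate. Let an identity between disjoint sums of such canonical sets be universally valid, i.e. valid for every choice of finite subsets $\Sigma_1,\dots,\Sigma_n$; then in particular it holds for the specific sets attached by Theorem~\ref{thm Hu} to the data $(X_1,\dots,X_n;\mathbb{P})$. I would apply $\varphi$ to both sides. Since each side is a disjoint sum, additivity of $\varphi$ distributes it term by term, turning each side into an ordinary sum of the numbers $\varphi(\text{canonical set})$; replacing each such number by the information quantity it equals, via the displayed dictionary, converts the set identity into the asserted identity between sums of information quantities.

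The one point requiring care --- and the place where the hypotheses are essential --- is disjointness. Additivity of $\varphi$ fails for overlapping sets (the $\varphi$-mass of the overlap would be counted twice), so restricting to \emph{disjoint} sums is precisely what makes the term-by-term translation exact; this is also why the elementary information identities, such as the chain rule $H((X_1,X_2);P)=H(X_1;P)+X_1.H(X_2;P)$ arising from the partition $\Sigma_1\cup\Sigma_2=\Sigma_1\sqcup(\Sigma_2\setminus\Sigma_1)$, come from genuine partitions of the corresponding Venn regions rather than from arbitrary set equalities. A secondary, routine verification is that every constituent set does reduce to the canonical form $\Sigma_{I_1}\cap\dots\cap\Sigma_{I_k}\setminus\Sigma_J$ to which Theorem~\ref{thm Hu} applies, which is guaranteed by the prescribed order of operations (reunions, then intersections, then one subtraction) in the statement.
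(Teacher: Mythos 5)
Your proof is correct and follows exactly the route the paper intends: the paper states this corollary without a separate proof, treating it as an immediate consequence of Theorem~\ref{thm Hu}, and the two ingredients you supply --- reading $\varphi(A)$ as $\sum_{\omega\in A}\varphi(\omega)$ so that $\varphi$ is additive over disjoint unions, and specializing the universal set identity to the sets furnished by the theorem --- are precisely what that implicit argument requires. Your remark on why disjointness is essential also matches the paper's own later warning (the failure of $I_2(X;(Y,Z))=I_2(X;Y)+I_2(X;Z)$ despite $A\cap(B\cup C)=(A\cap B)\cup(A\cap C)$, where the union on the right is not disjoint).
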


\begin{remark} 
Conversely the corollary implies the Theorem.
\end{remark}

This corollary is the source of many identities between the information quantities.\\
For instance, the fundamental equation \eqref{fundeqinfo} corresponds to the fact that
the union of two sets $A,B$ is the disjoint union of one of them, say $A$ and of the difference of the
other with this one, say $B\backslash A$.\\
\noindent The following formula follows from \eqref{fundeqinfo}:
\begin{multline}\label{fundeqinfomult}
H_{k+1}(X_0;X_1;...;X_k;\mathbb{P})=H_k((X_0,X_1);X_2;...;X_k;\mathbb{P})\\=H_k(X_1;...;X_k;\mathbb{P})+X_0.H_k(X_1;...;X_k;\mathbb{P}).
\end{multline}

\noindent The two following identities are also easy consequences of the Corollary $1$; they are important for the method of data analysis presented in this article:\\

\begin{propos} 
Let $k$ be any integer
\begin{multline}\label{nonlocfundeqinfo}
I_k((X_0,X_1);X_2;...;X_k;\mathbb{P})=I_k(X_0;X_2;...;X_k;\mathbb{P})+X_0.I_k(X_1;X_2;...;X_k;\mathbb{P})
\end{multline}
\end{propos}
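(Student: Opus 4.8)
The plan is to deduce this identity from Corollary~\ref{corol1}, which turns any universal identity between \emph{disjoint} sums of subsets of a finite set, built from union, intersection and set-difference, into an identity between information quantities under the dictionary sending the joint variable $(\cdot,\cdot)$ to union $\cup$, the juxtaposition ``$;$'' separating the arguments to intersection $\cap$, and the conditioning $X_0.(\cdot)$ to subtraction of $\Sigma_0$. So the first step is to read off, for each of the three terms, the subset of $\Sigma$ that represents it. Writing $A=\Sigma_2\cap\cdots\cap\Sigma_k$ for the common tail shared by all three terms, the left-hand side $I_k((X_0,X_1);X_2;\ldots;X_k;\mathbb{P})$ corresponds to $(\Sigma_0\cup\Sigma_1)\cap A$, the first right-hand term $I_k(X_0;X_2;\ldots;X_k;\mathbb{P})$ to $\Sigma_0\cap A$, and the second right-hand term $X_0.I_k(X_1;X_2;\ldots;X_k;\mathbb{P})$ to $(\Sigma_1\cap A)\setminus\Sigma_0$.

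It then suffices to establish the purely set-theoretic identity
\begin{equation*}
(\Sigma_0\cup\Sigma_1)\cap A=(\Sigma_0\cap A)\sqcup\bigl((\Sigma_1\cap A)\setminus\Sigma_0\bigr),
\end{equation*}
where $\sqcup$ records that the union on the right is disjoint, which is exactly the hypothesis needed for Corollary~\ref{corol1} to apply. I would verify it by first distributing, $(\Sigma_0\cup\Sigma_1)\cap A=(\Sigma_0\cap A)\cup(\Sigma_1\cap A)$, and then invoking the elementary fact $U\cup V=U\sqcup(V\setminus U)$ with $U=\Sigma_0\cap A$ and $V=\Sigma_1\cap A$. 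The one point to check with care is that the difference produced this way agrees with the one demanded by the dictionary, namely $(\Sigma_1\cap A)\setminus(\Sigma_0\cap A)=(\Sigma_1\cap A)\setminus\Sigma_0$; this holds because every element of $\Sigma_1\cap A$ already lies in $A$, so lying outside $\Sigma_0\cap A$ is equivalent to lying outside $\Sigma_0$. This is the very decomposition ``$A\cup B$ is the disjoint union of $A$ and $B\setminus A$'' that underlies the fundamental equation~\eqref{fundeqinfo}, now applied one level deeper after everything has been intersected with the tail $A$.

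Granting the set identity, Corollary~\ref{corol1} immediately returns the asserted equality of information quantities, completing the argument. I expect the main, and essentially only, obstacle to be the bookkeeping of the previous paragraph: confirming that the conditioning $X_0.(\cdot)$ corresponds to subtracting $\Sigma_0$ \emph{after} the intersection with $A$ has been formed, and that the resulting union is genuinely disjoint so the ``disjoint sums'' hypothesis of Corollary~\ref{corol1} is met. As an independent cross-check, and as a self-contained route avoiding the full Hu Kuo Ting machinery, one can expand both sides through the alternating-sum definition of $I_k$ in terms of the joint entropies $H_j$ and reduce term by term using the entropy chain rule~\eqref{fundeqinfomult} and the linearity of the conditioning operator $X_0.(\cdot)$; performing this for $k=2$ already recovers $I_2((X_0,X_1);X_2)=I_2(X_0;X_2)+X_0.I_2(X_1;X_2)$ and fixes the sign conventions.
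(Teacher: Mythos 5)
Your argument is correct and is exactly the route the paper intends: the paper simply asserts this identity as an "easy consequence of Corollary~\ref{corol1}", and you have supplied the missing details — the correct translation of each term under the Hu Kuo Ting dictionary, the set identity $(\Sigma_0\cup\Sigma_1)\cap A=(\Sigma_0\cap A)\sqcup\bigl((\Sigma_1\cap A)\setminus\Sigma_0\bigr)$, and the crucial verification that the union is disjoint so that the corollary's hypothesis is met. Your observation that this is the decomposition $U\cup V=U\sqcup(V\setminus U)$ underlying equation~\eqref{fundeqinfo}, intersected with the tail $A$, matches the paper's own remark on where that fundamental equation comes from.
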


\begin{propos} 
Let $k$ be any integer
\begin{equation}\label{recinfok}
I_{k+1}(X_0;X_1;...;X_k;\mathbb{P})=I_k(X_1;X_2;...;X_k;\mathbb{P})-X_0.I_k(X_1;X_2;...;X_k;\mathbb{P}).
\end{equation}
\end{propos}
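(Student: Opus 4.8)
The plan is to read off \eqref{recinfok} from the Hu Kuo Ting correspondence, Corollary \ref{corol1}, exactly as the surrounding text advertises. First I would fix once and for all the finite set $\Sigma$, the subsets $\Sigma_0,\Sigma_1,\dots,\Sigma_k$ and the additive set-function $\varphi$ produced by Theorem \ref{thm Hu} for the law $\mathbb{P}$. Writing $A=\Sigma_1\cap\cdots\cap\Sigma_k$ for the intersection attached to the variables $X_1,\dots,X_k$, the dictionary of the theorem (joint $\leftrightarrow$ union, juxtaposition $\leftrightarrow$ intersection, conditioning $\leftrightarrow$ subtraction) renders the three terms of \eqref{recinfok} as
\begin{align*}
I_{k+1}(X_0;X_1;\dots;X_k;\mathbb{P}) &= \varphi(\Sigma_0\cap A),\\
I_k(X_1;\dots;X_k;\mathbb{P}) &= \varphi(A),\\
X_0.I_k(X_1;\dots;X_k;\mathbb{P}) &= \varphi(A\backslash\Sigma_0).
\end{align*}

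Consequently \eqref{recinfok} is equivalent to the purely set-theoretic statement $\varphi(\Sigma_0\cap A)=\varphi(A)-\varphi(A\backslash\Sigma_0)$. To place this inside the exact scope of Corollary \ref{corol1}, which speaks of universal identities between \emph{disjoint} sums, I would record it as the tautological disjoint decomposition $A=(A\cap\Sigma_0)\sqcup(A\backslash\Sigma_0)$, valid for any sets whatsoever; finite additivity of $\varphi$ then gives $\varphi(A)=\varphi(A\cap\Sigma_0)+\varphi(A\backslash\Sigma_0)$, and rearranging yields the claim. Because the decomposition is universal, Corollary \ref{corol1} delivers \eqref{recinfok} for every family of variables and every $\mathbb{P}$ at once. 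This is essentially the whole proof; the only point needing care is to invoke the conditioning-to-subtraction half of the dictionary correctly, so that $X_0.I_k$ is matched with $A\backslash\Sigma_0$ and not with some other Boolean combination.

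For a self-contained confirmation not relying on Theorem \ref{thm Hu}, I would also derive \eqref{recinfok} straight from the definition \eqref{higher information}. First expand $I_{k+1}(X_0;\dots;X_k)=\sum_{\emptyset\neq I\subseteq\{0,\dots,k\}}(-1)^{|I|-1}H(X_I)$ and split the index sets $I$ according to whether $0\in I$. The subsets avoiding $0$ reassemble verbatim into $I_k(X_1;\dots;X_k)$, while the subsets $I=J\cup\{0\}$ contribute $\sum_{J\subseteq\{1,\dots,k\}}(-1)^{|J|}H(X_{J\cup\{0\}})$. Substituting the fundamental equation \eqref{fundeqinfo} in the form $H(X_{J\cup\{0\}})=H(X_0)+X_0.H(X_J)$ and using the binomial cancellation $\sum_{J\subseteq\{1,\dots,k\}}(-1)^{|J|}=0$ (which kills all the spurious $H(X_0)$ terms) collapses this second block to $-X_0.I_k(X_1;\dots;X_k)$, giving \eqref{recinfok}; the base case $k=1$ is just the rewriting $I_2(X_0;X_1)=H(X_1)-X_0.H(X_1)$ of Shannon mutual information. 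In this second route the only real obstacle is the sign and boundary-term bookkeeping, namely correctly handling the degenerate $J=\emptyset$ term and the shift between $(-1)^{|I|-1}$ and $(-1)^{|J|}$, whereas the Corollary \ref{corol1} route reduces everything to the one-line disjoint-union identity.
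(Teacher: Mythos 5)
Your primary argument is exactly the one the paper intends: it states the proposition as an "easy consequence of Corollary \ref{corol1}", and your reduction to the universal disjoint decomposition $A=(A\cap\Sigma_0)\sqcup(A\backslash\Sigma_0)$ with the dictionary applied correctly (juxtaposition $\to$ intersection, conditioning $\to$ subtraction) is precisely that consequence. Your second, direct verification from \eqref{higher information} and \eqref{fundeqinfo} is also correct and is a useful independent check, but it is not needed to match the paper's route.
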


\begin{remark} Be careful that some universal formulas between sets do not give identities between information functions; for instance $A\cap (B\cup C)=(A\cap B)\cup (A\cap C)$ but in general we have
\begin{equation}\label{falseformula}
I_2(X;(Y,Z))\neq I_2(X;Y)+I_2(X;Z),
\end{equation}
What is true is the following identity:
\begin{equation}\label{topococyle}
I_2(X;(Y,Z))+I_2(Y;Z)=I_2((X,Y);Z)+I_2(X;Y),
\end{equation}
This corresponds to the following universal formula between sets
\begin{equation}
(A\cap (B\cup C))\cup (B\cap C)=(A\cap B)\cup ((A\cup B)\cap C).
\end{equation}
The formula \eqref{topococyle} follows directly from the definition of $I_2$, by developing the four terms of the equation. It expresses the fact that $I_2$ is a simplicial co-cycle, being the simplicial co-boundary of $H$ itself.\\
However, although this formula between sets is true, it is not of the form authorized by the corollary $1$.\\
Consequently, some identities of sets that are not contained in the Theorem \ref{thm Hu} correspond to information identities, but, as we saw just before with the false formula \eqref{falseformula}, not all identities of sets correspond to information identities.
\end{remark}

As we already said, the set of joint variables $X_I$, for all the subsets $I$ of $[n]=\{1,...,n\}$, is an information category, the set $\mathcal{C}$ being the $n-1$-simplex $\Delta([n])$ of vertices $X_1,...,X_n$. In what follows we do not consider more general information categories.\\
\indent We can paraphrase the Theorem \ref{thm Hu}, by a combinatorial Theorem on the simplex $\Delta([n])$:
\begin{defin}
Let $X_1,...,X_n$ be a set of random variables with respective goals $E_1,...,E_n$, and let $X_I=\{X_{i_1},...,X_{i_k}\}$ be a face of $\Delta([n])$, we define, for a probability $P$ on the product $E$ of all the $E_i,i=1,...,n$,
\begin{equation}
\eta_I(P)=\eta(X_{i_1};...;X_{i_k};P)=X_{[n]\backslash I}.I_k(X_{i_1};...;X_{i_k};P),
\end{equation}
\end{defin}
\begin{remark} 
With the exception $J=[n]$, the function $\eta_J$ is not an information co-chain of degree $k$. But it is useful in the demonstrations of some of the following results.
\end{remark}

Embed $\Delta([n])$ in the hyperplane $x_1+...+x_n=1$ as the standard simplex in $\mathbb{R}^{n}$ (the intersection of the above hyperplane with the positive cone, where $\forall i=1,...,n, x_i\geq 0$), and consider the balls $\Sigma_1,...,\Sigma_n$ of radius $R$ strictly larger than $\sqrt{(n-1)/n}$ that are centered on the vertices $X_j;j=1,...,n$; they have all possible non-empty intersections convex. The subsets
$\Sigma'_I=\Sigma_I\setminus  \Sigma_{[n]\backslash I}$ are the connected components of complementary set of the unions of the boundary spheres $\partial\Sigma_1,...,\partial\Sigma_n$ in the total
union $\Sigma$ of the balls $\Sigma_1,...,\Sigma_n$.\\

\begin{propos}\label{info_sums}
For every $k+1$ subsets $I_1,..,I_k,K$ of $[n]$, if $l$ denotes the cardinality of $K$, the information function $I_{k,l}(X_{I_1};...;X_{I_k};P|X_K)$ is equal to the sum of the functions $\eta_J(P)$, where $J$ describes all the faces such that $\Sigma'_J$ is one of the connected components of the set $(\Sigma_{I_1}\cap...\cap \Sigma_{I_k})\backslash \Sigma_K$.\\
\end{propos}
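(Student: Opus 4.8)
The plan is to read the statement straight off the Hu Kuo Ting dictionary (Theorem \ref{thm Hu} and Corollary \ref{corol1}), the only extra ingredient being the decomposition of the region into the atoms of the Boolean algebra generated by $\Sigma_1,\dots,\Sigma_n$ together with the additivity of $\varphi$. First I would apply Theorem \ref{thm Hu} to the left-hand side: writing $\Sigma_{I_m}=\bigcup_{i\in I_m}\Sigma_i$ and $\Sigma_K=\bigcup_{j\in K}\Sigma_j$, the theorem gives
\[
I_{k,l}(X_{I_1};\dots;X_{I_k};P|X_K)=\varphi\bigl((\Sigma_{I_1}\cap\dots\cap\Sigma_{I_k})\setminus\Sigma_K\bigr),
\]
where $\varphi$ is a real function on the finite set $\Sigma$ and the right member is the sum of $\varphi$ over the atoms lying in the indicated region; in particular $\varphi$ is additive over any partition of a subset of $\Sigma$ into disjoint pieces.

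Second I would identify the atoms. For a point $p$ set $J(p)=\{i:p\in\Sigma_i\}$; then $p\in\Sigma'_J$ exactly when $J(p)=J$, so the cells $\Sigma'_J$ are precisely the atoms, and (by the radius condition $R>\sqrt{(n-1)/n}$, which places the circumcenter in every ball and makes each intersection $\bigcap_{i\in J}\Sigma_i$ nonempty and convex) they are the connected components of $\Sigma\setminus(\partial\Sigma_1\cup\dots\cup\partial\Sigma_n)$. Now $p\in(\Sigma_{I_1}\cap\dots\cap\Sigma_{I_k})\setminus\Sigma_K$ if and only if $J(p)\cap I_m\neq\emptyset$ for every $m$ and $J(p)\cap K=\emptyset$. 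Hence, as a universal Boolean set identity,
\[
(\Sigma_{I_1}\cap\dots\cap\Sigma_{I_k})\setminus\Sigma_K=\bigsqcup_{J}\Sigma'_J,
\]
the disjoint union running over all faces $J$ that meet each $I_m$ and avoid $K$.

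Third I would translate back. Since each $X_j$, $j\in J$, is a single variable, Theorem \ref{thm Hu} assigns to $\eta_J(P)=X_{[n]\setminus J}.I_{|J|}(X_{j_1};\dots;X_{j_{|J|}};P)$ exactly the atom $\bigl(\bigcap_{i\in J}\Sigma_i\bigr)\setminus\Sigma_{[n]\setminus J}=\Sigma'_J$, i.e. $\varphi(\Sigma'_J)=\eta_J(P)$. Applying $\varphi$ to the displayed disjoint decomposition and using additivity — equivalently, applying Corollary \ref{corol1} directly to that disjoint-sum identity — yields $I_{k,l}(X_{I_1};\dots;X_{I_k};P|X_K)=\sum_J\eta_J(P)$ over the same index set, which is the assertion.

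The computation is routine once the dictionary is in place; the logical backbone, namely the universal Boolean identity together with Corollary \ref{corol1}, needs no geometry at all. The ball configuration enters only to interpret the atoms $\Sigma'_J$ as the connected components named in the statement, and here is where the radius condition is genuinely used, to guarantee a faithful realization in which every relevant cell is a single nonempty component. The point requiring care is therefore the combinatorial bookkeeping: checking that the selected faces $J$ are exactly those meeting every $I_m$ and disjoint from $K$ — the transcription of ``in some ball indexed by each $I_m$'' and ``in no ball indexed by $K$'' — and keeping straight that $\Sigma'_J$ is cut out by the intersection $\bigcap_{i\in J}\Sigma_i$ (the juxtaposition of the singletons inside $I_{|J|}$), whereas the outer sets $\Sigma_{I_m}$ and $\Sigma_K$ are unions arising from the joint variables.
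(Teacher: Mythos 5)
Your argument is correct and is essentially the paper's own proof, made explicit: decompose the Boolean-algebra element $(\Sigma_{I_1}\cap\dots\cap\Sigma_{I_k})\setminus\Sigma_K$ into the atoms $\Sigma'_J$ and apply the Hu Kuo Ting correspondence (Corollary \ref{corol1}) together with the identification $\varphi(\Sigma'_J)=\eta_J(P)$. Your added bookkeeping (the atoms that appear are exactly those $J$ meeting every $I_m$ and disjoint from $K$, and $\Sigma'_J$ is the set of points in precisely the balls indexed by $J$) is the correct reading of the paper's slightly abusive notation and only spells out what the paper leaves implicit.
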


\begin{proof}  
Every subset that is obtained from the $\Sigma_J;J\subset [n]$ by union, intersection and difference,
repeated indefinitely (i.e. every element of the Boolean algebra generated by the $\Sigma_i;i=1,...,n$), is a disjoint union of some of the sets $\Sigma'_J$. This is true in particular for the sets obtained by the succession of operations $1,2,3$ in the order prescribed by the corollary \ref{corol1} above. Then the proposition follows from the corollary \ref{corol1}.
\end{proof}

\indent We define the \emph{elementary} (or \emph{pure}) joint entropies $H_k(X_I)$ and the elementary (or pure) higher information functions $I_k(X_I)$ as $H_k(X_{i_1};...;X_{i_k};P)$ and $I_k(X_{i_1};...;X_{i_k};P)$ respectively, where $I=\{i_1,...,i_k\}\subset[n]$ describes
the subsets of $[n]$. In the following of the text, we will consider only these pure quantities. We will frequently denote them simply by $H_k$ (resp. $I_k$).
The other information quantities use joint variables and conditioning, but the preceding result tells that they can be computed from the pure quantities. \\
For the pure functions, the decompositions in the basis $\eta_I$ are simple:\\

\begin{propos} 
If $I=\{i_1,...,i_k\}$, we have
\begin{equation}
H_k(X_{i_1};...;X_{i_k};P)=\sum_{J\subset [n]|\exists m, i_m\in J}\eta_J(P),
\end{equation}
and
\begin{equation}
I_k(X_{i_1};...;X_{i_k};P)=\sum_{J\supset I}\eta_J(P).
\end{equation}
\end{propos}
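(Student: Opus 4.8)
The plan is to deduce both identities from Proposition~\ref{info_sums} by recognizing $H_k(X_I)$ and $I_k(X_I)$ as the two simplest instances of the conditioned information $I_{k,l}$, and then reading off, from the geometric dictionary, which atoms $\Sigma'_J$ occur in the corresponding set operation. Throughout I use the basic feature of the ball configuration that the cell $\Sigma'_J$ consists exactly of the points lying in $\Sigma_i$ precisely for the indices $i\in J$; equivalently, $\Sigma'_J\subseteq\Sigma_i$ if and only if $i\in J$. The radius condition $R>\sqrt{(n-1)/n}$ guarantees that these cells are genuine connected components, so Proposition~\ref{info_sums} applies verbatim and no extra geometric argument is needed.

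For the second identity I would start from the observation that, taking each subset to be a singleton $I_m=\{i_m\}$ and taking no conditioning variable (i.e. $K=\emptyset$ in the notation of Proposition~\ref{info_sums}), the pure function $I_k(X_{i_1};\dots;X_{i_k};P)$ is precisely $I_{k,0}$ evaluated on these data. The associated set is then $\Sigma_{i_1}\cap\dots\cap\Sigma_{i_k}=\bigcap_{i\in I}\Sigma_i$, with no subtraction. A cell $\Sigma'_J$ is a connected component of this intersection exactly when it is contained in every $\Sigma_i$ with $i\in I$, which by the fact above means $I\subseteq J$. Proposition~\ref{info_sums} therefore gives $I_k(X_{i_1};\dots;X_{i_k};P)=\sum_{J\supseteq I}\eta_J(P)$, the claimed formula.

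For the first identity I would write $H_k(X_{i_1};\dots;X_{i_k};P)=H(X_I;P)=I_{1,0}(X_I;P)$, i.e. the entropy of the single joint variable $X_I$ with no conditioning. Under the dictionary a joint variable corresponds to a union of balls, so the associated set is $\Sigma_I=\bigcup_{i\in I}\Sigma_i$. A cell $\Sigma'_J$ lies in this union precisely when $\Sigma'_J\subseteq\Sigma_i$ for at least one $i\in I$, i.e. when $J\cap I\neq\emptyset$, which is the stated condition $\exists m,\ i_m\in J$. Applying Proposition~\ref{info_sums} once more yields $H_k(X_{i_1};\dots;X_{i_k};P)=\sum_{J\mid\exists m,\,i_m\in J}\eta_J(P)$.

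The only substantive point, and the step I would treat most carefully, is the combinatorial identification of the atoms: the cells contained in $\bigcap_{i\in I}\Sigma_i$ are exactly those indexed by $J\supseteq I$, while those contained in $\bigcup_{i\in I}\Sigma_i$ are exactly those indexed by $J$ meeting $I$. Both reduce to the single equivalence $\Sigma'_J\subseteq\Sigma_i\iff i\in J$, together with the distribution of membership over intersections and unions; nothing beyond this elementary set-theoretic bookkeeping is required, since Proposition~\ref{info_sums} has already translated the information quantities into sums over cells. A consistency check on the extreme faces (for instance, $J=I$ belongs to the second sum, matching $I\subseteq I$) confirms that the index ranges are the intended ones.
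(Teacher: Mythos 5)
Your proposal is correct and follows exactly the route the paper intends: the paper states this proposition without an explicit proof, treating it as an immediate consequence of Proposition~\ref{info_sums}, and your argument is precisely that deduction spelled out (identifying $I_k$ with $I_{k,0}$ on singletons, hence the intersection $\bigcap_{i\in I}\Sigma_i$ whose atoms are the $\Sigma'_J$ with $J\supseteq I$, and $H_k(X_I)$ with $I_{1,0}(X_I)$, hence the union $\bigcup_{i\in I}\Sigma_i$ whose atoms are the $\Sigma'_J$ with $J\cap I\neq\emptyset$). Your reading of $\Sigma'_J$ as the atom of points belonging to $\Sigma_i$ exactly for $i\in J$ is the one consistent with the definition of $\eta_J$, so the combinatorial bookkeeping is sound.
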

In other terms, the function $H_k$ evaluated on a face $X_I$ of dimension $k$ is given by the sum of the functions $\eta_J$ over all the faces $X_J$
connected to $X_I$. And the function $I_k$ evaluated on $X_I$ is the sum of the functions $\eta_J$ over all the faces $X_J$ that contain $X_I$.\\

\begin{propos} 
For any face $J$ of $\Delta([n])$, of dimension $l$, and any probability $P$ on $E_X$, we have
\begin{equation}
\eta_J(P)=\sum_{k\geq l}\sum_{I\supseteq J| dimI=k}(-1)^{k-l}H_k(X_I;P).
\end{equation}
\end{propos}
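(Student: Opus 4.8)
The plan is to treat this identity as the inverse of the preceding Proposition, which expresses each pure function as a sum of the atomic quantities $\eta_K$. One could Möbius-invert the relation $H_k(X_I;P)=\sum_{K\cap I\neq\emptyset}\eta_K(P)$ over the Boolean lattice of subsets of $[n]$, but I would prefer a direct computation starting from the definition $\eta_J(P)=X_{[n]\backslash J}.I_l(X_J;P)$, since it uses only the fundamental equation \eqref{fundeqinfo} and the definition \eqref{higher information} of $I_l$, bypassing the geometric realization entirely.

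First I would expand, using \eqref{higher information},
\begin{equation*}
I_l(X_J;P)=\sum_{\emptyset\neq I'\subseteq J}(-1)^{|I'|-1}H(X_{I'};P),
\end{equation*}
and then condition by $X_{[n]\backslash J}$ term by term, which is legitimate because conditioning is linear in its argument. Writing $J^{c}=[n]\backslash J$ and invoking \eqref{fundeqinfo} in the form $X_{J^{c}}.H(X_{I'};P)=H(X_{I'\cup J^{c}};P)-H(X_{J^{c}};P)$, every summand splits into a ``shifted'' joint entropy $H(X_{I'\cup J^{c}};P)$ and the common correction $-H(X_{J^{c}};P)$.

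Next I would collapse the corrections: since $\sum_{\emptyset\neq I'\subseteq J}(-1)^{|I'|-1}=1$, all the $-H(X_{J^{c}};P)$ terms add up to a single $-H(X_{J^{c}};P)$, which is precisely the value the shifted term would take at $I'=\emptyset$; reabsorbing it removes the restriction $I'\neq\emptyset$. Finally I would reindex by the face $L=I'\cup J^{c}$: as $I'$ runs over all subsets of $J$, the face $L$ runs over all faces with $J^{c}\subseteq L$, and $|L|=|I'|+(n-l)$, so the sign $(-1)^{|I'|-1}$ becomes $(-1)^{|L|-(n-l)+1}$ and each term is a pure joint entropy. This presents $\eta_J(P)$ as a signed sum of joint entropies over the faces containing the complementary face $J^{c}$.

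The step I expect to be the main obstacle is the index bookkeeping and the global sign, because the entropy corresponds to a union (not an intersection) of the sets $\Sigma_i$, so the inversion is governed by the complementary face $[n]\backslash J$ rather than by $J$ itself. I would check this against the statement on the smallest case before declaring it proved: for $n=2$ and $J=\{1\}$ one has $\eta_{\{1\}}(P)=X_{\{2\}}.H(X_1;P)=H(X_1,X_2;P)-H(X_2;P)$, which is indeed the alternating sum over the faces $\{2\}$ and $\{1,2\}$ containing $J^{c}=\{2\}$. Matching this behaviour to the range ``$I\supseteq J$'' and the sign $(-1)^{k-l}$ as printed is the delicate point, and I would reconcile the two (adjusting the summation to run over faces containing $[n]\backslash J$ if needed) as the last step.
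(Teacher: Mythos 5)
Your computation is correct, and it takes a genuinely different route from the paper's. The paper disposes of this proposition in one line: Möbius inversion (Rota) applied over the Boolean lattice of faces to the relation $I_k(X_I;P)=\sum_{J'\supseteq I}\eta_{J'}(P)$ of the preceding proposition, which inverts to $\eta_J(P)=\sum_{I\supseteq J}(-1)^{\dim I-\dim J}\,I_{\dim I}(X_I;P)$. You instead work directly from the definition $\eta_J=X_{[n]\backslash J}.I_l(X_J;P)$, expanding $I_l$ by \eqref{higher information} and conditioning term by term via \eqref{fundeqinfo}; this is more elementary and self-contained (it never uses the preceding proposition or the geometric realization), and each step you outline — linearity of conditioning, the collapse of the $-H(X_{J^c};P)$ corrections using $\sum_{\emptyset\neq I'\subseteq J}(-1)^{|I'|-1}=1$, and the reindexing $L=I'\cup J^{c}$ — is sound. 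What it yields is the entropy-based form
\begin{equation*}
\eta_J(P)=\sum_{L\supseteq [n]\backslash J}(-1)^{|L|-|[n]\backslash J|+1}\,H(X_L;P).
\end{equation*}

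The ``delicate point'' you flag at the end is real, and it is not a defect of your argument: the statement as printed, with $H_k$ and the range $I\supseteq J$, is false. Already for $n=2$ and $J=\{1\}$ it would assert $H(X_1)-H(X_1,X_2)=H(X_1,X_2)-H(X_2)$, which fails for instance for two independent non-constant variables. The printed $H_k$ should be read as $I_k$: with $I_k(X_I;P)$ the range $I\supseteq J$ and the sign $(-1)^{k-l}$ are exactly what the paper's Möbius inversion produces, and that version is equivalent to the complement-indexed entropy formula you derive (both compute the same atom $\eta_J$, as your $n=2$ check confirms). So your proposal should be regarded as a correct proof of the corrected statement, reached by a more hands-on path than the paper's; the only thing to add, if you want to match the paper's formulation literally, is the remark that substituting the expansion $I_{\dim I}(X_I)=\sum_{\emptyset\neq K\subseteq I}(-1)^{|K|-1}H(X_K)$ into the $I_k$-version and collecting terms recovers your entropy-indexed sum, so the two formulations are interchangeable.
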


\begin{proof}  
This follows from the Moebius inversion formula (Rota 1964 \cite{Rota1964a}).
\end{proof}

\begin{corol} 
(Te Sun Han): Any Shannon information quantity is a linear combination of the pure functions $I_k, k\geq1$ (resp. $H_kk\geq1$), with coefficients in $\mathbb{Z}$, the ring of relative integers.
\end{corol}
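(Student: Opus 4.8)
The plan is to show that the pure functions $I_k$ (resp. $H_k$), the auxiliary family $\{\eta_J : J\subseteq[n]\}$, and the full collection of Shannon quantities $I_{k,l}$ all generate one and the same $\mathbb{Z}$-module of functions of $\mathbb{P}$, so that the integrality assertion reduces to membership in this common module. I would use the $\eta_J$ as a reference family throughout, since by the propositions above they are tied to both $H_k$ and $I_k$ by explicit formulas.

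First I would pin down what a Shannon information quantity is: up to integer combinations, every such quantity is one of the conditional quantities $I_{k,l}(X_{I_1};\ldots;X_{I_k};\mathbb{P}|X_K)$, since entropies, conditional entropies, mutual informations and their conditioned versions are precisely the $I_{k,l}$, and any Shannon quantity is an integer combination of these. By Proposition \ref{info_sums}, each $I_{k,l}(X_{I_1};\ldots;X_{I_k};\mathbb{P}|X_K)$ equals the sum of those $\eta_J$ for which $\Sigma'_J$ is a connected component of $(\Sigma_{I_1}\cap\cdots\cap\Sigma_{I_k})\setminus\Sigma_K$, each contributing with coefficient $1$. Hence every Shannon quantity lies in the $\mathbb{Z}$-module $M$ generated by the $\eta_J$.

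Next I would show that $M$ coincides with the module generated by the pure $H_k$ and also with the module generated by the pure $I_k$. For the $H_k$ direction both inclusions are already furnished explicitly with integer coefficients: the Möbius-inversion proposition gives $\eta_J(P)=\sum_{k\geq l}\sum_{I\supseteq J,\, \dim I = k}(-1)^{k-l}H_k(X_I;P)$, while the preceding proposition gives $H_k(X_I;P)=\sum_{J\,|\,\exists m,\ i_m\in J}\eta_J(P)$; so the two modules agree. For the $I_k$ direction I would read $I_k(X_I;P)=\sum_{J\supseteq I}\eta_J(P)$ as the zeta transform on the Boolean lattice of subsets of $[n]$ ordered by inclusion; Rota's Möbius inversion then yields $\eta_I(P)=\sum_{J\supseteq I}(-1)^{|J|-|I|}I_{|J|}(X_J;P)$, an integer combination, and the forward relation already expresses each $I_k$ as a $0/1$-combination of the $\eta_J$. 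Thus the $\eta_J$ and the pure $I_k$ generate the same module $M$.

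Combining the steps, every Shannon quantity belongs to $M$, which is generated over $\mathbb{Z}$ by the pure $H_k$ (resp. by the pure $I_k$), which is exactly the claim. The only routine ingredient is invertibility of the transition over $\mathbb{Z}$: for the $I_k$ case the relation $I_k(X_I)=\sum_{J\supseteq I}\eta_J$ is unitriangular for reverse inclusion, hence integer-invertible, and for the $H_k$ case both directions are already explicit. I expect the single point needing care to be the first step, namely verifying that the generating family $\{I_{k,l}\}$ really exhausts what one wishes to call a Shannon information quantity, so that Proposition \ref{info_sums} applies to each generator and places it inside $M$ with integer coefficients; once this is granted, the remainder is formal Möbius inversion in the Boolean lattice.
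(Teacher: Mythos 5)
Your proposal is correct and follows essentially the same route as the paper, whose proof simply invokes Proposition \ref{info_sums} to land every $I_{k,l}$ in the $\mathbb{Z}$-span of the $\eta_J$ and then relies on the already-stated integer transition formulas between the $\eta_J$ and the pure $H_k$ (resp.\ $I_k$). Your only addition is to write out explicitly the Möbius inversion $\eta_I=\sum_{J\supseteq I}(-1)^{|J|-|I|}I_{|J|}(X_J)$ for the $I_k$ family, which the paper leaves implicit; this is a correct and harmless elaboration, not a different argument.
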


\begin{proof} 
This follows from the proposition \ref{info_sums}.
\end{proof} 

\noindent Hu Kuo Ting \cite{Hu1962} also proved a remarkable property of the information functions associated to a Markov process:\\
\begin{propos} 
The variables $X_1,...,X_n$ can be arranged in a Markov process $(X_{i_1},...,X_{i_n})$
if and only if, for every subset $J=\{j_1,...,j_{k-2}\}$ of $\{ i_2,...,i_{n-1}\}$ of cardinality $k-2$, we have
\begin{equation}
I_k(X_{i_1};X_{j_1},...;X_{j_{k-2}};X_{i_n})=I_2(X_{i_1};X_{i_n}).
\end{equation}
\end{propos}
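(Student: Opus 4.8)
The plan is to transport the statement into the set model furnished by Theorem~\ref{thm Hu} and Corollary~\ref{corol1}. Write $\varphi$ for the associated set function and $\Sigma_i$ for the set attached to $X_i$; since juxtaposition corresponds to intersection and $\varphi$ is additive on the disjoint pieces $\Sigma'_J$ (as used in the proof of Proposition~\ref{info_sums}), one has $I_k(X_{i_1};X_{j_1};\ldots;X_{j_{k-2}};X_{i_n})=\varphi\bigl(\Sigma_{i_1}\cap\Sigma_{i_n}\cap\bigcap_{j\in J}\Sigma_j\bigr)$ and $I_2(X_{i_1};X_{i_n})=\varphi(\Sigma_{i_1}\cap\Sigma_{i_n})$. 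Hence the asserted family of equalities, one for each $J\subseteq M:=\{i_2,\ldots,i_{n-1}\}$, is equivalent to $\varphi\bigl(\Sigma_{i_1}\cap\Sigma_{i_n}\setminus\bigcap_{j\in J}\Sigma_j\bigr)=0$ for every such $J$.

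Next I would reduce this to a statement about atoms. Put $D:=\Sigma_{i_1}\cap\Sigma_{i_n}$ and, for $S\subseteq M$, let $D_S:=D\cap\bigcap_{j\in S}\Sigma_j\cap\bigcap_{j\in M\setminus S}\Sigma_j^{c}$, so that $D$ is the disjoint union of the $D_S$. With $f(J):=\varphi\bigl(D\cap\bigcap_{j\in J}\Sigma_j\bigr)=\sum_{S\supseteq J}\varphi(D_S)$, the condition of the previous paragraph says exactly that $f$ is constant, equal to $f(\emptyset)=\varphi(D)$. Möbius inversion on the Boolean lattice of $M$ (the same inversion used just above for the $\eta_J$) gives $\varphi(D_S)=\sum_{J\supseteq S}(-1)^{|J\setminus S|}f(J)$, which vanishes for every $S\subsetneq M$ as soon as $f$ is constant. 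Thus the displayed condition is equivalent to $\varphi(D_S)=0$ for all $S\neq M$, i.e. to the statement that the restriction of $\varphi$ to $\Sigma_{i_1}\cap\Sigma_{i_n}$ is carried by $\bigcap_{i=1}^{n}\Sigma_i$.

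For the direct implication I would invoke the structure of the I-measure of a Markov chain: $\varphi$ is then non-negative and supported on those atoms whose index set is an interval of the chain. An atom contained in $D$ meets both extreme sets $\Sigma_{i_1}$ and $\Sigma_{i_n}$, so its index set contains the two ends of the chain; being an interval, it must be the whole chain, i.e. $S=M$. Every other atom $D_S$ ($S\neq M$) therefore has zero measure, which is precisely the equivalent condition obtained above.

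The converse is where the real work lies, and is the step I expect to be the main obstacle. From $\varphi(D_S)=0$ for $S\neq M$ one reads off at once that $I(X_{i_1};X_{i_n}\mid X_{i_b})=0$ for each interior index $b$, but promoting these \emph{endpoint} relations to the full Markov property --- all the node-wise independences $I(X_{i_1},\ldots,X_{i_{m-1}};X_{i_{m+1}},\ldots,X_{i_n}\mid X_{i_m})=0$ --- requires killing atoms, such as $\Sigma_{i_1}\cap\Sigma_{i_3}\setminus(\Sigma_{i_2}\cup\Sigma_{i_4})$ for $n=4$, that do not involve both extremes and are not directly constrained by the hypothesis. My plan would be an induction on $n$: use non-negativity of the genuine conditional mutual informations to force the mass of $\varphi$ onto the interval atoms, peel off $X_{i_1}$, and apply the inductive hypothesis to the truncated chain $(X_{i_2},\ldots,X_{i_n})$. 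The delicate point I would check most carefully is whether the equalities for the single pair of extremes $X_{i_1},X_{i_n}$ genuinely suffice, or whether one must also impose them for every pair of extremal indices along the chain; establishing that the endpoint family alone reconstructs the ordered Markov structure is the crux of the argument.
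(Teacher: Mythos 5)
First, be aware that the paper contains no proof of this proposition: it is stated as a quotation of Hu Kuo Ting's 1962 result, with only the citation \cite{Hu1962} standing in for an argument, so there is nothing on the paper's side to compare your plan against. Your translation into the set model is correct and efficient: the stated family of equalities is indeed equivalent, via additivity of $\varphi$ and M\"obius inversion, to the vanishing of $\varphi$ on every atom $D_S$, $S\subsetneq M$, of $\Sigma_{i_1}\cap\Sigma_{i_n}$; and your forward direction is sound provided you import the Kawabata--Yeung structure theorem (for a Markov chain the $I$-measure is supported on atoms whose index set is an interval of the chain), which is itself a nontrivial result neither stated nor proved in the paper and would have to be established or cited separately.

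The converse, which you candidly leave as "the crux", is a genuine gap --- and your own suspicion about it is vindicated: the endpoint equalities alone do \emph{not} suffice. Take $n=4$, let $Y,Z,W$ be independent unbiased bits, and set $X_1=X_3=Y$, $X_2=Z$, $X_4=W$. A direct computation gives $I_2(X_1;X_4)=I_3(X_1;X_2;X_4)=I_3(X_1;X_3;X_4)=I_4(X_1;X_2;X_3;X_4)=0$, so every equality in the statement holds for the arrangement $(X_1,X_2,X_3,X_4)$; yet this arrangement is not a Markov chain, since $I(X_1;X_3\mid X_2)=H(Y\mid Z)=1$. (The obstruction sits exactly in the atoms you identified, here $\Sigma_1\cap\Sigma_3\setminus(\Sigma_2\cup\Sigma_4)$, which the hypothesis does not see.) One can even defeat the weaker "can be arranged" reading: with $X_1=X_5$ constant and $(X_2,X_3,X_4)$ the XOR triple, the equalities hold trivially while no ordering of the five variables is Markov. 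So the induction you sketch cannot close as the statement stands; the correct form of Hu's characterization imposes the analogous equalities for \emph{every} pair of extremal indices $i_l,i_u$ along the chain (equivalently, the vanishing of all non-interval atoms), not merely for the pair $i_1,i_n$. Your plan would go through for that strengthened hypothesis, but as a proof of the proposition as written it cannot be completed, because the "if" direction is false.
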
 
\noindent This implies that, for a Markov process between $(X_{i_1},...,X_{i_n})$, all the functions $I_k(X_I)$ involving $i_1$ and $i_n$, are positive.

\subsection{The independence criterion}\label{The independence criterion}

The total correlations were defined by Watanabe as the difference of the sum of entropies and the joint entropy, noted  $G_k$  \cite{Watanabe1960} (see also \cite{Tononi1998,Studeny1999,Margolin2010}):
\begin{equation}\label{total correlation}
G_k(X_1;...;X_k;P)=\sum_{i=1}^k H(X_i) - H(X_1;...;X_k).
\end{equation}
Total correlations are Kullback-Leibler divergences, cf. appendix \ref{Appendix: Bayes free energy} on bayes free energy; and $I_2=G_2$.
It is well known (cf. the above references or \cite{Cover1991}) that for $n\geq 2$, the variables $X_1,...,X_n$ are statistically independent for the probability $P$, if and only if $G_n(X_1;...;X_n)=0$, i.e.
\begin{equation}\label{independententropy}
	H(X_1,...,X_n;P)=H(X_1;P)+...+H(X_n;P).
\end{equation}

\begin{remark}
The result is proved by induction using repetitively  the case  $n=2$, which comes from the
strict concavity of the function $H(P)$ on the simplex $\Delta([n])$.
\end{remark}

\begin{thm} \label{thm_indepenedence}
For every $n$ and every set $E_1,...,E_n$ of respective cardinalities $N_1,...,N_n$, the
probability $P$ renders the $n$ variables $X_i,i=1,...,n$ statistically independent if and only if the $2^{n}-n-1$ quantities $I_k$ for $k \geq 2$ are equal to zero.
\end{thm}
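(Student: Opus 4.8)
The plan is to prove the two implications separately, noting first that the $2^{n}-n-1$ quantities in question are exactly the pure informations $I_k(X_I)$ attached to the subsets $I\subset[n]$ of cardinality $k\geq 2$ (there are $2^n$ subsets, minus the empty set and the $n$ singletons, leaving $2^n-n-1$).

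For the implication ``vanishing $\Rightarrow$ independence'' I would feed the hypothesis directly into the inversion formula that expresses the total entropy $H(X_1;\ldots;X_n;\mathbb{P})$ as the alternating sum $\sum_{k=1}^{n}(-1)^{k-1}\sum_{|I|=k}I_k(X_I;\mathbb{P})$. If every $I_k(X_I)$ with $k\geq 2$ is zero, only the $k=1$ term survives, and since $I_1=H$ this reads $H(X_1;\ldots;X_n;\mathbb{P})=\sum_{i=1}^{n}H(X_i;\mathbb{P})$, which is precisely equation \eqref{independententropy}. By the independence criterion recalled just before the theorem (itself a consequence of the strict concavity of $H$), this additivity of the joint entropy is equivalent to the statistical independence of $X_1,\ldots,X_n$. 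So this direction is immediate once the inversion formula is in hand.

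For the converse ``independence $\Rightarrow$ vanishing'' I would argue that each individual $I_k(X_I)$ with $k\geq 2$ must vanish, not merely their alternating sum. Fix $I=\{i_1,\ldots,i_k\}$. Since mutual independence of $X_1,\ldots,X_n$ passes to every subfamily, the joint entropy of any $X_{I'}$, $I'\subseteq I$, factors as $H(X_{I'})=\sum_{i\in I'}H(X_i)$. Substituting this into the defining expansion \eqref{higher information} and counting the coefficient of each $H(X_i)$ (for fixed $i$, a given $i$ lies in $\binom{k-1}{j-1}$ of the $j$-element subsets of $I$, so $H_j(X_I)=\binom{k-1}{j-1}\sum_{i\in I}H(X_i)$), one obtains $I_k(X_I)=\big(\sum_{i\in I}H(X_i)\big)\sum_{j=1}^{k}(-1)^{j-1}\binom{k-1}{j-1}=\big(\sum_{i\in I}H(X_i)\big)(1-1)^{k-1}$, which is $0$ as soon as $k\geq 2$. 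Alternatively, and more in the spirit of the paper, I would run an induction using the recursion \eqref{recinfok}: under independence, conditioning by $X_{i_0}$ leaves the law of $(X_{i_1},\ldots,X_{i_k})$ unchanged, hence $X_{i_0}.I_k(X_{i_1};\ldots;X_{i_k})=I_k(X_{i_1};\ldots;X_{i_k})$, and \eqref{recinfok} collapses to $I_{k+1}=I_k-I_k=0$.

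The genuinely new content is the converse direction, since the classical criterion only guarantees that the single combination $\sum_{k\geq 2}(-1)^{k}\sum_{|I|=k}I_k(X_I)$ vanishes, whereas here each higher information vanishes on its own. I expect the remaining delicate points to be bookkeeping rather than conceptual: one must invoke mutual (not merely pairwise) independence to justify the factorization $H(X_{I'})=\sum_{i\in I'}H(X_i)$ for \emph{every} subset, and in the conditioning argument via \eqref{recinfok} one must use independence of $X_{i_0}$ from the whole block $(X_{i_1},\ldots,X_{i_k})$, not just from each coordinate separately. The analytic heart of the statement is hidden in the cited criterion \eqref{independententropy}, whose ``only if'' part rests on the strict concavity of the entropy on the simplex; everything else reduces to the combinatorial identity $(1-1)^{k-1}=0$.
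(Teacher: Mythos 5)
Your proof is correct, but it takes a genuinely different route from the paper's. The paper proves both implications simultaneously by induction on $n$: the inductive hypothesis converts ``all $I_k(X_I)=0$ for $2\leq k\leq n-1$'' into independence of every proper subfamily, and then a sign count (obtained by differentiating $(1-x)^n$ at $x=1$) collapses the defining alternating sum to the single identity $I_n=(-1)^{n-1}\bigl(H(X_1,...,X_n;P)-\sum_i H(X_i;P)\bigr)$, valid under subfamily independence; this identity serves both directions at once via the criterion \eqref{independententropy}. You instead decouple the two implications and avoid induction entirely: for ``vanishing $\Rightarrow$ independence'' you feed the hypothesis into the inversion formula $H(X_1,...,X_n)=\sum_k(-1)^{k-1}\sum_{|I|=k}I_k(X_I)$ --- a formula the paper states but never uses in its own proof --- which reduces the claim to \eqref{independententropy} in one step, with no need to first establish subfamily independence; for ``independence $\Rightarrow$ vanishing'' you substitute the factorized entropies into \eqref{higher information} for each fixed $I$ and invoke $\sum_{j=1}^{k}(-1)^{j-1}\binom{k-1}{j-1}=(1-1)^{k-1}=0$, which is the same substitution-and-sign-count idea as the paper's but applied to every $I_k(X_I)$ separately and to all terms including the top one. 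Both arguments rest on the same analytic core (the concavity-based criterion \eqref{independententropy}), and your coefficient count is correct. What your version buys is a cleaner separation of the two directions --- one is pure combinatorics given independence, the other invokes concavity exactly once at the top level --- together with freedom from the inductive bookkeeping; what the paper's version buys is the intermediate identity for $I_n$ under subfamily independence, which is of some independent interest. Your alternative via \eqref{recinfok} is also valid, and you correctly flag the one genuine subtlety there, namely that $X_{i_0}$ must be independent of the whole block $(X_{i_1},...,X_{i_k})$ so that conditioning leaves its joint law, and hence $I_k$, unchanged.
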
 

\begin{proof} 
For $n=2$ this results immediately from the above criterion and the definition of $I_2$.
Then we proceed by recurrence on $n$, and assuming that the result is true for $n-1$ we deduce it for $n$.\\
The definition of $I_n$ is
\begin{multline}\label{infoN}
I_n(X_1;...;X_n;P)=H(X_1;P)+...+H(X_n;P)\\-H(X_1,X_2;P)-...+(-1)^{n+1}H(X_1,...,X_n;P).
\end{multline}
By recurrence, the quantities $I_k$ for $2\leq k\leq n-1$ are all equal to zero if and only if, for
every subset $I=\{ i_1,...,i_k\} \subset [n]$ of cardinality $k$ between $2$ and $n-1$, the variables $X_{i_1},...,X_{i_k}$ are independent.
Suppose this is the case.
In the above formula \eqref{infoN}, we can replace all the intermediary higher entropies $H(X_I;P)$ for $I$ between $2$ and $n-1$ by the corresponding
sum of the individual entropies $H(X_{i_1})+...+H(X_{i_k})$. By symmetry each term $H(X_i)$ appears the same number of time, with the same sign
each time. The total sum of signs is obtained by replacing each $H(X_i)$ by $1$; it is
\begin{equation}
\Sigma=n-2C_n^{2}+3C_n^{3}-...+(-1)^{n}(n-1)C_n^{n-1}.
\end{equation}
However, as a polynomial in $x$, we have
\begin{equation}
(1-x)^{n}=1-nx+C_n^{2}x^{2}-...+(-1)^{n}x^{n},
\end{equation}
thus
\begin{equation}
\frac{d}{dx}(1-x)^{n}=-n+2C_n^{2}x-...+(-1)^{n}nx^{n-1},
\end{equation}
therefore
\begin{multline}
n-2C_n^{2}+...+(-1)^{n}(n-1)C_n^{n-1}\\=(-1)^{n}n-\frac{d}{dx}(1-x)^{n}|_{x=1}=(-1)^{n}n,
\end{multline}
because $n\geq 2$.\\
Then we obtain
\begin{multline}
I_n(X_1;...;X_n;P)\\=(-1)^{n-1}H(X_1,...,X_n;P)+(-1)^{n}(H(X_1;P)+...+H(X_n;P)).
\end{multline}
Therefore, if the variables $X_i;i=1,...,n$ are all independent, the quantity $I_n$ is equal to $0$. And conversely,
if $I_n=0$, the variables $X_i;i=1,...,n$ are all independent.
\end{proof} 

\noindent Te Sun Han established that, for any subset $I_0$ of $[n]$ of cardinality $k_0\geq 2$, there exist probability laws such that all the $I_k(X_I), k\geq 2$ are zero with the exception of $I_{k_0}(X_{I_0})$ \cite{Han1975,Han1978}. Consequently, in the equations of the theorem \ref{thm_indepenedence}, no one can be forgotten. \\
The unique equation \eqref{independententropy} also characterizes the statistical independence, but its gradient with respect to $P$ is strongly degenerate along the variety of independent laws. As shown by Te Sun Han \cite{Han1975,Han1978}, this is not the case for the $I_k$.\\

\subsection{Information coordinates}\label{Information coordinates}

The number of different functions $\eta_I$, resp. pure $I_k$, resp. pure $H_k$, is $2^{n}-1$ in the three cases. It is natural to ask if each of these families of functions of $P_X$ are analytically independent; we will prove here that this is true. The basis of the proof is the fact that each family gives finitely ambiguous coordinates in the case of binary variables, i.e. when all the numbers $N_i,i=1,...,n$
are equal to $2$. Then we begin by considering $n$ binary variables with values $0$ or $1$. \\

\noindent Let us look first at the cases $n=1$ and $n=2$. And consider only the family $H_k$, the other families being easily deduced by linear isomorphisms.\\

\noindent In the first case the only function to consider is the entropy
\begin{multline}
H((p_0,p_1))=-p_0\log_2(p_0)-p_1\log_2(p_1)\\=-\frac{1}{\ln 2}(x\ln x-(1-x)\ln (1-x))=h(x),
\end{multline}
if we put $x=p_0$. To get a probability $(p_0,p_1)$ we impose that $x$ belongs to $[0,1]$.
As a function of $x$, $h$ is strictly concave, attaining all values between $0$ and $1$, but it is not injective, due to the symmetry $x\mapsto 1-x$, which corresponds to the exchange of the values $0$ and $1$.\\

\noindent For $n=2$, we have two variables $X_1,X_2$ and three functions $H(X_1;P)$, $H(X_2;P)$, $H(X_1,X_2;P)$. These functions are all concave and real analytic in the interior of the simplex of dimension $3$.\\
Let us describe the probability law by four positive numbers $p_{00}, p_{01}, p_{10}, p_{11}$ of sum $1$. The marginal laws for $X_1$ and $X_2$ are described respectively by the following coordinates:
\begin{align}
&p_0=p_{00}+ p_{01},\quad p_1=p_{10}+ p_{11},\\
&q_0=p_{00}+ p_{10},\quad q_1=p_{01}+ p_{11}.
\end{align}
For the values of $H(X_1;P)$ and $H(X_2;P)$ we can take independently two arbitrary real numbers between $0$ and $1$. Moreover, from the case $n=1$, if two laws $P$ and $P'$ give the same values $H_1$ and $H_2$ of $H(X_1;P)$ and $H(X_2;P)$ respectively, we can reorder $0$ and $1$ independently on each variable in such a manner that the images of $P$ and $P'$ by $X_1$ and $X_2$ coincide, i.e. we can suppose that $p_0=p'_0$ and $q_0=q'_0$, which implies $p_1=p'_1$ and $q_1=q'_1$, due to the condition of sum $1$. It is easy to show that the third function $H(X_1,X_2;P)$ can take any value between the maximum of $H_1$, $H_2$ and the sum $H_1+H_2$.\\
\begin{lemma}
There exist at most two probability laws that have the same marginal laws under $X_1$ and $X_2$ and the same value $H$ of $H(X_1,X_2)$; moreover, depending on the given values $H_1,H_2,H$ in the allowed range, both cases can happen in open sets of the simplex of dimension seven.
\end{lemma}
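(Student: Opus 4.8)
The plan is to fix the two marginal laws, reduce the joint law to a single real parameter, and then exploit the strict concavity of the entropy.

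First I would use the fact that, once the marginals are prescribed, the four numbers $p_{00},p_{01},p_{10},p_{11}$ satisfy the two row-sum and two column-sum constraints, so that a single degree of freedom remains. Setting $t=p_{00}$ gives $p_{01}=p_0-t$, $p_{10}=q_0-t$ and $p_{11}=1-p_0-q_0+t$, all positive precisely when $t$ lies in the open interval $J=(\max(0,p_0+q_0-1),\min(p_0,q_0))$. Hence the laws with the given marginals form an affine segment $t\mapsto P(t)$ in the probability simplex, and prescribing $H(X_1,X_2)$ amounts to solving $H(t)=H$ for the one-variable function $H(t):=H(X_1,X_2;P(t))$.

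The key step is strict concavity of $H(t)$ on $J$. This is immediate from the strict concavity of the Shannon entropy on the simplex together with the fact that $t\mapsto P(t)$ is a non-constant affine map; alternatively a direct computation yields
\[
H'(t)=\frac{1}{\ln 2}\,\ln\frac{p_{01}p_{10}}{p_{00}p_{11}},\qquad
H''(t)=-\frac{1}{\ln 2}\Bigl(\tfrac{1}{p_{00}}+\tfrac{1}{p_{01}}+\tfrac{1}{p_{10}}+\tfrac{1}{p_{11}}\Bigr)<0.
\]
The expression for $H'$ shows that the unique critical point of $H$ is the independence law $p_{ij}=p_iq_j$, which always lies in $J$ and at which $H$ attains its maximum $M=H_1+H_2$. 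Since a strictly concave function on an interval takes every value at most twice, the equation $H(t)=H$ has at most two roots in $J$; as each root determines $P(t)$ uniquely, there are at most two laws with the prescribed marginals and joint entropy, which is the first assertion.

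For the second assertion I would read off the profile of $H$: it increases strictly from the finite boundary value $H(a^+)$ at the left end $a$ of $J$ to the maximum $M=H_1+H_2$, then decreases strictly to $H(b^-)$ at the right end $b$, one cell of $P$ vanishing at each endpoint. Consequently, for a target value $H$ in the open range $\bigl(\max(H(a^+),H(b^-)),\,H_1+H_2\bigr)$ both branches contribute a root, giving two laws, whereas for $H$ in $\bigl(\min(H(a^+),H(b^-)),\,\max(H(a^+),H(b^-))\bigr)$ only one branch does, giving a single law. Both ranges are non-empty open intervals as soon as $H(a^+)\neq H(b^-)$, a condition cutting out only a proper real-analytic subvariety of marginals; and since $H$ and the marginals depend real-analytically, hence continuously, on $P$, these inequalities pull back to open subsets of the simplex on which the one-law and two-law cases respectively occur. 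I expect the concavity itself to be painless (the geometric argument settles it at once), so the main obstacle is the endpoint bookkeeping for this second part: one must confirm that the two boundary entropies $H(a^+)$ and $H(b^-)$ differ for generic marginals, so that an open band of targets $H$ produces exactly one interior law rather than two.
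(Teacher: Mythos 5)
Your argument is essentially the paper's: reduce to the one-parameter family $t\mapsto P(t)$ with fixed marginals, use strict concavity of the entropy along this segment (the paper phrases it via the conditional entropies $f_1(x)=X_1.H(X_2;P)$ and $f_2(x)=X_2.H(X_1;P)$, which differ from your $H(t)$ only by constants and have the same derivative $\log_2\frac{p_{01}p_{10}}{p_{00}p_{11}}$), locate the maximum at the independence law $x=p_0q_0$, and then distinguish the two-root and one-root regimes by comparing the endpoint values of the concave profile. The first assertion is fully proved in your version. For the second, you correctly identify the one remaining point — that the two boundary entropies $H(a^+)$ and $H(b^-)$ actually differ for some (hence an open set of) marginals — but you only assert that the coincidence locus is a proper analytic subvariety; to know it is \emph{proper} one must exhibit at least one instance where the values differ. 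The paper closes exactly this step by computing $F(p_1)=f_2(q_0)$ and $G(p_1)=f_2(q_0-p_1)$ and showing $F'(p_1)<G'(p_1)$ precisely when $q_0>1/2$, so that $F<G$ near $p_1=0$; adding such a concrete verification (a single numerical choice of $p_0,q_0$ also suffices) completes your proof.
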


\begin{proof} 
When we fix the values of the marginals, all the coordinates $p_{ij}$ can be expressed linearly
in one of them, for instance $x=p_{00}$:
\begin{equation}
p_{01}=p_0-x,\quad p_{10}=q_0-x,\quad p_{11}=p_1-q_0+x.
\end{equation}
Note that $x$ belongs to the interval $I$ defined by the positivity of all the $p_{ij}$:
\begin{equation}
x\geq 0,\quad x\leq p_0,\quad x\leq q_0,\quad x\geq q_0-p_1=q_0+p_0-1.
\end{equation}
The fundamental formula gives the two following equations:
\begin{align}
H(X_1,X_2;P)-H(X_1)&=X_1.H(X_2;P)=p_0h(\frac{x}{p_0})+p_1h(\frac{q_0-x}{p_1}),\\
H(X_1,X_2;P)-H(X_2)&=X_2.H(X_1;P)=q_0h(\frac{x}{q_0})+q_1h(\frac{p_0-x}{q_1}).
\end{align}
We define the functions $f_1(x)$ and $f_2(x)$ by the two above formulas respectively.
As a function of $x$, each one is strictly concave, being a sum of strictly concave functions, thus it cannot take the same value for more than two values of $x$.\\
This proves the first sentence of the lemma; to prove the second one, it is sufficient to give examples for both situations.\\
Remark that the functions $f_1,f_2$ have the same derivative:
\begin{equation}
f'_1(x)=f'_2(x)=\log_2 (\frac{p_{01}p_{10}}{p_{00}p_{11}}).
\end{equation}
This results from the formula $h'(u)=-\log_2(u/1-u)$ of the derivative of the entropy.\\
Then the maximum of $f_1$ or $f_2$ on $[0,1]$, is attained for $p_{01}p_{10}=p_{00}p_{11}$, that is when
\begin{equation}
x(x+1-p_0-q_0)=(x-p_0)(x-q_0)\quad \Leftrightarrow \quad x=p_0q_0,
\end{equation}
which we could have written without computation, because it corresponds to the independence of the variables $X_1$, $X_2$.\\
Then the possibility of two different laws $P, P'$ in the lemma is equivalent to the condition that $p_0q_0$ belongs to the interior of $I$. This happens for instance
for $1> p_0> q_0> q_1> p_1> 0$, where $I=[q_0-p_1,q_0]$, because in this case $p_0q_0 < q_0$ and $p_1> p_1q_0$ i.e. $p_0q_0=q_0-p_1q_0> q_0-p_1$. In fact, to get
$P\neq P'$ with the same $H$, it is sufficient to take $x$ different from $p_0q_0$ but sufficiently close to it, and $H=f_2(x)+H_2$.\\
However, even in the above case, the values of $f_1$ (or $f_2$) at the extremities of $I$ do not coincide in general. Let us prove this fact. We have
\begin{multline}
f_2(q_0)=q_1h(\frac{p_0-q_0}{q_1})=q_1h(1-\frac{p_1}{q_1})=F(p_1),\\ f_2(q_0-p_1)=q_0h(1-\frac{p_1}{q_0})=G(p_1).
\end{multline}
When $p_1=0$, the interval $I$ is reduced to the point $q_0$, and $F(0)=G(0)=0$. Now fix $q_0,q_1$, and consider the derivatives of $F,G$ with respect to
$p_1$ at every value $p_1>0$:
\begin{equation}
F'(p_1)=\log_2\frac{p_0-q_0}{p_1},\quad G'(p_1)=\log_2\frac{q_0-p_1}{p_1}.
\end{equation}
Therefore $F'(p_1)< G'(p_1)$ if and only if $p_0-q_0< q_0-p_1$, i.e. $q_0> 1/2$. Then, when $q_0> 1/2$,for $p_1> 0$ near $0$, we have $F(p_1)< G(p_1)$.\\
Consequently, any value $f_2(x)$ that is a little larger than $F(p_1)$ determines a unique value of $x$. It is in the vicinity of $q_0$. Which ends the proof of the lemma.
\end{proof} 

\noindent From this lemma, we see that there exist open sets where $8$ or $4$ different laws give the same values of the three functions $H(X_1), H(X_2), H(X_1,X_2)$. In degenerate cases, we can have $4$, $2$ or $1$ laws giving the same three values.\\
\begin{thm} \label{thm_coord}
For $n$ binary variables $X_1,...,X_n$, the functions $\eta_I$, resp. pure $I_k$, resp. pure $H_k$, characterize the probability law on $E_X$ up to a finite ambiguity.\\
\end{thm}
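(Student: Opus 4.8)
The plan is to reduce the three families to a single one and then induct on $n$. Since the decomposition formulas established above express $(H_k)$, $(I_k)$ and $(\eta_I)$ as images of one another under invertible linear maps (inclusion--exclusion and Möbius inversion), fixing the values of one family fixes the values of the other two; the three coordinate maps therefore have identical fibres, and it suffices to treat the family of pure joint entropies $P\mapsto\big(H(X_I;P)\big)_{\emptyset\neq I\subseteq[n]}$, a real-analytic map from the interior of the $(2^n-1)$-simplex to $\mathbb{R}^{2^n-1}$. The base cases $n=1$ (the two-fold symmetry $x\mapsto 1-x$ of $h$) and $n=2$ (the lemma above, which bounds the fibre) are already established.

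For the inductive step I would split the subsets $I\subseteq[n]$ according to whether they contain $n$. The $2^{n-1}-1$ functions $H(X_I)$ with $I\subseteq[n-1]$ depend only on the marginal law $P'$ of $(X_1,\dots,X_{n-1})$; by the induction hypothesis they determine $P'$ up to finite ambiguity, so I fix one admissible interior $P'$. The laws with this marginal form an open box of dimension $2^{n-1}$, parametrised by the conditional probabilities $t_a=P(X_n=1\mid X_{[n-1]}=a)$, $a\in\{0,1\}^{n-1}$. Writing $I=J\cup\{n\}$ with $J\subseteq[n-1]$ and applying the fundamental equation \eqref{fundeqinfo}, each remaining function becomes $H(X_{J\cup\{n\}})=H(X_J)+\psi_J(t)$, where $H(X_J)$ is already fixed and $\psi_J(t)=X_J.H(X_n)=\sum_{b}P_{X_J}(b)\,h(r_{J,b})$, with $r_{J,b}$ the fixed-weight average of the $t_a$ over $\{a:a|_J=b\}$ (so $r_{\emptyset}=\bar t:=\sum_a P'(a)t_a$). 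Thus the theorem reduces to showing that the conditional-entropy map $t\mapsto\big(\psi_J(t)\big)_{J\subseteq[n-1]}$, between spaces of equal dimension $2^{n-1}$, is finite-to-one.

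To attack this I would compute its Jacobian. Using $h'(u)=-\log_2\!\big(u/(1-u)\big)$ one finds $\partial\psi_J/\partial t_a=P'(a)\,h'(r_{J,a|_J})$, so the Jacobian factors as $\prod_a P'(a)$ times the determinant of the \emph{logit matrix} $L_{J,a}=\log\big(r_{J,a|_J}/(1-r_{J,a|_J})\big)$, rows indexed by $J\subseteq[n-1]$ and columns by $a\in\{0,1\}^{n-1}$. The generic non-vanishing of $\det L$ is exactly the analytic independence of the $\psi_J$; for $n=2$ it reduces to the factor $\log\tfrac{\bar t}{1-\bar t}\cdot\big(\log\tfrac{t_1}{1-t_1}-\log\tfrac{t_0}{1-t_0}\big)$, recovering the non-degeneracy underlying the lemma above.

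The main obstacle is twofold. First, one must prove that $\det L$ is not identically zero for \emph{every} $n$: unlike the $n=2$ case this is a genuine combinatorial statement about a matrix whose $(J,a)$ entry depends on $a$ only through $a|_J$, and a clean route would be to exhibit, at generic $t$, an ordering of rows and columns making $L$ block-triangular with non-zero pivots, or to evaluate $\det L$ at a conveniently chosen degenerate law. Second, even granting $\det L\not\equiv 0$, I must upgrade the resulting local finiteness to a global finite bound on every fibre. Here I would use that each $\psi_J$ extends continuously to the compact cube $[0,1]^{2^{n-1}}$ (since $h$ does), so every fibre is compact; off the proper analytic set $Z=\{\det L=0\}$ the map is a local diffeomorphism, hence its fibre is discrete and therefore finite there. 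The delicate point that remains is to rule out positive-dimensional fibre components contained in $Z$, which I expect to control by a secondary induction on dimension together with the strict concavity of the finest function $\psi_{[n-1]}(t)=\sum_a P'(a)\,h(t_a)$, exactly as the single strict concavity of $f_1,f_2$ closed the argument in the lemma above.
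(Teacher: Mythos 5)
Your reduction to the single family of pure entropies and the induction on $n$ match the paper's strategy, but the inductive step diverges at a critical point and, as you yourself acknowledge, does not close. By conditioning only on the marginal $P'$ of $(X_1,\dots,X_{n-1})$ you leave yourself a $2^{n-1}$-dimensional fibre parametrised by the conditionals $t_a$, and you must then prove that the conditional-entropy map $t\mapsto(\psi_J(t))_J$ is globally finite-to-one. Both obstacles you list are genuine and unresolved: the generic non-vanishing of $\det L$ is not established for general $n$, and even granting it, compactness of the fibre plus discreteness off $Z=\{\det L=0\}$ does not yield finiteness --- the part of the fibre off $Z$ could accumulate on $Z$, and the part inside $Z$ could be positive-dimensional. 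The appeal to strict concavity of $\psi_{[n-1]}$ alone cannot rule this out, since a single strictly concave function of $2^{n-1}$ variables has positive-dimensional level sets as soon as $n\geq 2$. So the argument as written is a program, not a proof.

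The paper avoids the entire difficulty by using more of the inductive hypothesis: the given data include $H(X_I)$ for \emph{every} proper subset $I\subsetneq[n]$, so by induction \emph{all} the $2^n-2$ non-trivial marginal laws of $P$ (not just the one on $\{1,\dots,n-1\}$) are determined up to finite ambiguity. Fixing a choice of all these marginals, the Marginal Theorem of Kellerer (for binary variables one can see this directly in the Fourier/parity coordinates $\mathbb{E}[\prod_{i\in I}(2X_i-1)]$: the marginals of order $\leq n-1$ determine all of them except the top one) leaves exactly one free coordinate $x$. The single remaining datum $H(X_1,\dots,X_n)$ is then, via the fundamental equation \eqref{fundeqinfo}, a positive combination of $h$ applied to affine functions of $x$, hence strictly concave in $x$, hence at most two-to-one. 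This is exactly the one-variable concavity argument your Lemma used for $n=2$, and it replaces your $2^{n-1}$-dimensional Jacobian analysis entirely. If you want to salvage your write-up, replace ``fix one admissible interior $P'$'' by ``fix an admissible choice of all $(n-1)$-dimensional marginals'' and the rest collapses to the one-parameter case.
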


\begin{proof} 
From the preceding section, it is sufficient to establish the theorem for the functions $H_k(X_I)$, where $k$ goes from $1$ to $n$, and $I$ describes all the subsets of cardinality $k$ in $[n]$.\\
The proof is made by recurrence on $n$. We just have established the cases $n=1$ and $n=2$.\\
For $n> 2$ we use the fundamental formula
\begin{equation}
H(X_1,...,X_n)=H(X_1,...,X_{n-1})+(X_1,...,X_{n-1}).H(X_n).
\end{equation}
By the Marginal Theorem of H.G. Kellerer \cite{Kellerer1964} (see also F. Matus \cite{Matus1988}), knowing
the $2^{n}-2$ non-trivial marginal laws of $P$, there is only one resting dimension, thus one of the coordinates $p_i$ only is free, that we denote $x$. Suppose that all the values of the $H_k$ are known, the hypothesis of recurrence tells that all the non-trivial marginal laws are known from the values of the entropy, up to a finite ambiguity. We fix a choice for these marginals. The above fundamental formula expresses $H(X_1,...,X_n)$ as a function $f(x)$ of $x$, which is a linear combination with positive coefficients
of the entropy function $h$ applied to various affine expressions of $x$; therefore $f$ is a strictly concave function of one variable, then only two values at most are possible for $x$ when the value $f(x)$ is given.
\end{proof} 

\noindent The group $\{\pm 1\}^{n}$ of order $2^{n}$ that exchanges in all possible manners the values of the binary variables $X_i,i=1,...,X_n$ gives a part of the finite ambiguity. However, even for $n=2$, the ambiguity is not associated to the action of a finite group, contrarily to what was asserted in \cite{Baudot2015a} section $1.4$. What replaces the elements of a group are partially defined operations of permutations that deserve to be better understood.\\

\begin{thm} \label{thm_grad}
The functions $\eta_I$, resp. the pure $I_k(X_I)$, resp. the pure $H_k(X_I)$, have linearly independent gradients in an open dense set of the simplex $\Delta([n])$ of probabilities on $E_X$.
\end{thm}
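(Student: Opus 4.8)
\section*{Proof proposal for Theorem \ref{thm_grad}}

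The plan is to reduce the statement to a concrete linear-algebra condition, prove that condition first for binary variables by leveraging Theorem \ref{thm_coord}, and finally transfer it to arbitrary cardinalities $N_i$ by perturbing a binary law into the interior of the large simplex. First, since the three families $\eta_I$, pure $I_k$ and pure $H_k$ are related by invertible linear transformations with constant (integer) coefficients --- the inclusion--exclusion and M\"obius relations of the preceding propositions --- their gradients span the same subspace of the cotangent space at every point, so it suffices to treat the joint entropies $\{H(X_I;P):\emptyset\neq I\subset[n]\}$. I would then compute the differential of a joint entropy on the tangent space $\{v:\sum_{x}v(x)=0\}$ of $\Delta([n])$: writing $P_I$ for the marginal on the $I$-coordinates, a short calculation using $\frac{d}{dp}(-p\log_2 p)=-\log_2 p-\frac{1}{\ln 2}$ gives $dH(X_I;P)(v)=-\sum_{x\in E}v(x)\log_2 P_I(x|_I)$, the additive constant dropping out because $\sum_x v(x)=0$. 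Hence the $2^{n}-1$ gradients are linearly independent at $P$ if and only if the functions $g_I(x):=\log_2 P_I(x|_I)$, for $\emptyset\neq I\subset[n]$, together with the constant function $\mathbf 1$, are linearly independent in $\mathbb{R}^{E}$.

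For binary variables $E=\{0,1\}^{n}$ has exactly $2^{n}$ points, so the reformulated condition is precisely the non-vanishing of the $2^{n}\times 2^{n}$ determinant with rows indexed by $x\in\{0,1\}^{n}$ and columns the $g_I$ together with $\mathbf 1$. I would obtain this generically from Theorem \ref{thm_coord}: there the entropy map $\Phi\colon P\mapsto(H(X_I;P))_I$ is a real-analytic, finite-to-one map between the interior of $\Delta([n])$ and $\mathbb{R}^{2^{n}-1}$, two spaces of equal dimension $2^{n}-1$. A finite-to-one real-analytic map between equidimensional manifolds cannot have identically singular Jacobian, since on the open set of maximal rank the constant-rank theorem would otherwise force positive-dimensional (hence infinite) fibres. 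Therefore $d\Phi$ has full rank somewhere, which is exactly the statement that the above determinant is not identically zero; by the identity theorem for real-analytic functions on the connected binary simplex it is then nonzero on an open dense subset.

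The remaining, and most delicate, point is the passage to arbitrary $N_i$, where binary configurations now sit on the \emph{boundary} of $\Delta([n])$ and the entropy gradients blow up there; this is exactly why I work with the log-marginal reformulation, which stays finite, rather than with the gradients directly. Fix an interior binary law $Q$ at which the $2^{n}\times 2^{n}$ determinant above is nonzero, and set $P_\epsilon=(1-\epsilon)Q+\epsilon U$ with $U$ uniform on $E$, so that $P_\epsilon$ is interior for $\epsilon>0$. For every $x\in\{0,1\}^{n}$ one has $P_{\epsilon,I}(x|_I)\to Q_I(x|_I)>0$ as $\epsilon\to0$, so the $2^{n}\times 2^{n}$ submatrix of $[\,g_I\mid\mathbf 1\,]$ indexed by the rows $x\in\{0,1\}^{n}\subset E$ converges to the nonsingular binary matrix; by continuity of the determinant it is nonsingular for all small $\epsilon>0$. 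Nonsingularity of this submatrix forces the $2^{n}$ functions $\{g_I\}\cup\{\mathbf 1\}$ to be linearly independent in $\mathbb{R}^{E}$, hence the gradients of the $H(X_I;P)$ are linearly independent at the interior point $P_\epsilon$. Finally, linear independence of these gradients is the non-vanishing of some maximal minor of the Jacobian, a real-analytic function on the connected interior of $\Delta([n])$; being nonzero at one point, each such minor is nonzero on an open dense set, and the union over minors gives the claimed open dense set on which the gradients are independent. I expect the boundary behaviour in this binary-to-general transfer to be the main obstacle, which the perturbation argument together with the log-marginal reformulation is designed to circumvent.
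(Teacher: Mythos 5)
Your proposal is correct and follows essentially the same route as the paper's own (very terse) proof: reduce to the pure entropies, invoke Theorem \ref{thm_coord} to get a generically nonsingular Jacobian in the binary case, pass to general cardinalities by choosing two values in each $E_j$, and conclude by real-analyticity that the relevant minor is nonzero on an open dense set. Your write-up supplies the details the paper leaves implicit --- the log-marginal formula for the differentials, the constant-rank argument turning ``finite-to-one'' into ``somewhere nonsingular'', and especially the perturbation $P_\epsilon=(1-\epsilon)Q+\epsilon U$ handling the fact that binary laws lie on the boundary of the large simplex --- all of which are genuine gaps in the two-line argument given in the paper.
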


\begin{proof} 
Again, it  is sufficient to treat the case of the higher pure entropies.\\
We write $N=N_1...N_n$. The elements of the simplex $\Delta(N)$ are described by vectors $(p_1,...,p_N)$ of real numbers that are positive or zero, with a sum equal to $1$. The expressions $H_k(X_J)$ are real analytic functions in the interior of this simplex. The number of these functions is $2^{n}-1$.
The dimension $N-1$ of the simplex is larger (and equal only for the fully binary case), then to establish the result, we have to find a minor of size $2^{n}-1$ of the Jacobian matrix of the partial derivatives of the entropy functions with respect to the variables $p_i,i=1,...,N-1$ that is non identically zero. For any index $j$ between $1$ and $n$ choose two different values of the set $E_j$. Then apply
the theorem $2$.
\end{proof} 

\begin{remark}
\noindent This proves the fact mentioned in $1.4$ of $[3]$. 
\end{remark}

\noindent Te Sun Han established that the quantities $I_k(X_I)$ for $k\geq 2$ are functionally independent \cite{Han1975,Han1978}.\\
\begin{remark}
The formulas of $H_k(X_I)$, then of $I_k(X_i)$ and $\eta_I$, extend analytically to the open cone $\Gamma([n])$ of vectors with positive coordinates. On this cone we pose
\begin{equation}
H_0(P)=I_0(P)=\eta_0(P)=\sum_{i=1}^{n}p_i.
\end{equation}
This is the natural function to consider to account for the empty subset of $[n]$.\\
Be careful that the functions $K_k$ for $k> 0$ are no more positive in the cone $\Gamma([n])$, because the function $-x\ln x$ becomes negative for $x> 0$. In fact we have, for $\lambda\in ]0,\infty [$, and $P=(p_1,...,p_n)\in \Gamma([n])$,
\begin{equation}
H_k(\lambda P)=\lambda H_k(P)-\lambda \log_2 \lambda H_0(P).
\end{equation}
The above theorems extend to the prolonged functions to the cone, by taking into account $H_0$.
\end{remark}

\noindent Notice further properties of information quantities:\\

\noindent For $I_k$, due to the constraints on $I_2$ and $I_3$, see Matsuda \cite{Matsuda2001}, we have for any pair of variables
\begin{equation}
0\leq I_2(X_1,X_2) \leq min \{H(X_1), H(X_2)\},
\end{equation}
and any triple $X_1,X_2,X_3$:
\begin{equation}\label{infomutineq}
-min \{H(X_1), H(X_2), H(X_3)\} \leq I_3(X_1,X_2,X_3) \leq min \{H(X_1), H(X_2), H(X_3)\}.
\end{equation}
It could be that interesting inequalities also exist for $k\geq 4$, but it seems that they are unknown.\\

Contrarily to $H_k$, the behavior of the function $I_k$ is not the same for $k$ even and $k$ odd.
In particular, as functions of the probability $P_X$, the odd functions $I_{2m+1}$, for instance $I_1=H_1=H$, or $I_3$ (ordinary synergy), have properties of the type of pseudo-concave functions (in the sense of \cite{Baudot2015a}), and the even functions $I_{2m}$, like $I_2$ (usual mutual information) have properties of the type of convex functions (see \cite{Baudot2015a} for a more precise statement). Note that this accords well with the fact that the total entropy $H(X)$, which is concave, is the alternate sum of the $I_k(X_I)$ over the subsets $I$ of $[n]$,
with the sign $(-1)^{k-1}$ (cf. appendix \ref{Appendix: Bayes free energy}).\\
\indent Another difference is that each odd function $I_{2m+1}$ is an information co-cycle, in fact a co-boundary if $m\geq 1$ (in the information co-homology defined in \cite{Baudot2015a}), but each odd function $I_{2m+1}$ is a simplicial co-boundary in the ordinary sense, and not an information co-cycle.
\begin{remark}
From the quantitative point of view, we have also considered and quantified on data the pseudo-concave function $(-1)^{k-1}I_k$ (in the sense of \cite{Baudot2015a}) as a measure of available information in the total system and  considered total variation along paths. Although such functions are sounding and appealing, we have chosen to illustrate here only the results using the function $I_k$ as they respect and generalize the usual multivariate statistical correlation structures of the data and provide meaningful data interpretation of positivity and negativity, as it will become obvious in the following application to data. However, what really matters is the full landscape of information sequences, showing that information is not well described by a unique number,  but rather by a collection of numbers indexed by collections of joint variables.
\end{remark}

\subsection{Application to Gene expression data - detection of cell types and gene modules}
\label{Computation of Simplicial Information Cohomology}

The developments and tests of the estimation of simplicial information topology on data is made on a genetic expression dataset of two cell types obtained as described in the section material and methods \ref{Genetic expression data}. The result of this quantification of gene expression is represented in "Heat maps" and allows two kinds of analysis:
\begin{itemize}
	\item The analysis with genes as variables: in this case the "Heat maps" correspond to $(m,n)$ matrices $D$ (presented in the section \ref{Probability estimation}) together with the labels (population A or population B) of the cells. The data analysis consists in the detection  of gene modules. 
	\item  The analysis with cells (neurons) as variables: in this case the "Heat maps" correspond to the transposed  matrices $D^T$ (presented in the Figure \ref{figure_Supp_Result_dopa_nondopa_infopath}) together with the labels (population A or population B) of the cells. The data analysis consists in the detection of cell types. 
\end{itemize}

\subsection{Mutual-information negativity, clusters and links} \label{Mutual-information negativity}

\begin{figure} [!h]
	\centering
	\includegraphics[height=14cm]{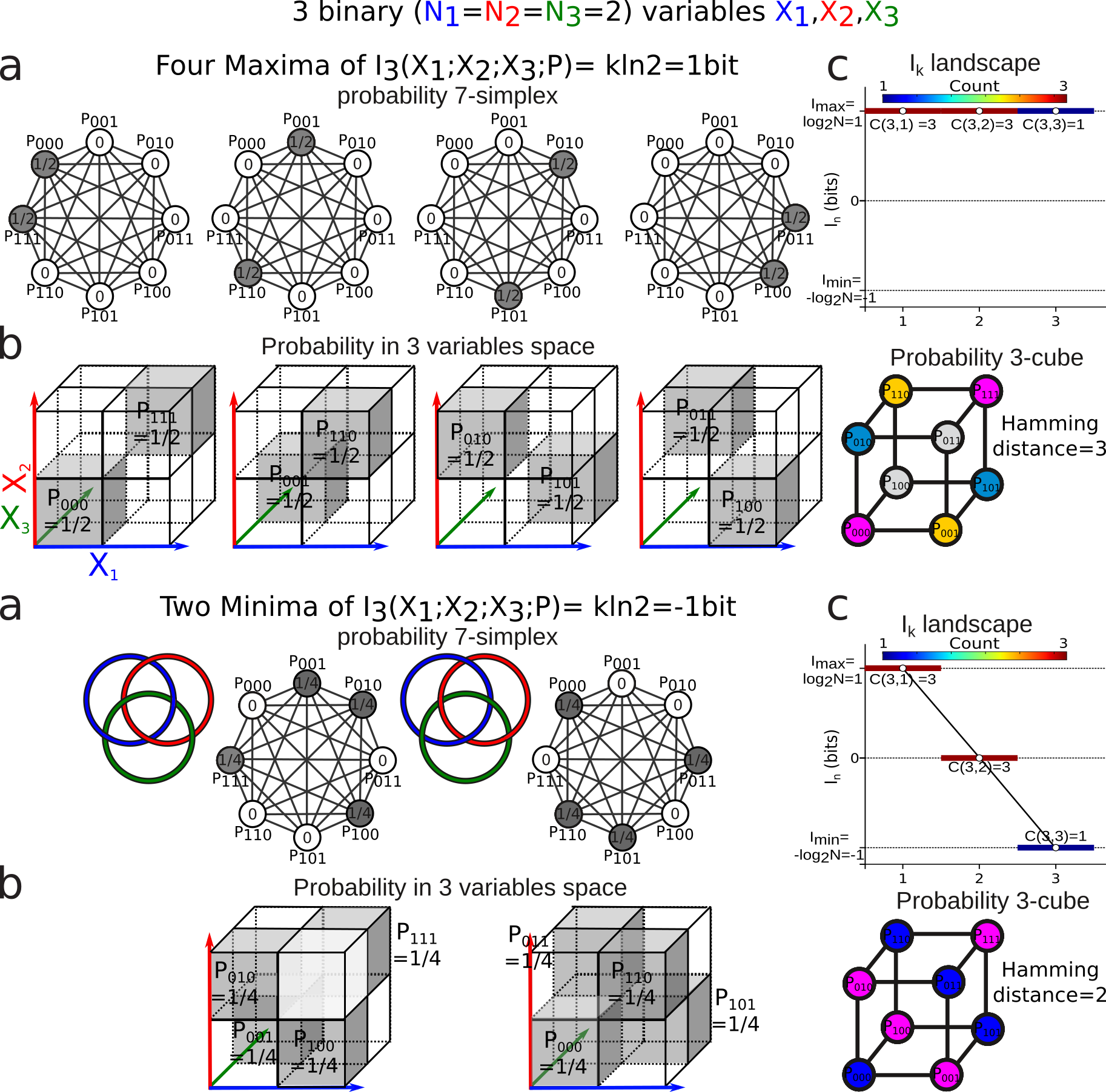}
	\caption{\textbf{Example of the 4 maxima (left panel) and of the 2 minima of $I_3$ for 3 binary variables}  \textbf{a,} informal representation of the 7-simplex of probability associated with 3 binary variables. The values of the atomic probabilities that achieve the extremal configurations are noted in each vertex. \textbf{b,} Representation of the associated probabilities in the data space of the 3-variables for these extremal configurations. \textbf{c,} Information $I_k$ landscapes of these configurations (top). Representation of these extremal configurations on the probability cube. The colors represents the non-nul atomic probability of each extremal configuration (bottom).}	,
	\label{figure_Info_negativity}
\end{figure}
As information negativity posed problems of interpretation as recalled in introduction and conclusion, we now illustrate what the negative and positive information values quantify on data with theoretical (taking the case of the binary variable case previously exposed) and empirical N-ary case examples of gene expression.
Let us consider three ordinary biased coins  $X_1, X_2, X_3$, we will denote by $0$ and $1$ their individual states and by $a,b,c,...$ the probabilities of their possible configurations three by three; more precisely:
\begin{align}
	a=p_{000},\quad b=p_{001},\quad c=p_{010},\quad d=p_{011},\\
	e=p_{100},\quad f=p_{101},\quad g=p_{110},\quad h=p_{111}.
\end{align}
We have
\begin{equation}
	a+b+c+d+e+f+g+h=1.
\end{equation}

\noindent The following identity is easily deduced from the definition of $I_3$ (cf. \ref{nonlocfundeqinfo}):
\begin{equation}
I(X_1;X_2;X_3)=I(X_1;X_2)-I(X_1;X_2|X_3).
\end{equation}
Of course the identities obtained by changing the indices are also true. This identity interprets the information shared by three variables as a measure of
the lack of information in conditioning. We notice a kind of intrication of $I_2$: conditioning can increase the information, which interprets rightly the negativity
of $I_3$. Another useful interpretation of $I_3$ is given by
\begin{equation}
I(X_1;X_2;X_3)=I(X_1;X_3)+I(X_2;X_3)-I((X_1,X_2);X_3).
\end{equation}
In this case negativity is interpreted as a synergy, i.e. the fact that two variables can
give more information on a third variable than the sum of the two separate
information.\\

\noindent Several inequalities are easy consequences of the above formulas and of
the positivity of mutual information of two variables (conditional or not), as shown in \cite{Matsuda2001}.\\
\begin{equation}
I(X_1;X_2;X_3)\leq I(X_1;X_2),
\end{equation}
\begin{equation}
I(X_1;X_2;X_3)\geq -I(X_1;X_2|X_3),
\end{equation},
and the analogs that are obtained by permuting the indices.\\
Let us remark that this immediately implies the following assertions:\\
1) when two variables are independent the information of the three is negative or zero;\\
2) when two variables are conditionally independent with respect to the third, the information of the
three is positive or zero.\\
\noindent By using the positivity of the entropy (conditional or not), we also have:
\begin{equation}
I(X_1;X_2)\leq min(H(X_1),H(X_2)),
\end{equation}
\begin{equation}\label{minoration}
I(X_1;X_2|X_3)\geq -min (H(X_1|X_3),H(X_2|X_3))\geq -min (H(X_1),H(X_2)).
\end{equation}
We deduce from here
\begin{equation}\label{firstinequality}
I(X_1;X_2;S_3)\leq min (H(X_1),H(X_2),H(X_3)),
\end{equation}
\begin{equation}\label{secondinequality}
I(X_1;X_2;X_3)\geq -min (H(X_1),H(X_2),H(X_3)).
\end{equation}
In the particular case of three binary variables, this gives
\begin{equation}
1\geq I(X_1;X_2;X_3)\geq -1.
\end{equation}
\begin{propos} 
	The absolute maximum of $I_3$, equal to $1$, is attained only in the four cases of three identical or opposite unbiased variables. That is $H(X_1)=H(X_2)=H(X_3)=1$, and $X_1=X_2$ or $X_1=1-X_2$, and
	$X_1=X_3$ or $X_1=1-X_3$, that is $a=h=1/2$ or $b=g=1/2$ or $c=f=1/2$ or $d=e=1/2$
	and in each case all the other variables are equal to $0$ (cf. Figure \ref{figure_Info_negativity} a-c).
\end{propos}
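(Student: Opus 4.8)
The plan is to exploit the inequality \eqref{firstinequality} just established, together with the elementary fact that a binary variable has entropy at most $1$, and then to determine exactly which of the inequalities in the derivation must become equalities when $I_3$ attains the value $1$. Since $H(X_i)\leq 1$ for each unbiased-or-biased binary variable, \eqref{firstinequality} gives $I(X_1;X_2;X_3)\leq \min(H(X_1),H(X_2),H(X_3))\leq 1$, so the hypothesis $I_3=1$ immediately forces $\min(H(X_1),H(X_2),H(X_3))=1$, hence $H(X_1)=H(X_2)=H(X_3)=1$; that is, all three variables must be unbiased.

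Next I would extract the pairwise consequences of the saturation. Writing $I(X_1;X_2;X_3)=I(X_1;X_2)-I(X_1;X_2|X_3)$ and using the positivity $I(X_1;X_2|X_3)\geq 0$ together with $I(X_1;X_2)\leq\min(H(X_1),H(X_2))=1$, the chain $1=I_3\leq I(X_1;X_2)\leq 1$ is squeezed, so both $I(X_1;X_2)=1$ and $I(X_1;X_2|X_3)=0$. Applying the two other symmetric decompositions of $I_3$ obtained by permuting the indices yields in the same way $I(X_1;X_3)=I(X_2;X_3)=1$, with the corresponding conditional terms vanishing.

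The key step is then to read off the deterministic structure forced by a maximal pairwise mutual information. Since $I(X_1;X_2)=H(X_1)-H(X_1|X_2)=1=H(X_1)$, we get $H(X_1|X_2)=0$; because $X_2$ is unbiased both of its values carry positive weight, so $X_1$ is a deterministic function of $X_2$, and as $H(X_1)=1$ this function must be a bijection of $\{0,1\}$. Hence $X_1=X_2$ or $X_1=1-X_2$, and likewise $X_1=X_3$ or $X_1=1-X_3$ from $I(X_1;X_3)=1$. The four sign choices give exactly the four laws supported on two antipodal vertices of the cube: $X_1=X_2=X_3$ gives $a=h=1/2$; $X_2=X_1,\ X_3=1-X_1$ gives $b=g=1/2$; $X_2=1-X_1,\ X_3=X_1$ gives $c=f=1/2$; and $X_2=X_3=1-X_1$ gives $d=e=1/2$, all remaining atoms being zero. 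I would close with the converse check that each of these four laws indeed yields $I_3=1$ (all marginals and pairwise joints are uniform with entropy $1$ and the triple entropy is also $1$, so $I_3=3-3+1=1$), confirming that the value $1$ is attained precisely in these configurations.

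The main obstacle is not any single computation but the careful bookkeeping of the equality cases: one must justify that maximal $I(X_1;X_2)$ between two unbiased binary variables is equivalent to a bijective (identity or complement) functional dependence, which rests on the equality case $H(X_1|X_2)=0$ of conditional-entropy positivity and on the strict concavity of $h$ underlying $I(X_1;X_2)\leq\min(H(X_1),H(X_2))$, and then to verify that the three pairwise conditions are mutually consistent and collectively single out exactly the four listed laws rather than some larger family.
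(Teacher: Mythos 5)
Your argument is correct and follows essentially the same route as the paper's proof: force $H(X_1)=H(X_2)=H(X_3)=1$ and $I(X_i;X_j)=1$ for every pair from the saturation of the bound $I_3\leq\min_i H(X_i)$, deduce $H(X_i|X_j)=0$ and hence a bijective functional dependence $X_i=X_j$ or $X_i=1-X_j$, and enumerate the four resulting laws. Your write-up merely makes explicit two points the paper asserts without detail (the squeeze yielding $I(X_i;X_j)=1$ and the converse verification), so no further comment is needed.
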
 

\begin{proof}
	First it is evident that the example gives $I_3=1$. Second,
	consider three variables such that $I(X_1;X_2;X_3)=1$. We must have
	$H(X_1)=H(X_2)=H(X_3)=1$, and also $I(X_i;X_j)=1$ for any pair $(i,j)$, thus
	$H(X_i,X_j)=1$, $H(X_i|X_j)=0$, and the variable $X_i$ is a deterministic
	function of the variable $X_j$, which gives $X_i=X_j$ or $X_i=1-X_j$.
\end{proof}

\begin{propos}
	The absolute minimum of $I_3$, equal to $-1$, is attained only in the two cases of three two by two independent unbiased variables satisfying $a=1/4, b=0, c=1/4, d=0, e=1/4, f=0, g=1/4, h=0$, or $a=0,
	b=1/4, c=0, d=1/4, e=0, f=1/4, g=0, h=1/4$. These cases correspond to the two
	borromean links, the right one and the left one (cf. Figure \ref{figure_Info_negativity} d-f).
\end{propos}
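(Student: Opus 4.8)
The plan is to determine the equality cases of the chain of inequalities that was used to establish the lower bound \eqref{secondinequality}, and then to translate those equality conditions back into constraints on the eight atomic probabilities $a,\dots,h$. Recall that the bound $I(X_1;X_2;X_3)\geq -1$ for three binary variables was obtained by concatenating the identity
\[
I(X_1;X_2;X_3)=I(X_1;X_2)-I(X_1;X_2|X_3)\geq -I(X_1;X_2|X_3),
\]
the estimate \eqref{minoration}, read as an upper bound $I(X_1;X_2|X_3)\leq \min(H(X_1|X_3),H(X_2|X_3))\leq \min(H(X_1),H(X_2))$, and the bound $\min(H(X_1),H(X_2))\leq 1$ valid for binary variables. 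First I would observe that $I_3=-1$ forces every inequality in this chain to be an equality. This yields four conditions: $I(X_1;X_2)=0$, so $X_1$ and $X_2$ are independent; $I(X_1;X_2|X_3)$ attains its maximal value $\min(H(X_1),H(X_2))$; conditioning on $X_3$ does not lower the relevant marginal entropy; and $\min(H(X_1),H(X_2))=1$.

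Next, since $I_3$ is symmetric under all permutations of $X_1,X_2,X_3$, I would run the identical argument with each of the three variables in turn playing the role of the conditioning variable. Collecting the resulting equalities gives the two structural facts announced in the proposition: all three marginals are unbiased, $H(X_1)=H(X_2)=H(X_3)=1$, and the variables are pairwise independent, $I(X_1;X_2)=I(X_1;X_3)=I(X_2;X_3)=0$.

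The heart of the argument is then to exploit the equality $I(X_1;X_2|X_3)=1$. Writing $I(X_1;X_2|X_3)=\sum_{x_3}X_{3*}(P)(x_3)\,I(X_1;X_2|X_3=x_3)$ with $X_3$ unbiased, and using that each slice term is bounded by $\min(H(X_1|X_3=x_3),H(X_2|X_3=x_3))\leq 1$, I would deduce that each conditional slice must itself be extremal: within the event $X_3=x_3$ the variables $X_1$ and $X_2$ are each unbiased and each is a deterministic function of the other, hence $X_1=X_2$ or $X_1=1-X_2$ on that slice. There are four ways to choose these relations on the two slices $X_3=0$ and $X_3=1$. I would then discard the two choices in which the same relation holds on both slices, since they force $X_1=X_2$ (respectively $X_1=1-X_2$) globally, contradicting $I(X_1;X_2)=0$; the two surviving choices, where the relation flips between the slices, produce exactly two laws in which four atomic probabilities equal $1/4$ and the other four vanish, namely the two pairwise-independent unbiased configurations of the statement, and a direct substitution into the definition of $I_3$ confirms $I_3=-1$ together with the left/right Borromean interpretation.

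The main obstacle I anticipate is the slice-by-slice extremality step: one must argue cleanly that a convex combination of terms each at most $1$ equals $1$ only if every positively weighted term equals $1$, and that the per-slice maximum $I(X_1;X_2|X_3=x_3)=1$ forces both conditional marginals to be unbiased and perfectly coupled. The remainder is bookkeeping, but care is needed to keep the three symmetric sets of equalities simultaneously in force, since it is their conjunction — full pairwise independence together with global unbiasedness — that eliminates the degenerate cases and singles out precisely the two Borromean links.
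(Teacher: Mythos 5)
Your proof is correct, and its second half takes a genuinely different route from the paper's. Both arguments begin identically: from $I_3=-1$ you force equality throughout the chain $I_3=I(X_1;X_2)-I(X_1;X_2|X_3)\geq -\min(H(X_1|X_3),H(X_2|X_3))\geq -\min(H(X_1),H(X_2))\geq -1$ and its permuted versions, obtaining unbiased marginals and pairwise independence. From there the paper proceeds algebraically: it computes the joint entropy $H(X_1,X_2,X_3)=2$ from the alternating-sum identity, rewrites $0$ as a sum of four nonnegative terms of the form $-u\lg u-v\lg v$ with $u+v=1$ (using $4a+4b=1$, etc., from pairwise independence), concludes $ab=cd=ef=gh=0$, and repeats over permutations to get a system of vanishing products resolved by a short case analysis. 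You instead exploit the equality $I(X_1;X_2|X_3)=1$ directly: since $X_3$ is unbiased, the convex combination over the two slices forces each conditional slice to be extremal, so on each slice $X_1,X_2$ are unbiased and deterministically coupled ($X_1=X_2$ or $X_1=1-X_2$), and pairwise independence rules out the two choices where the coupling agrees across slices. Both are valid; your version buys a more transparent structural picture (the XOR relation flipping between the slices is exactly the Borromean phenomenon) and avoids the bookkeeping of vanishing products, at the cost of having to justify carefully the two extremality lemmas you flag (a convex combination of terms $\leq 1$ with positive weights equals $1$ only if every term does, and a slice with $I(X_1;X_2|X_3=x_3)=1$ forces unbiased, deterministically coupled conditional marginals) — both of which are straightforward for binary variables. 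One caveat on the endpoint: the supports your argument produces, $\{p_{000},p_{011},p_{101},p_{110}\}$ and $\{p_{001},p_{010},p_{100},p_{111}\}$ (i.e., $\{a,d,f,g\}$ and $\{b,c,e,h\}$), agree with what the paper's own proof derives, but not with the values listed in the proposition statement, which as written put all mass on $X_3=0$; that is evidently a labeling slip in the statement, not an error in your argument.
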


\begin{proof} 
	First it is easy to verify that the examples give $I_3=-1$. Second consider three variables such that $I(X_1;X_2;X_3)=-1$. The inequality \eqref{secondinequality} implies $H(X_1)=H(X_2)=H(X_3)=1$, and the
	inequality \eqref{minoration} shows that $H(X_i|X_j)=1$ for every pair of different indices, so $H(X_1,X_2)= H(X_2,X_3)=H(X_3,X_1)=2$, and the three variables are two by two independent. Consequently the total entropy $H_3$ of $(X_1,X_2,X_3)$, given by $I_3$ minus the sum of individual entropies plus the
	sum of two by two entropies is equal to $2$. Thus
	\begin{equation}\label{sumlg}
	8=-4a\lg a-4b\lg b-4c\lg c-4d\lg d- 4e\lg e-4f\lg f-4g\lg g-4h\lg h.
	\end{equation}
	But we also have
	\begin{equation}
	8=8a+8b+8c+8d+8e+8f+8g+8h,
	\end{equation}
	that is
	\begin{equation}\label{sumprobas}
	8=4a\lg 4+4b\lg 4+4c\lg 4+4d\lg 4+4e\lg 4+4f\lg 4+4g\lg 4+4h\lg 4.
	\end{equation}
	Now we subtract \eqref{sumprobas} from \eqref{sumlg}, we obtain
	\begin{equation}\label{sumlg2}
	8=-4a\lg 4a-4b\lg 4b-4c\lg 4c-4d\lg 4d- 4e\lg 4e-4f\lg 4f-4g\lg 4g-4h\lg 4h.
	\end{equation}
	However each of the four quantities $-4a\lg 4a-4b\lg 4b,-4c\lg 4c-4d\lg 4d,\\-
	4e\lg 4e-4f\lg 4f,-4g\lg 4g-4h\lg 4h$ is $\geq0$ because each of the four sums
	$4a+4b,4c+4d,4e+4f,4g+4h$ is equal to $1$, so each of these quantities is equal
	to zero, which happens only if $ab=cd=ef=gh=0$. But we can repeat the argument
	with any permutation of the three variables $X_1,X_2,X_3$. We obtain nothing
	new from the transposition of $X_1$ and $X_3$. From the transposition of $X_1$
	and $X_3$ we obtain $ae=bf=cg=dh=0$. From the transposition of $X_2$ and $X_3$,
	we obtain $ac=bd=eg=fh=0$. So from the cyclic permutation $(1,2,3)$ (resp.
	$(1,3,2)$, we get $ae=bf=cg=dh=0$ (resp. $ac=bd=eg=fh=0$).) If $a=0$ this gives
	necessarily $b,e,c$ nonzero, thus $d=f=g=0$, and $h\neq 0$, and if $a\neq 0$
	this gives $b=e=c=0$, thus $d,f,g$ nonzero and $h=0$.
\end{proof}

\noindent Figure \ref{figure_Info_negativity} illustrates the probability configurations giving rise to the maxima and minima of $I_3$ for 3 binary variables.\\

\noindent In the much more complex case of gene expressions, the statistical analysis shown in \cite{Tapia2018} exhibited also a combination of positivity and negativity of the information quantities $I_k;k\geq 3$. In this analysis, the minimal negative information configurations provide a clear example of purely emergent and collective interactions analog to Borromean links in topology, since it cannot be detected from any pairwise investigation or 2-dimensional observations. In these Borromean links the variables are pairwise independent but dependent at 3. In general $I_k$ negativity detects such effects of their projection on lower dimensions, this illustrates the main difficulty when going from dimension $2$ to $3$ in information theory. The example given in Figure \ref{figure_Info_negativity} provides a simple example of this dimensional effect in the data space: the alternated clustering of the data corresponding to $I_3$ negativity cannot be detected by the projections onto whichever subspace of pair of variables, since the variables are pairwise independent. For N-ary variables the negativity becomes much more complicated, with more degeneracy of the minima and maxima of $I_k$.

\begin{figure} [!h]
	\centering
	\includegraphics[height=9.5cm]{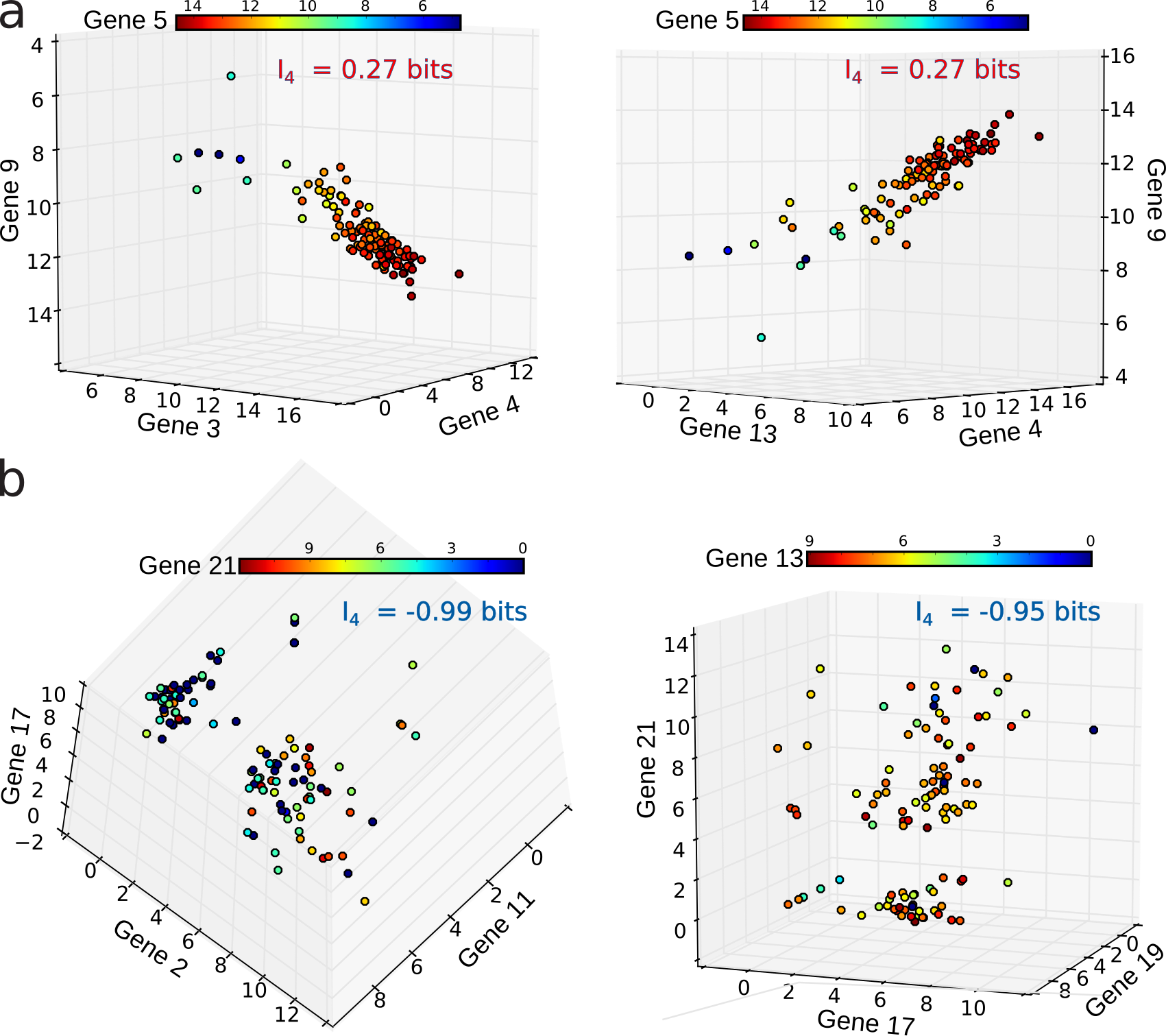}
	\caption{\textbf{Examples of some of 4-modules (quaduplets) with the highest (positive) and lowest (negative) $I_4$ of gene expression represented in the data space. }  \textbf{a,} Two 4-modules of genes sharing among the highest positive $I_4$ of the gene expression data set (cf. \ref{Genetic expression data}). The data are represented in the data space of the measured expression of the 4 variables-genes. The fourth dimension-variable is color coded. \textbf{b,}  Two 4-modules of genes sharing among the lowest negative $I_4$. All the modules were found to be significant according to the dependence test introduced in section \ref{k-independence test}, except the module $\{17,19,21,13\}$. The identified extremal modules (different) give similar patterns of dependences \cite{TapiaPacheco2017,Tapia2018}.}
	\label{figure_negativity_positivity}
\end{figure}

In order to illustrate the theoretical examples of Figure \ref{figure_Info_negativity} on real data, considering the data set of gene expression (matrix $D$), we plotted some quadruplets of genes sharing some of the highest (positive) and lowest (negative) $I_4$ values in the data space of the variables (Figure \ref{figure_negativity_positivity}). Figure \ref{figure_negativity_positivity} shows that in the data space, $I_k$ negativity identifies the clustering of the data points, or in other words, the modules (k-tuples) for which the data points are segregated into condensate clusters.  As expected theoretically, $I_k$ positivity identifies co-variations of the variables, even in cases of non-linear relations, as shown by Reshef and colleagues \cite{Reshef2011} in the pairwise case. It can be easily shown in the pairwise case that $I_k$ positivity generalizes the usual correlation coefficient to non-linear relations.  As a result, the interpretation of the negativity of $I_k$ is that it provides a signature and quantification of the variables that segregate or differentiate the measured population.

\subsection{Cell type detection - comparison with previous MaxEnt studies}

\paragraph{Example of cell type recognition with a low sample size $m=41$, dimension $n=20$, and graining $N=9$.} \label{Example of landscapes}

\begin{figure} [!h]
	\centering
	\includegraphics[height=17.7cm]{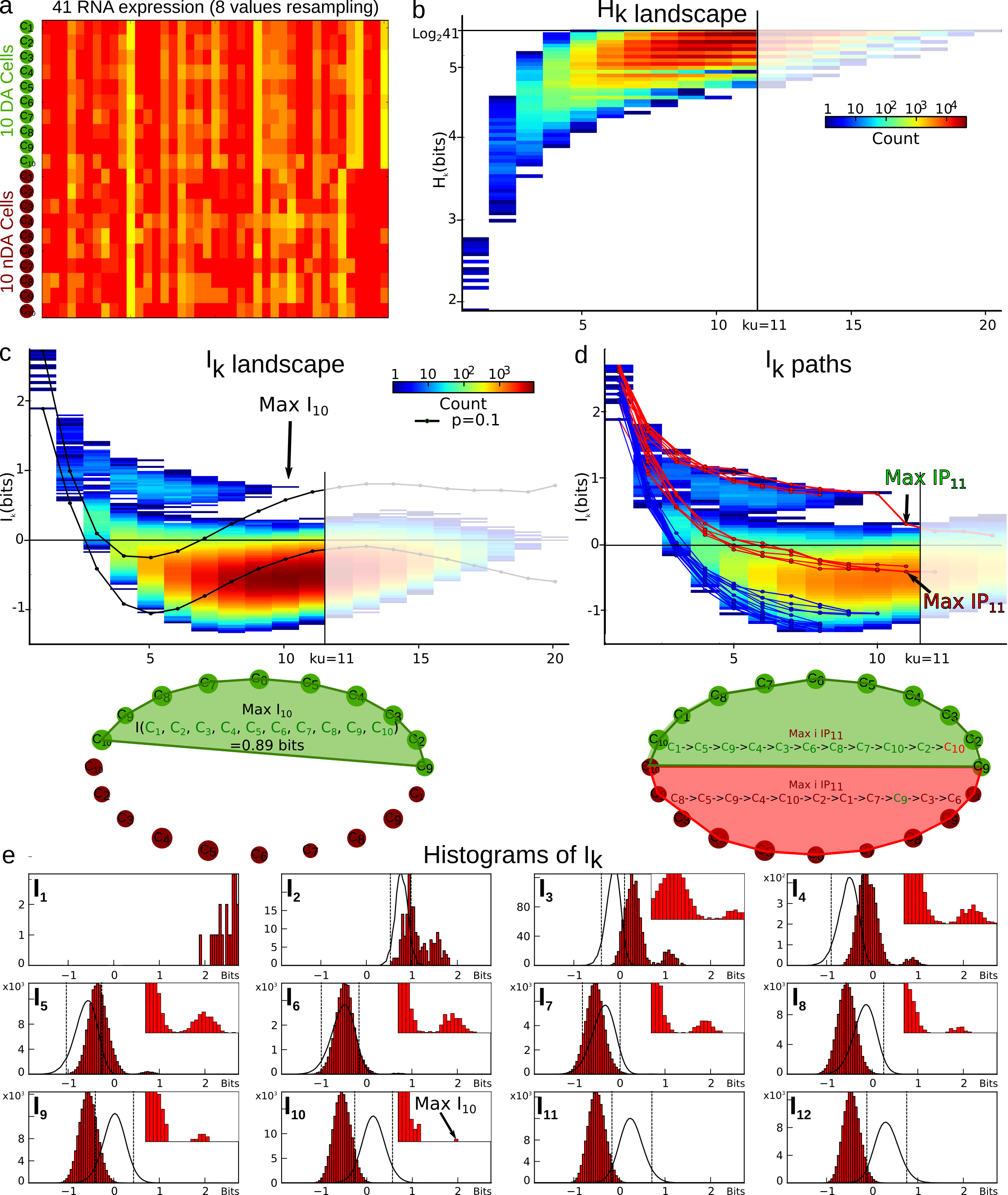}
	\caption{\textbf{Example of a $I_k$ landscape and path analysis. a,} heatmap (transpose of matrix $D$) of $n=20$ neurons with $m=41$ genes.\textbf{b,} the corresponding $H_k$ landscape. \textbf{c,} the corresponding $I_k$ landscape \textbf{d,} maximum (in red) and minimum (in blue) $I_k$ information paths. \textbf{e,}  histograms of the distributions of $I_k$ for $k=1,..,12$. See text for details.}
	\label{figure_Supp_Result_dopa_nondopa_infopath}
\end{figure}

As introduced in previous section \ref{Mutual-information negativity}, the k-tuples presenting the highest and lowest information ($I_k$) values are the most biologically relevant modules and identify the variables that are the most dependent or synergistic (respectively "entangled"). We call information landscape the representation of the estimation of all $I_k$ values for the whole simplicial lattice of k-subfaces of the $n$-simplex of variables ranked by their $I_k$ values in ordinate.  In general the null hypothesis against whom are tested the data is the maximal uniformity and independence of the variables $X_i,i=1,...,n$. Below the undersampling dimension $k_u$ presented in methods \ref{Undersampling dimension}, this predicts the following standard sequence for any permutation of the variables $X_{i_1},...,X_{i_n}$;
\begin{equation}
H_1=\log_2 r,..., H_k=k\log_2 r,...
\end{equation}
that is linearity (with $N_1=...=N_n=r$).\\
What we observed in the case where independence is confirmed, for instance with the chosen genes of the population B (NDA neurons) in \cite{Tapia2018}, is linearity up to the maximal significant $k$, then stationarity. But where independency is violated, for example with the chosen genes of the population A (DA neurons) in \cite{Tapia2018}, some permutations of $X_1,...,X_n$ give sequences showing strong departures from the linear prediction.\\
This departure and the rest of the structure can also be observed on the sequence $I_k$ as shown in Figure  \ref{figure_Supp_Result_dopa_nondopa_infopath} and \ref{applied_TIDA}, which present the case where cells are considered as variables. In the trivial case, i.e. uniformity and independence, for any permutation, we have
\begin{equation}
I_1=\log_2 r, I_2=I_3=...=I_n=0.
\end{equation}

As detailed in materials and methods \ref{Computation_minimum_free_energy_complex},  we further compute the longest information paths (starting at 0 and that go from vertex to vertex following the edges of the simplicial lattice) with maximal or minimal slope (with minimal or maximal conditional mutual-information) that end at the first minimum, a conditional-independence criterion (a change of sign of conditional mutual-information). Such paths select the biologically relevant variables that progressively add more and more dependences. \noindent The paths $I_k(\sigma)$ that stay strictly positive for a long time are especially interesting, being interpreted as the succession of variables $X_{\sigma_1},...,X_{\sigma_k}$ that share the strongest dependence. \noindent However, the paths $I_k(\sigma)$ that become negative for $k\geq 3$ through  $I_2\approx 0$ are also interesting, because they exhibit a kind of frustration in the sense of Matsuda \cite{Matsuda2001} or synergy in the sense of Brenner \cite{Brenner2000a}.\\

The information landscape and path analysis corresponding to the analysis with cells as variables are illustrated in Figure \ref{figure_Supp_Result_dopa_nondopa_infopath}. It comes to consider the cells as a realization of gene expression rather than the converse, cf. \cite{Dawkins1976}. In this case, the data analysis task is to recover blindly the pre-established labels of cell types (population A and population B) from the topological data analysis, an unsupervised learning task. The heat-map transpose matrix of $n=20$ cells with $m=41$ genes is represented in Figure \ref{figure_Supp_Result_dopa_nondopa_infopath}a.  We took $n=20$ neurons among the $148$ within which $10$ were pre-identified as population A neurons (in green) and $10$ were pre-identified as population B neurons (in dark red), and ran the analysis on the $41$ gene expression with a graining of $N=9$ values (cf. section \ref{Genetic expression data}). The dimension above which the estimation of information becomes too biased due to the finite sample size is given by the undersampling dimension $k_u=11$ (p value 0.05, cf. section \ref{Undersampling dimension}). The landscapes turn out to be very different from the extremal (totally disordered and totally ordered) homogeneous (identically distributed) theoretical cases. The $I_k$ landscape shown in Figure \ref{figure_Supp_Result_dopa_nondopa_infopath}c exhibits two clearly separated components. The scaffold below represents the tuple corresponding to the maximum of $I_{10}$: it corresponds exactly to the $10$ neurons pre-identified as being population A neurons.\\

The  maximum (in red) and minimum (in blue) $I_k$ information paths identified by the algorithm are represented in Figure \ref{figure_Supp_Result_dopa_nondopa_infopath}d. The scaffold below represents the two tuples corresponding to the two longest maximum paths in each component: the longest (noted Max $IP_{11}$ in green) $IP_{11}$ contains the 10 neurons pre-identified as population A and 1 "error" neuron pre-identified as population B. We restricted the longest maximum path to the undersampling dimension $k_u=11$, but this path reached $k=14$ with erroneous classifications. The second longest maximum path (noted Max $IP_{11}$ in red) $IP_{11}$ contains the 10 neurons pre-identified as population B and 1 neuron pre-identified as population A that is hence erroneously classified by the algorithm. Altogether the information landscape shows that population A neurons constitute a quite homogenous population, whereas the population B neurons correspond to a more heterogeneous population of cells, a fact that was already known and reported in the biological studies of these populations. The histograms of the distributions of $I_k$ for $k=1,..,12$, shown in Figure \ref{figure_Supp_Result_dopa_nondopa_infopath}e are clearly bimodal and the insets provide a magnification on the population A component. As detailed in the section materials and methods \ref{k-independence test}, we developed a test based on the random shuffles of the data points that leave the marginal distributions unchanged, as proposed by \cite{Pethel2014}. It estimates if a given $I_k$ significantly differs from a randomly generated $I_k$, a test of the specificity of the k-dependence. The shuffled distributions and the significance value for $p=0.1$ are depicted by the black lines and the doted lines, as in Figure \ref{figure_independencetest}. As illustrated in the histograms of Figure  \ref{figure_Supp_Result_dopa_nondopa_infopath}e and in \cite{TapiaPacheco2017}, these results show that higher dependences can be important but they do not mean that pairwise or marginal informations are not: the consideration of higher dependences can only improve the efficiency of the detection obtained from pairwise or marginal considerations. \\

\begin{figure} [!h]
	\centering
	\includegraphics[height=8cm]{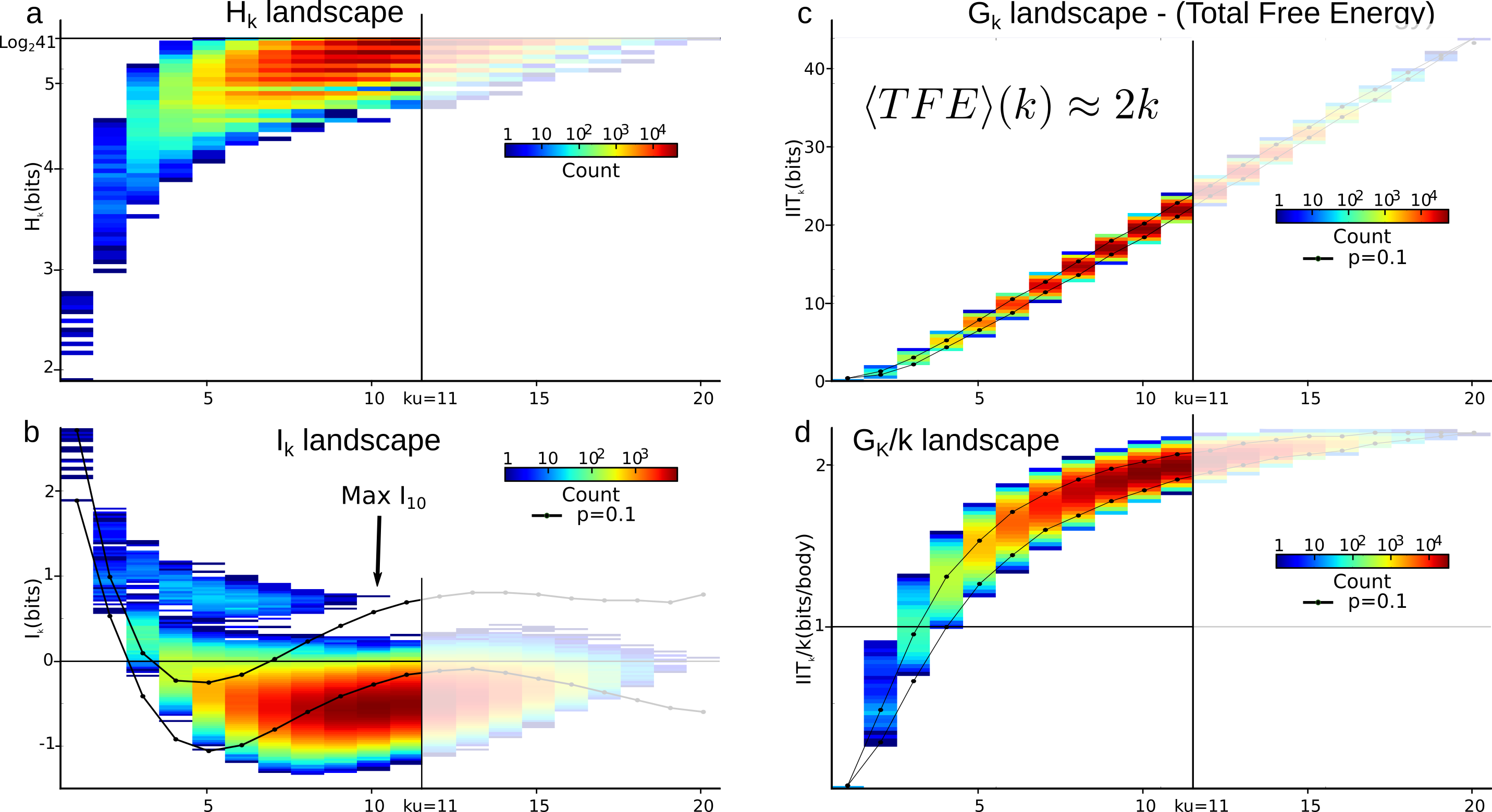}
	\caption{\textbf{$I_k$, $H_k$ and $G_k$ (Total Free Energy, TFE) landscapes}. \textbf{a:} entropy  $H_k$ and \textbf{b:} mutual information $I_k$ (free energy components) landscapes (same representation as figure \ref{figure_Supp_Result_dopa_nondopa_infopath}, $k_u = 11$, p value 0.05).\textbf{c:} $G_k$  landscape (total correlation or multi-information or Integrated Information or total free energy) \textbf{d:} the landscape of the $G_k$ per body ($G_k/k$). }	
	\label{applied_TIDA}
\end{figure}

As illustrated in Figure \ref{applied_TIDA} and expected from relative entropy positivity, the total correlation $G_k$ (see appendix \ref{Appendix: Bayes free energy} on bayes free-energy) is  monotonically increasing with the order $k$, and quite linearly in appearance ($G_k \approx 2 k$ asymptotically). The panel \textbf{d} quantifies this departure from linearity. However the  $G_k$ landscape fails to distinguih as clearly as $I_k$ landscape does, the population A.

\section{Discussion}

During the last decades, there have been important efforts in trying to evaluate the pairwise and higher order interactions in neuronal and biological measurements, notably to extract the undergoing collective dynamics. Applying the Maximum of Entropy principle on Ising spin models to neural data \cite{Schneidman2006,Tkacik2014}, a first series of studies concluded that pairwise interactions are mostly sufficient to capture the global collective dynamics, leading to the "pairwise sufficiency" paradigm (see Merchan and Nemenman for presentation \cite{Merchan2016}). However, as shown by the Ising model itself, near a second order phase transition, elementary pairwise interactions are compatible with non-trivial higher-order dependences, and very large correlations at long distances. From the mathematical and physical point of view, this fact is nicely encoded in the normalization factor of the Boltzmann probability, namely the Partition Function $Z(\beta)$. As shown by the Ising model, the probability law can be factorized (up to the normalization number $Z$) on the edges and vertices of a graph, but the statistical clusters can have unbounded volumes. Moreover, subsequent studies notably of Tka\u{c}ik et al. \cite{Tkacik2014a} (see also \cite{Humplik2017}) have shown that for sufficiently large populations of registered neurons, the pairwise models are insufficient to explain the data as proposed in \cite{Atick1992,Baudot2006} for example. Thus the dimension of the interactions to be taken into account for the models must be larger than two.\\

Note that most interactions in Biology are nowadays described in terms of networks, such that the concepts of protein networks, genetic networks or neural networks became familiar. However from the physical as well as the biological  point of view, none of these systems are really $1$-dimensional graphs, and it is now clear for most researchers in the domain that higher order structures are needed for describing collective dynamics, cf. for instance \cite{Yedidia2001}, and \cite{Reimann2017}.  Our Figure \ref{Figure_Ik_Hk_landscape}  clearly shows this point.\\
\indent William Bialek and his collaborators have well explained the interest of a systematic study of joint entropies and general multi-modal mutual information quantities as an efficient way for understanding neuronal activities, networks of neurons, and gene expression \cite{Brenner2000a,Schneidman2003b,Slonim2005}). They also developed approximate computational methods for estimating the information quantities. Mutual information analysis was applied for linking adaptation to the preservation of the information flow \cite{Brenner2000b,Laughlin1981}. Closely related to the present study,  Margolin, Wang, Califano and Nemenman have investigated multivariate dependences of higher order \cite{Margolin2010} with MaxEnt methods, by using the total-correlation $G_k$ (cf. equation \ref{total correlation}) in function of the integer $k\geq 2$. The apparent advantage is the positivity of the $G_k$.\\\\
In this respect, the originality of our method relies first on the systematic consideration of the entropy paths and the information paths that can be associated to all possible permutations of the basic variables, in arbitrary dimension, and the extraction of exceptional paths from them, in order to define the overall form of the distribution of information among the set of variables. Secondly, we used and proved the relevance of peculiar functions, multivariate mutual informations, where the previously cited works focused on total correlations, which fail to uncover the data structure as exemplified in Figure \ref{applied_TIDA} or only explored pairwise or $I_3$.
We named these tools the information landscapes of the data. This new perspective and mathematical justification of these functions has its origin in the local (topos) homological theory introduced in \cite{Baudot2015a} developped and extended in several ways by Vigneaux \cite{Vigneaux2019}. In the present article, we also proved new theoretical results along this line, about the concrete structure of higher-order information functions. Moreover the method was successfully applied to a concrete problem of gene expression in \cite{Tapia2018}.\\

Since their introduction, the possible negativity of the $I_k$ functions for $k\geq 3$ has posed serious problems of interpretation, and it was the main argument for many theoretical studies to discard such a family of functions for measuring information dependences and statistical interactions. Notably, it motivated the proposition of non-negative decomposition by Williams and Beer \cite{Williams2010} and of "unique information" by Bertschinger and colleagues \cite{Olbrich2015,Bertschinger2014}, or Griffith and Koch \cite{Griffith2014}. These partial decompositions of information are the subject of several recent investigations notably with  applications to the development of neural network \cite{Wibral2017} and neuromodulation \cite{Kay2017}. However, Rauh and colleagues showed that no non-negative decomposition can be generalized to multivariate cases for degrees higher than 3 \cite{Rauh2014} (th.2).\\
The present paper and \cite{Tapia2018} show that, to the contrary, the possible negativity is an advantage. The interest of this negativity was already illustrated in \cite{Brenner2000a,Schneidman2003b,Matsuda2001,Kim2010}, but we have further developed this topic with the study of complete $I_k$-landscapes, providing some new insights with respect to their meaning in terms of data point clusters.\\
\indent The precise contribution of higher-order is indeed directly quantified by the $I_k$ values in the landscapes and paths. Figure \ref{Figure_Ik_Hk_landscape} further illustrates the gain and the importance of considering higher statistical interactions, using the previous example of cells pre-identified as 10 population A and 10 population B cells ($n=20$, $m=47$, $N=9$). The plots are the finite and discrete analogs of Gibbs's  original representation of entropy vs. energy \cite{Gibbs1873}. Whereas pairwise interactions ($k=2$) cannot (or very hardly) distinguish the population A and population B cell types, the maximum of $I_{10}$ unambiguously identifies the population A.
	\begin{figure} [!h]
		\centering
		\includegraphics[height=6cm]{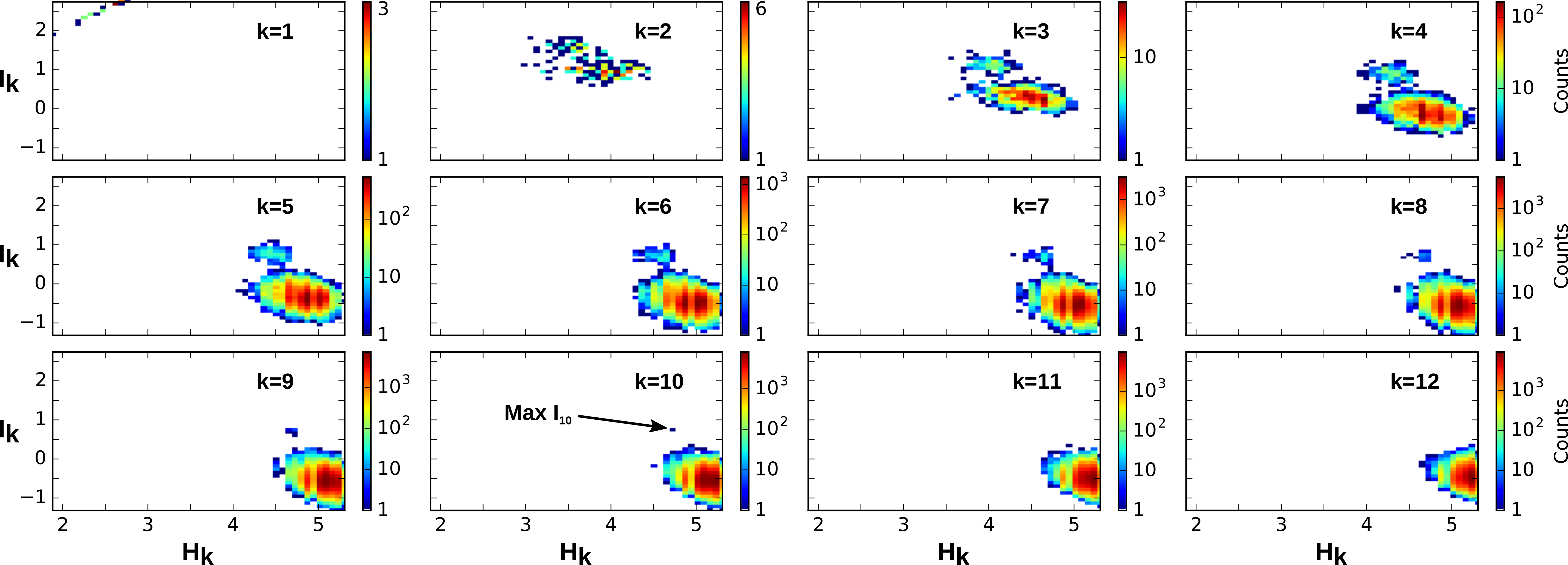}
		\caption{
			\textbf{$H_k-I_k$ landscape: Gibbs-Maxwell's entropy vs. energy representation.} $H_k$ and $I_k$ are plotted in abscissa and ordinate respectively for dimension $k=1,...,12$ for the same data and setting as in Figure \ref{figure_Supp_Result_dopa_nondopa_infopath} ($n=20$ cells, $m=47$ genes, $N=9$, $k_u=11$). Compare the difficulty in identifying the 2 cells types from the pairwise $k=2$ landscape to the $k=10$ landscape.}
		\label{Figure_Ik_Hk_landscape}
	\end{figure} 
	
\indent As illustrated in Figure \ref{figure_Supp_Result_dopa_nondopa_infopath}, the present analysis shows that in the expression of 41 genes of interest of population A neurons, the higher-order statistical interactions are non-negligible and have a simple functional meaning of a collective module, a cell type. We believe such conclusion to be generic in biology. More precisely, we believe that even if related to physics, biological structures have higher-order statistical interactions defined by higher-order information and that these interactions provide the signature of their memory engramming. In fact "information is physical" as stated by Bennett following Landauer \cite{Landauer1961}), in the sense of memory capacity and necessity of forgetting.  The quantification of the information storage applied here to genes can be considered as a generic epigenetic memory characterization, resulting of a developmental-learning process. The consideration of higher dimensional statistical dependences increases combinatorially the number of possible information modules engrammed by the system. It hence provides an appreciable capacity reservoir for information storage and for differentiation, for diversity. \\
\indent The critical points of the Ising model in dimension 2 and 3 show the difficulty to relate factorization (up to $Z(\beta)$), which describes the manner energy interactions localize, with the dependences structure, or in other words the manner information distributes itself, i.e. the form of information. Only few theoretical results relate the two notions. However, on the basis of several recent studies that we mentioned, particularly the studies of adaptive functions, and comforted by the analysis presented in this article, we can suggest that for biological systems, during development or evolution, the distribution of the information flow, as described in particular by higher order information quantities, participates in the generators of the dynamics, on the side of energy quantities coming from Physics. \\

\section{Material and Methods} \label{Material and Methods}

\subsection{The dataset: quantified genetic expression in two cell types} \label{Genetic expression data}

The quantification of genetic expression was performed using microfluidic qPCR technique on single dopaminergic (DA) and non-dopaminergic (NDA) neurons isolated from two midbrain structures, the Substantia Nigra pars compacta (SNc) and the neighboring Ventral Tegmental Area (VTA), extracted from  adult TH-GFP mice (transgenic mice expressing the Green Fluorescent Protein under the control of the Tyrosine Hydroxylase promoter). The precise protocols of extraction, quantification, and identification are detailed in \cite{TapiaPacheco2017,Tapia2018}. This technique allowed us to quantify in a single cell the levels of expression of 41 genes chosen for their implication in neuronal activity and identity of dopaminergic (DA) neurons. The SNc DA neurons were identified based on GFP fluorescence (TH expression). This identification was further confirmed based on the expression levels of \textit{Th} and \textit{Slc6a3} genes, which are established markers of DA metabolism. The quantification of the expression of the 41 genes ($n=41$) was achieved in 111 neurons ($m=111$) identified as DA and in 37 neurons ($m=37$) identified as nDA. In this article, for readability purpose, we replaced the names of the genes by gene numbers and the cell type DA by population A, and the cell type nDA by population B. The dataset is available in supplementary material \cite{TapiaPacheco2017,Tapia2018}.

\subsection{Probability estimation}\label{Probability estimation}

The presentation of the probability estimation procedure is achieved on matrices $D$ (genes as variables), and it is the same in the case of the analysis of the matrices $D^T$ (cells as variables). It is illustrated in Figure \ref{figure_Supp_proba_estimation} for the simple case of 2 random variables taken from the dataset of gene expression presented in section \ref{Genetic expression data}, namely the expression of two genes Gene5 and Gene21 in $m=111$ population A cells. Our probability estimation corresponds to a step of the integral estimation procedure of Riemann.

\begin{figure} [!h]
	\centering
	\includegraphics[height=12.5cm]{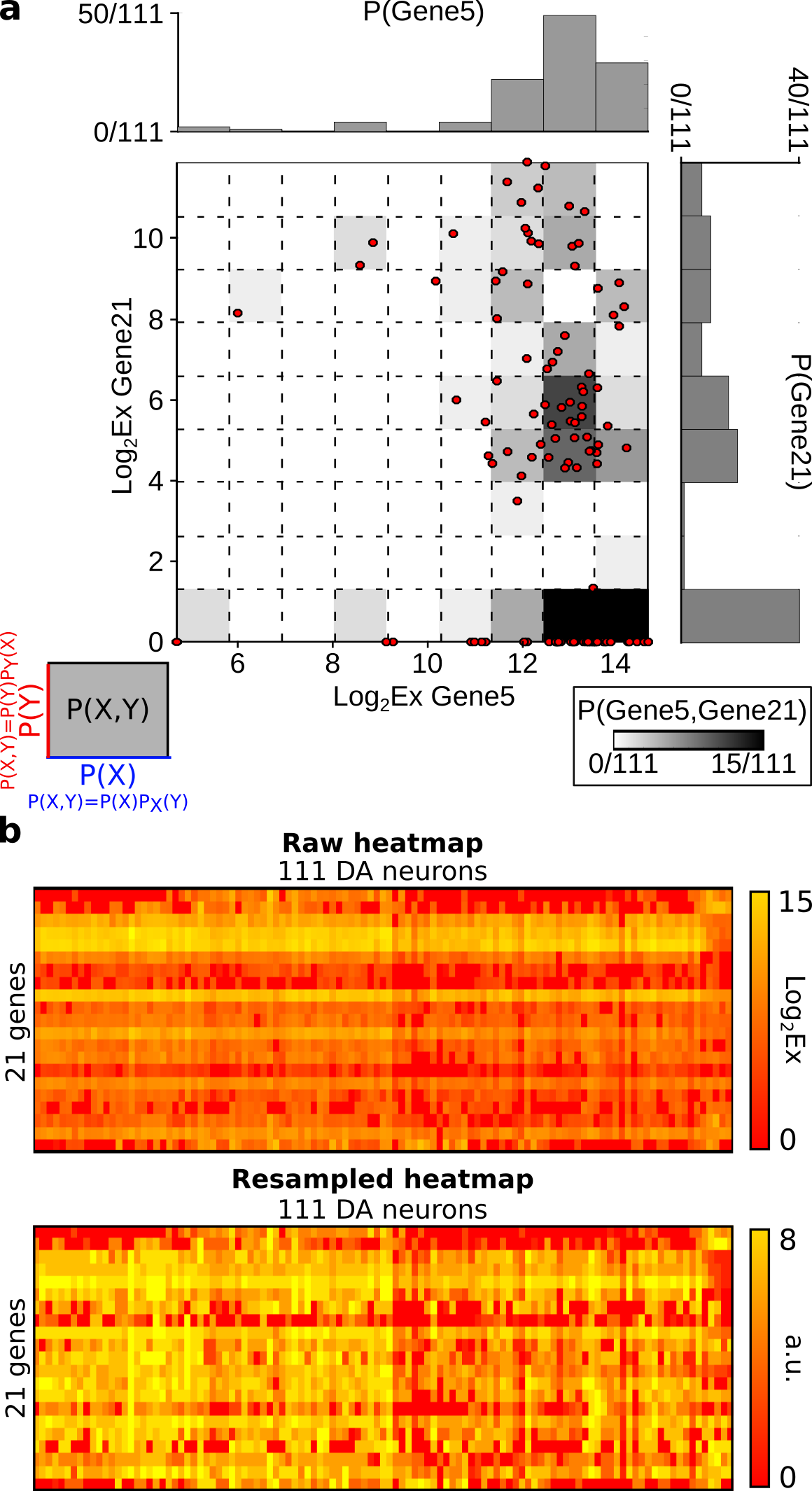}
	\caption{\textbf{Principles of probability estimation for 2 random variables.} a, illustration of the basic procedure used in practice to estimate the probability densitiy for the two genes ($n=2$) Gene5 and Gene21 in 111 population A neurons ($m=111$) using a graining of 9 ($N_1=N_2=9$). The data points corresponding to the 111 observations are represented as red dots, and the graining is depicted by the 81-box grid ($N_1.N_2$). The borders of the graining interval are obtained by considering the maximum and minimum measured values for each variable, and data are then sampled regularly within this interval with $N_i$ values. Projections of the data points on lower dimensional variable subspaces ($X_1$ and $X_2$ axes here) are obtained by marginalization, giving the marginal probability laws for the 2 variables $X_1$ and $X_2$ ($P_{X_i,N_i,m}$) ; represented as histograms above the $X_1$-axis for Gene21 and on the right of the $X_2$-axis for Gene21). b, heatmaps representing the levels of expression of the 21 genes of interest on a $\log_2 Ex$ scale (top, raw heatmap) and after resampling with a graining of 9 (bottom, $N_1=N_2 =...=N_{21} =9$). }
	\label{figure_Supp_proba_estimation}
\end{figure}

We write the heatmap as a $(m,n)$ matrix $D$ and its real coefficients $x_{ij} \in \mathbb{R}, ~ i \in \{1..m\}, ~ j \in \{1...n\}$: the columns of $D$ span the $m$ repetitions-trials (here the $m$ neurons) and the rows of $D$ spans the $n$ variables (here the $n$ genes). We also note, for each variable $X_j$, the minimum and maximum values measured as $\min x_j=\min_{1 \leq i \leq m} x_{ij}$ and $\max x_j=\max_{1 \leq i \leq m} x_{ij}$. 

We consider the space in the intervals $[\min x_j,\max x_j]$ for each variable $X_j$ and divide it into $N_1.N_2...N_n$ boxes, on which it is possible to estimate the atomic probabilities by elementary counting. We note each $n$-dimensional box by an $n$-tuple of integers $\{a_1,...,a_n\}$ where $\forall i \in \{1,...,n\}, ~ a_i \in \{1,...,N_i\}$, and writing the min and the max of a box on each variable $X_j$ (the jth co-ordinate of the vertex of the box)  as $\operatorname{bmin}_j= \min x_j +\frac{(a_j-1)(\max x_j-\min x_j)}{N_j}$ and  $\operatorname{bmax}_j= \min x_j +\frac{(a_j)(\max x_j-\min x_j)}{N_j}$, then the atomic probabilities can be defined using Dirac function $\delta$ as: 

\begin{multline}\label{probability}
P\left(\operatorname{bmin}_1 \leq X_1 \leq  \operatorname{bmax}_1,\operatorname{bmin}_2 \leq X_2 \leq  \operatorname{bmax}_2,..., \operatorname{bmin}_n \leq X_n \leq  \operatorname{bmax}_n \right)\\
=\sum_{i=1}^{m} \frac{\delta_i}{m}, 
~ \delta_i =\begin{cases}
0, ~ \text{if} \operatorname{bmin}_1 > x_{i1} ~ \text{or} ~  x_{i1}>\operatorname{bmax}_1 ...\text{or} ~  \operatorname{bmin}_n > x_{in} ~ \text{or} ~  x_{in}>\operatorname{bmax}_n\\
1, ~ \text{if} \operatorname{bmin}_1 \leq x_{i1}\leq \operatorname{bmax}_1 ~ \text{and}...\text{and} ~ \operatorname{bmin}_n\leq x_{in}\leq \operatorname{bmax}_n
\end{cases}
\end{multline} 

For two variables, using the definition of conditioning $P_{X}(Y)= \frac{P(X.Y)}{P(X)}$ and in the general case using the theorem of total probability \cite{Kolmogorov1933a} ($P(X)=\sum_{i=0}^N P(A_i.X)=\sum_{i=0}^N P(A_i).P_{A_i}(X)$), we can marginalize, or geometrically project on lower dimensions, to obtain all the probabilities corresponding to subsets of variables, as illustrated in Figure \ref{figure_Supp_proba_estimation}. For example, with short notation, the  probability associated to the marginal variable $X_i$ being in the interval $[\operatorname{bmin}_i,\operatorname{bmax}_i]$ is obtained by direct summation:
\begin{multline}\label{marginalization}
P\left(\operatorname{bmin}_i \leq X_i \leq  \operatorname{bmax}_i \right)=\\
\sum_{i=1}^{N_1...\widehat{N_i}...N_n} P\left(\operatorname{bmin}_1 \leq X_1 \leq  \operatorname{bmax}_1,\operatorname{bmin}_2 \leq X_2 \leq  \operatorname{bmax}_2,..., \operatorname{bmin}_n \leq X_n \leq  \operatorname{bmax}_n \right)
\end{multline} 
In the example of Figure \ref{figure_Supp_proba_estimation}, the probability of the level of \textit{Th} being in the 4th box is:
\begin{multline}\label{marginalization_example}
P\left(8 \leq \text{Th} \leq  9.8 \right) =\\ \sum_{i=0}^{8}P\left(8 \leq \text{Th} \leq  9.8,\operatorname{bmin}_2 \leq \text{Calb1} \leq  \operatorname{bmax}_2 \right)\\
= 2/111 + 2/111
\end{multline}

In geometrical terms, the set of total probability laws is an $N=N_1.N_2...N_n-1$ dimensional simplex $\Delta_{N_1.N_2...N_n-1}$ (the $-1$ accounts for the normalization equation $\sum P_i=1$, which embeds the simplex in an affine space). In the example of Figure \ref{figure_Supp_proba_estimation}, we have an 80-dimensional probability simplex $\Delta_{80}$, the set of sub-simplicies over the k-faces of the simplex $\Delta_n$, for every $k$ between $0$ and $n$,  represents the boolean algebra of the joint-probabilities, which is equivalent in the finite case to their sigma-algebra.
In our analysis, we have chosen $N_1=N_2=...=N_n=9$ and this choice is justified in section \ref{Sampling size} using Reshef and colleagues criterion \cite{Reshef2011} and undersampling constraints.\\

In summary, our probability estimation and data analysis depend on $n$ (the number of random variables), on $m$ (the number of observations), and on $N_1,...,N_i$ (the graining). The merit of this method is its simplicity (few assumptions, no priors on the distributions) and low computational cost. There exist different methods that can significantly improve this basic probability estimation, but we leave this for future investigation. The graining given by the numbers $N=N_1.N_2...N_n$ and the sample size $m$ are important parameters of the analysis explored in this section.\\

\subsection{Computation of k-Entropy, k-Information landscapes and paths} \label{Computation_minimum_free_energy_complex}

The computational exploration of the simplicial sublattice has a complexity in $\mathcal{O}(2^n)$ ($2^n=\sum_{k=1}^n \binom{n}{k}$). In this simplicial setting we can exhaustively estimate information functions on the simplicial information structure, that is joint-entropy $H_k$ and mutual-informations $I_k$ at all dimensions $k\leq n$ and for every $k$-tuple, with a standard commercial personal computer (a laptop with processor Intel Core i7-4910MQ CPU @ 2.90GHz $\times$ 8, even though the program  currently uses only one CPU) up to $k=n=21$ in a reasonable time ($\approx 3$ hours). 
Using the expression of joint-entropy (equation \ref{higher entropy}) and the probability obtained using equation \ref{probability} and marginalization, it is possible to compute the joint-entropy and marginal entropy of all the variables. The alternated expression of n-mutual information given by equation \ref{higher information} then allows a direct evaluation of all these quantities. The definitions, formulas and theorems are sufficient to obtain the algorithm. We moreover provide the Information Topology program INFOTOPO-V1.2 under opensource licence on github depository at https://github.com/pierrebaudot/INFOTOPO. Information Topology is a program written in Python (compatible with Python 3.4.x), with a graphic interface built using TKinter \cite{Shipman2010}, plots drawn using Matplotlib \cite{Hunter2007}, calculations made using NumPy \cite{VanDerWalt2011}, and scaffold representations drawn using NetworkX \cite{Hagberg2008}. It computes all the results on information presented in the current study, including the information paths, statistical tests of $I_k$ values  described in the next sections and the finite entropy rate $\frac{H_k}{k}$. The input is an excel table containing the data values, e.g. the matrix $D$ with the first row and column containing the labels. Here, we limited our analysis to $n=21$ genes of specific biological interest.\\

\subsection{Estimation of the undersampling dimension: statistical result} \label{Undersampling dimension_gen}

The information data analysis presented here depends on the two parameters $N$ and $m$. The finite size of the sample $m$ is known to impose an important bias in the estimation of information quantities: in high-dimensional data analysis, it is quoted as the Hugues phenomenon \cite{Hughes1968} and in entropy estimation it has been called the sampling problem since the seminal work of Strong and colleagues \cite{Strong1998,Nemenman2004,Merchan2016}. For the method we suggested, it is important to notice that the size $m$ of the population $Z$ is in general much smaller than the dimension of the probabilty simplex $N=N_1...N_n-1$. For instance, in the mentioned study of genes as variables \cite{Tapia2018}, we had $m=111$ for $DA$ neurons (resp. $m'=37$ for $NDA$ neurons) as respective number of neurons, but $N=9^{21}-1$, because we could only achieve the computation for the 21 most relevant genes. In the example considering cells as variables presented here in Figure \ref{figure_Supp_Result_dopa_nondopa_infopath}, the situation is even worse, with a sample size of $m=41$ genes and a dimension of $N=9^{20}-1$ as only 20 cells were considered. Thus the pure entropies $H_k,k=1,...,n$ must satisfy the following inequality:
\begin{equation}
\forall J\subset [n], k=| J|=card J,\quad H_k(X_J;P)\leq \log_2 m.
\end{equation}
where equality is an extreme signature of undersampling.
However, suppose that all the numbers $N_i,i=1,...,n$ are equal to $r\geq 2$, the maximum value of $H_k$ is equal to $k\log_2 r$, for instance $2k.\log_2(3)$ in the example.\\

\begin{lemma}\label{curse}
	Take the uniform probability on the simplex $\Delta([n])$ with affine coordinates, and take $\epsilon$ such that $0< \epsilon\leq 1/e\approx 0,367$; then the probability that $H_k(X_J)$
	is greater than $\epsilon k\log_2 r$ is larger than $1-\epsilon$.
\end{lemma}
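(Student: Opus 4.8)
The plan is to reduce the statement to a tail estimate for the entropy of a single Dirichlet‑distributed marginal. Fix $J\subset[n]$ with $|J|=k$ and write $M=r^{k}=|E_J|$, so that $H_k(X_J;P)$ is the entropy of the marginal law $q=(X_J)_*(P)$ on $E_J$ and takes values in $[0,\log_2 M]$, with $\log_2 M=k\log_2 r$. Under the uniform (Lebesgue) law on the probability simplex the atomic vector $P=(p_x)_{x\in E}$ is distributed as $\mathrm{Dir}(1,\dots,1)$; since each fibre $X_J^{-1}(y)$ contains exactly $r^{\,n-k}$ atoms, the aggregation property of the Dirichlet distribution shows that $q$ is again symmetric Dirichlet, $q\sim\mathrm{Dir}(\alpha,\dots,\alpha)$ with $M$ categories and concentration $\alpha=r^{\,n-k}\ge 1$, so that each coordinate satisfies $q_{y}\sim\mathrm{Beta}(\alpha,(M-1)\alpha)$. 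Thus the claim becomes the purely probabilistic statement $\mathbb{P}\big[H(q)\le\epsilon\log_2 M\big]<\epsilon$.

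First I would use the elementary inequality $H(q)\ge -\log_2\max_{y}q_{y}$, valid because $\log_2(1/q_y)\ge\log_2(1/\max_y q_y)$ for each $y$. It gives the inclusion $\{H(q)\le\epsilon\log_2 M\}\subseteq\{\max_y q_y\ge M^{-\epsilon}\}$, whence by the union bound and symmetry
\[
\mathbb{P}\big[H_k(X_J;P)\le\epsilon k\log_2 r\big]\;\le\;M\,\mathbb{P}\big[q_{y_0}\ge M^{-\epsilon}\big].
\]
Next I would reduce to the least concentrated case $\alpha=1$. Since $\epsilon<1$ we have $M^{-\epsilon}>1/M=\mathbb{E}[q_{y_0}]$, so this is an upper tail evaluated strictly above the common mean $1/M$; as the family $\mathrm{Beta}(\alpha,(M-1)\alpha)$ concentrates around that fixed mean as $\alpha$ grows, the tail is nonincreasing in $\alpha$ and is therefore largest at $\alpha=1$ (this monotonicity is the one genuinely technical point; it can be checked by differentiating in $\alpha$, or justified by the fact that for $\alpha\ge1$ the symmetric Dirichlet shifts mass toward the interior maximizer of the entropy and away from the low‑entropy boundary region). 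For $\alpha=1$ the marginal is $\mathrm{Beta}(1,M-1)$ with exact tail $\mathbb{P}[q_{y_0}\ge t]=(1-t)^{M-1}$, so it suffices to prove
\[
M\,(1-M^{-\epsilon})^{M-1}<\epsilon .
\]

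Finally I would establish this one‑variable inequality. Using $1-x\le e^{-x}$ gives $M(1-M^{-\epsilon})^{M-1}\le M\exp\!\big(-(M-1)M^{-\epsilon}\big)$, and since $(M-1)M^{-\epsilon}\ge \tfrac12 M^{1-\epsilon}$ grows far faster than $\log_2 M$ when $\epsilon<1$, the left‑hand side tends to $0$; an elementary calculus check then shows it stays below $\epsilon$ for every $M\ge 3$ once $\epsilon\le 1/e$. Indeed already at $M=3$ the inequality fails for $\epsilon$ somewhat larger than $1/e$, which is exactly what makes the stated threshold natural. The single remaining alphabet $M=2$ (one binary variable, $q_1$ uniform on $[0,1]$) is not captured by the lossy bound $H\ge-\log_2\max_y q_y$ and must be treated directly: there $\mathbb{P}[h(q_1)\le\epsilon]=2x_0$ with $h(x_0)=\epsilon$ and $x_0\le 1/2$, and since $\epsilon\le 1/e$ forces $x_0<1/4$, one has $h(x_0)>2x_0$, i.e. $2x_0<\epsilon$. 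The main obstacle is precisely this last step: securing the elementary bound uniformly down to the smallest alphabets, where the max‑entropy estimate is too weak and where both the constant $1/e$ and the separate binary computation are forced.
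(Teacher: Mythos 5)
Your route is genuinely different from the paper's. The paper works directly with the entropy sum: it restricts to the event that \emph{every} atom of the marginal law satisfies $p_j\ge \epsilon r^{-k}$, lower-bounds each summand $-p_j\log_2 p_j$ using the monotonicity of $-x\log_2 x$ on $[0,1/e]$ (this is where the threshold $1/e$ enters for the paper), and controls the complementary event by a union bound over the $r^{k}$ ``small coordinate'' sets $\{p_j<\epsilon r^{-k}\}$ together with an affine volume estimate. You instead pass through the min-entropy bound $H(q)\ge -\log_2\max_y q_y$, so your union bound runs over the ``large coordinate'' events $\{q_y\ge M^{-\epsilon}\}$, whose probability under the flat measure is exactly $(1-M^{-\epsilon})^{M-1}$ and hence genuinely small; by contrast each of the paper's small-coordinate events has probability $1-(1-\epsilon/M)^{M-1}\approx\epsilon$ rather than $\epsilon/M$, so your decomposition is the one for which the union bound actually closes cleanly. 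You are also more careful on the reduction from $\Delta([n])$ to the marginal simplex: the paper silently replaces the pushforward of the uniform law by the uniform law on $\Delta([k])$, whereas you identify the pushforward as the symmetric Dirichlet with concentration $r^{n-k}$ and reduce to $\alpha=1$ by a concentration argument. What your approach buys is a correct and quantitatively sharp union bound; what it costs is an extra one-variable inequality and a separate binary case.

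That said, two steps are asserted rather than proved. First, the monotonicity in $\alpha$ of the upper tail $\mathbb{P}[\mathrm{Beta}(\alpha,(M-1)\alpha)\ge t]$ for $t>1/M$: this is a single-crossing statement about a family with common mean, it does not follow from convex or dilation order alone, and ``differentiating in $\alpha$'' leads into digamma territory; it needs an actual argument, or the proof should be restructured to bound the relevant volume directly for the pushforward measure. Second, the inequality $M(1-M^{-\epsilon})^{M-1}<\epsilon$ for all $M\ge 3$ and $0<\epsilon\le 1/e$: your sketched estimates do not reach the critical corner. At $M=3$, $\epsilon=1/e$ the bound $1-x\le e^{-x}$ gives $3e^{-2\cdot 3^{-1/e}}\approx 0.79$, and the linearization $1-M^{-\epsilon}\le \epsilon\ln M$ gives $3(\ln^2 3)\,\epsilon^2\approx 3.62\,\epsilon^2$, which is below $\epsilon$ only for $\epsilon<0.28$; the true value $3(1-3^{-1/e})^2\approx 0.33$ does beat $1/e\approx 0.368$, but only by a small margin, so the small-$M$ cases must be verified exactly rather than by these coarse bounds. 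Your $M=2$ computation is fine. In short, the strategy is sound and proves a true statement, but as written the proof is incomplete at precisely the two points you yourself flag.
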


\begin{proof} 
	Concerning $H_k$, the simplex $\Delta([n])$ is replaced by $\Delta([k])$; then consider the set $\Delta_\epsilon$ of probabilities such that $p_j\geq \epsilon r^{-k}$ for any coordinate $j$ between $1$ and $r^{k}$, this set is the complement of the union of the sets $X_j(\varepsilon),i=1,...,r^{k}$ where $p_j< \epsilon r^{-k}$. From the properties of volumes in affine geometry, the measure of each set $X_j(\varepsilon)$ is less than $\epsilon r^{-k}$, thus the probability of $\Delta_\epsilon$ is larger than $1-\epsilon$. And for any index $j$ the monotony of $-x\ln x$ between $0$ and $1/e$ implies
	\begin{equation}
	-p_j\log_2 p_j>  \epsilon r^{-k} k\log_2 r;
	\end{equation}
	then by summation over all the indices we obtain the result.
\end{proof}

\noindent By example, for $r=9$, and $\epsilon=1/e$, this gives that $H_k\geq 2k\log_2(3)/e$ is two times more probable than the opposite.\\

Consequently, in the above experiment, the quantities $H_k$, then $I_k$ are not significant, except if they appear to be significantly smaller than $\log_2 m$.\\

\noindent In counterpart, as soon as the measured $H_k$ is inferior than the predicted one for $m$ values, this is significant. Note that the lemma \ref{curse}, with $n$ replaced by $m$, gives estimations for the entropies of raw data. In the next section, we propose a computational method to estimate the dimension $k_u$ above which information estimation ceases to be significant.

\subsection{Estimation of the undersampling dimension: Computational result}\label{Undersampling dimension}

Following the original presentation of the sampling problem by Strong and colleagues \cite{Strong1998}, the extreme cases of sampling are given by:
\begin{itemize}
	\item When $N_1=N_2=...=N_n=1$, there is a single box $\Omega$ and $P(\Omega)=m/m=1$ and we have $H_k=I_k=0, \forall k \in {0,...,n}$. The case where $m=1$ is identical. This fixes the lower bound of our analysis in order not to be trivial; we need $m\geq2$ and $N_1=N_2=...=N_n\geq2$. 
	\item When $N_1.N_2...N_n$ are such that only one data point falls into a box, $m$ of the values of atomic probabilities are  $1/m$ and $N_1.N_2...N_n-m$ are null as a consequence of equation \ref{marginalization}, and hence we have $H_n=\log_2 m$.
\end{itemize}	
Whenever this happens for a given k-tuple, all the $HP_k$  paths passing by this k-tuple will stay on the same information values since conditional entropy is non-negative: we have $H_k=H_{k+1}$ or equivalently $(X_1,...,X_k)H(X_{k+1})=0$, and all $k+l$-tuples are deterministic (a function of) with respect to the k-tuple. This is typically the case illustrated in Figure \ref{figure_Supp_Result_dopa_nondopa_infopath}: adding a new variable to an undersampled k-tuple is equivalent to adding the deterministic variable "0" since the probability remains unchanged ($1/m$). 
\begin{figure} [!h]
	\centering
	\includegraphics[height=10cm]{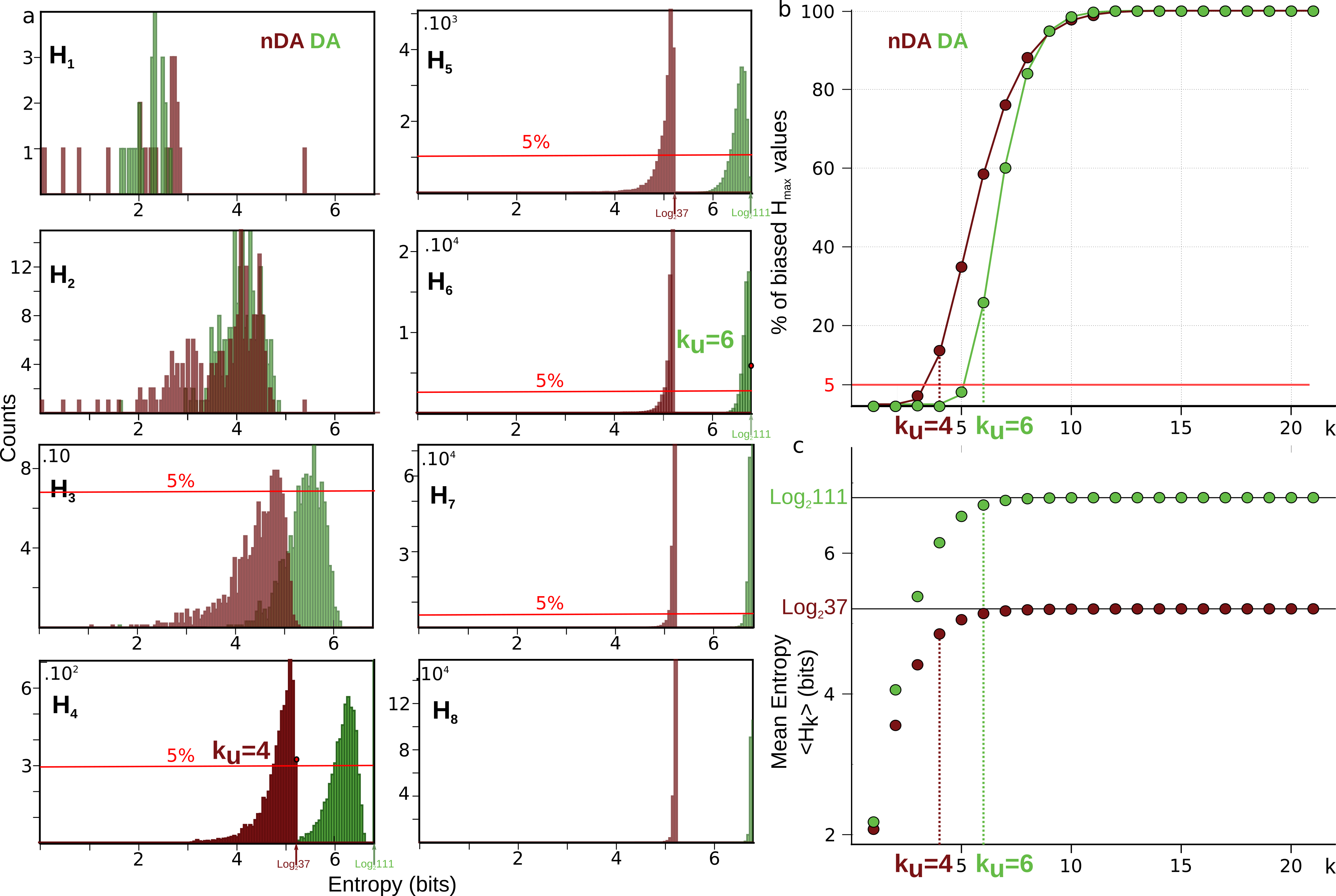}
	\caption{\textbf{Determination of undersampling dimension $k_u$. a,} distributions of $H_k$ for $m=111$ population A neurons (green) and $m=37$ population B neurons (dark red) for $k=1,..,6$. The horizontal red line represents the threshold we have fixed to 5 percent of the total number of k-tuples. \textbf{c,} Plot of the percent of maximum entropy $H_k=\ln m$ biased values as a function of the dimension $k$. The horizontal red line represents the threshold fixed to 5 percent, giving $k_u=6$ for population A and $k_u=4$ for population B  neurons. \textbf{c,} The mean $\langle HP \rangle(k)$ paths for these two populations of neurons, the maximum entropy $H_k=\ln m$ is represented by plain horizontal lines.}
	\label{figure_Supp_undersampling}
\end{figure}

Considering the analysis of cells as variables (matrix $D^T$), the signature of this undersampling is the saturation at   $H_k=\log_2 41$ observed in the $H_k$ landscape in Figure \ref{figure_Supp_Result_dopa_nondopa_infopath}b, starting at $k=5$ for some 5-tuples of neurons. Considering the analysis of genes as variables (matrix $D$ \cite{Tapia2018}), the mean entropy computed also shows this saturation at  $H_k=\log_2 111$ for population A neurons  and $H_k=\log_2 37$ for population B neurons. 
We propose to define a dimension $k_u$ as the dimension for which the probability $p_u$ of having the $H_k$ at the biased value of $H_k=\log_2 m$ is above $5$ percent ($p_u=0.05$). As shown for the analysis of cells as variables  in Figure \ref{figure_Supp_undersampling}, this basic estimation gives here $k_u=6$ for population A neurons and $k_u=4$ for population B neurons. The information structures identified by our methods beyond these values can be considered as unlikely to have a biological or physical meaning and shall not be interpreted. Since undersampling mainly affects the distribution of $I_k$ values close to 0 value, the maxima and minima of $I_k$ and the maximal and minimal information paths below $k_u$ are the least affected by the sampling problem and the low sample size. This will be illustrated in the next sections.\\

\subsection{k-dependence test} \label{k-independence test}

Pethel and Hahs \cite{Pethel2014} have constructed an exact test of 2-dependence for any pair of variables, not necessarily binary or iid. Indeed, the iid condition usually assumed for the $\chi^2$ test does not seem relevant for biological observations and the examples given here  and in \cite{TapiaPacheco2017,Tapia2018} with genetic expression support such a general statement. It allows to test the significance of the estimated $I_2$ values given a finite sample size $m$, the null hypothesis being that $I_2=0$  (2-independence according to Pethel and Hahs).  We follow here their presentation of the problem, and provide an extension of their test to arbitrary $k$ (higher dimensions), with the null hypothesis being the k-independence $I_k=0$.
Even in the lowest dimensions, and below the undersampling bound, the values of $I_k$ estimated from a finite sample size $m$ are considered as biased \cite{Pethel2014}. If one considers an infinite sample ($m\rightarrow \infty$) of n independent variables, we then have for all $k\geq2$ $I_k=0$. However, if we randomly shuffle the values such that the marginal distributions for each variable $X_i$ are preserved, the estimated $I_k$ can be very different from 0, with distributions of $I_k$ values not centered on 0. Figure \ref{figure_independencetest} illustrates an example of such bias with $m=111$ for the analysis with genes as variables.\\

\begin{figure} [!h]
	\centering
	\includegraphics[height=8.7cm]{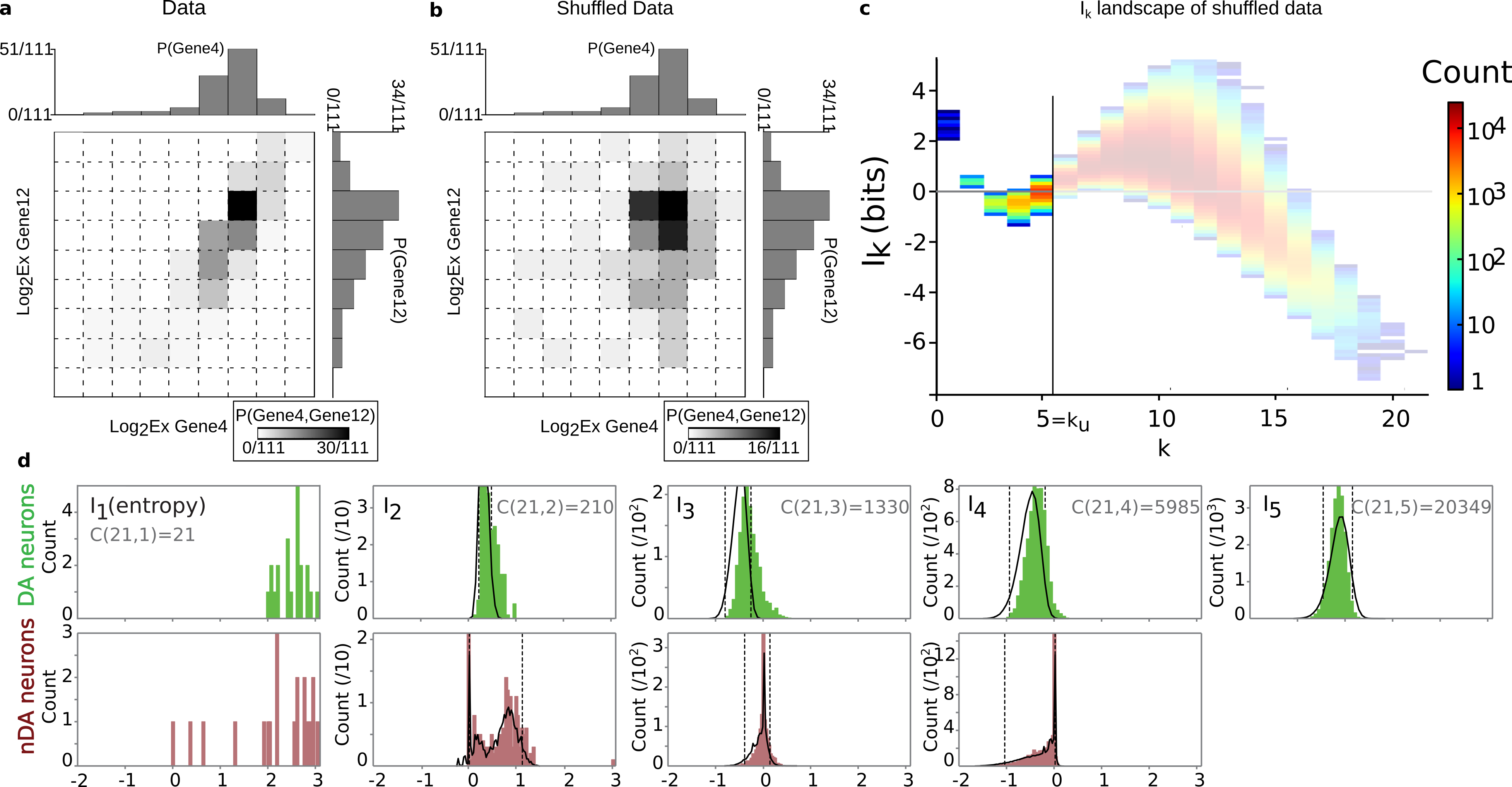}
	\caption{\textbf{Probability and Information landscape of shuffled data.} The figure corresponds to the case of analysis with genes as variables.  \textbf{a,} joint and marginal distributions of two genes (genes 4 and 12) for $m=111$ population A neurons. \textbf{b,} joint and marginal distributions after a shuffling of the values of expression of each gene. \textbf{c,} the estimated $I_k$ landscape for the expression of 21 genes after shuffling. \textbf{d,} histograms representing the distribution of $I_k$ values for all the degrees until $k=5$ for population B. The total number of combinations C(n,k) for each degree (number of pairs for $I_2$; number of triplets for $I_3$, etc) is given in gray. The averaged shuffled values of information obtained with 17 shuffles are represented on each histogram as a black line, and the statistical significance threshold values for $p=0.1$  are represented as vertical dotted lines.}
	\label{figure_independencetest}
\end{figure}
Reproducing the method of Pethel and Hahs \cite{Pethel2014}, we designed a shuffling procedure of the $n$ variables, which consists in randomly permuting the measured values (co-ordinates) of each variable one by one in the matrix $D$ or $D^T$ (geometrically, a "random" permutation of the co-ordinates of each data point, point by point). Such a shuffle leaves marginal probabilities invariant. Figure \ref{figure_independencetest} gives an example of the joint and marginal distributions before and after shuffle for two genes. Extending the 2-test of \cite{Pethel2014} to $k\geq2$, the $I_k$ values obtained after shuffling provide the distribution of the null hypothesis, k-independence ($I_k=0$) according to \cite{Pethel2014}. The task is hence to compute many shuffles, 10.000 in \cite{Pethel2014}, in order to obtain these "null" distributions.
The exact procedure of Pethel and Hahs \cite{Pethel2014} would require to obtain such "null" distribution for all the $2^n$ tuples, which would require a number of shuffled trials impossible to obtain computationally. We hence propose a global test that consists in computing 17 different shuffles of the 21 genes, giving "null" distribution of shuffled $I_k$ values composed of $21\times\binom{n}{k}$. For example, the test of $2$-dependence and $3$-dependence will be against a null distribution with $21*210=3750$ $I_2$ values and $21*1330=22610$ $I_3$ values respectively. We fix a p value above which we reject the null hypothesis (a significance level, fixed at $p= 0.05$ in \cite{Pethel2014}), allowing to determine the statistical significance thresholds as information values for which the integral of the null distribution reaches the significance level $p= 0.05$. This holds for $k=2$, as described in \cite{Pethel2014}, but since for $k\geq 2$ $I_k$ can be negative, the test becomes symmetric on the distribution, and hence for $k\geq 2$ we choose a significance level of $p= 0.1$ in order to stay consistent with the 2-dependence test. The "null" distributions and the threshold  given by the significance p-value of rejection are illustrated in Figure \ref{figure_independencetest}d.
If the observed values of $I_k$ are above or below these threshold values, we reject the null hypothesis.\\
In practice, a random generator is used to generate the random permutations (here the NumPy generator \cite{VanDerWalt2011}), and the present method is not exempt from the possibility that it generates statistical dependences in the higher degrees.\\
\textbf{Interpretation of the dependence test}.
The original interpretation of the test by Pethel and Hahs was that the null hypothesis corresponded to independent distributions, motivated by the statement that "permutation destroys any dependence that may have existed between the datasets but preserves symbol frequencies". However, considering simple analytical examples could not allow us to confirm their statement. We propose that for a given finite $m$, random permutations express all the possible statistical dependences  that preserve symbol frequencies (cf. the discussion of E.Borel in \cite{Borel1913}). This statement basically corresponds to what we observe in Figure \ref{figure_independencetest}. Hence we propose that in finite context the null-hypothesis corresponds to a random k-dependence. The meaning of the presented test is hence a selectivity or specificity test: a test of an $I_k$ of given k-tuple against a null hypothesis of "randomly" selected k-statistical dependences that preserve the marginals and $m$.   \\

\subsubsection{Sampling size and graining landscapes - stability of minimum energy complex estimation} \label{Sampling size}

Figure \ref{Figure_m_N_infopath} gives a first simple study of how robust the paths of maximum length are with respect to the variations of $m$ and $N$, in the case of the analysis of genes as variables. The limit $N\rightarrow \infty$ recovers Riemann integration theory and gives the differential entropy with the correcting additive factor $N$ (theorem 8.3.1 \cite{Cover1991}).
\begin{figure} [!h]
	\centering
	\includegraphics[height=12.5cm]{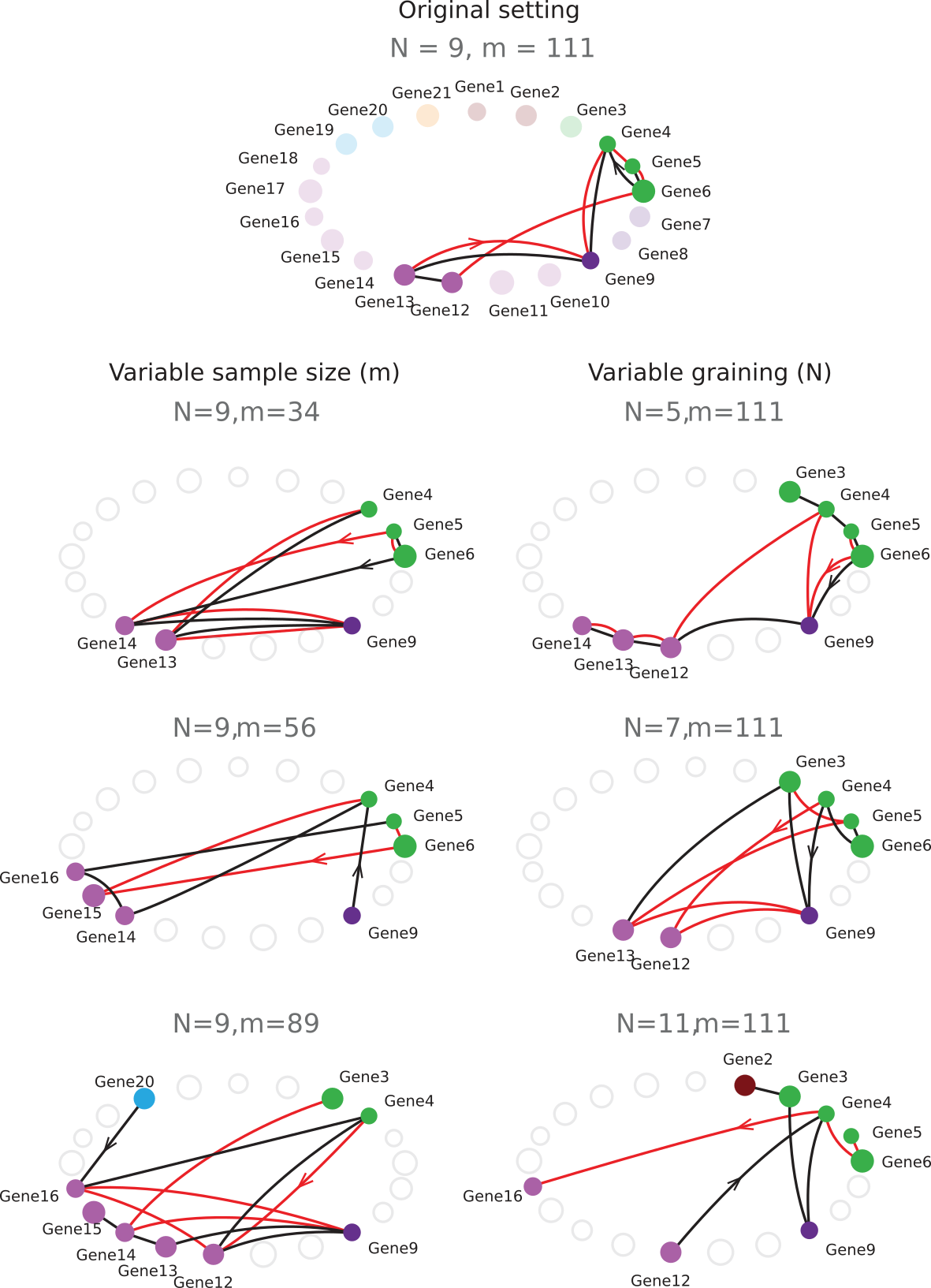}
	\caption{\textbf{Effect of changing sample size and graining on the identification of gene modules}. The figure corresponds to the case of analysis with genes as variables for the population A neurons. The positive $I_k$ paths of maximum length were computed for a variable number of cells ($m$,left column) and a variable graining ($N$, right column). For clarity, only the two positive paths of maximum length are represented (first in red, second in black) for each parameter setting and the direction of each path is indicated by arrowheads. The two positive paths of maximum length for the original setting ($N=9$, $m=111$) are represented on the scaffold at the top of the figure for comparison. Smaller samples of cells (one random pick of $34$, $56$ and $89$ cells) and larger ($N=11$) or smaller ($N=5,N=7$) graining than the original ($N=9$) were tested. Although slight differences in paths can be seen (especially for $N=11$), most of the parameter combinations identify gene modules that strongly overlap with the module identified using the original setting.}
	\label{Figure_m_N_infopath}
\end{figure}

The information paths of maximal length identified by our algorithm are relatively stable in the range of $N=5,7,9,11$ and $m=34,56,89,111$ where the $m$ cells were taken among the 111 neurons of population A. If we consider that the paths that only differ by the ordering of the variables are equivalent, then the stability of the two first paths is further and largely improved.
The undersampling dimension obtained in these conditions is $k_u(m=34)=5,~ k_u(m=56)=6,~ k_u(m=89)=6,~k_u(m=111)=6$ and $k_u(N=5)=8,~ k_u(N=7)=7,~ k_u(N=9)=6,~ k_u(N=11)=5$.
In general, information landscapes can be investigated with the additional dimensions of $N$ and $m$ together with $n$. It allows to define our landscapes as iso-graining landscapes and to study the appearance of critical points in a way similar to what is done in thermodynamics. In practice, to study more precisely the variations of information depending on $N$ and $m$ and to obtain a 2-dimensional representation, we plot the mean information as a function of $N$ and $m$ together with $n$, as presented in Figure \ref{figure_Supp_graining}a. We call the obtained landscapes the iso-graining $I_k$ landscapes. The choice of a specific graining $N$ can be done using this representation: a "pertinent" graining should be at a critical point of the landscape (\textbf{a} first minimum of an information path), consistent with the proposition of the work of Reshef and colleagues \cite{Reshef2011}, who used maximal information coefficient ($MI_2C$) depending on the graining (with a more elaborated graining procedure) to detect pairwise associations. We have chosen to illustrate the landscapes with $N=9$ according to this criterion and the undersampling criterion, because the $I_2$ values are close to their maximal values and the sampling size is not too limiting, with a $k_u=6$ (see Figure \ref{figure_Supp_graining}\textbf{a}). Moreover, this choice of graining size $N=9$ is sufficiently far from the critical point to ensure that we are in the condensed phase where interactions are expected. It is well below the analog of the critical temperature (the critical graining size), which according to the Figure \ref{figure_Supp_graining}a happens at $N_c=3$ (the $N$ for which the critical points cease to be trivial). In general, there is no reason why there should be only one "pertinent" graining.   
\begin{figure} [!h]
	\centering
	\includegraphics[height=6cm]{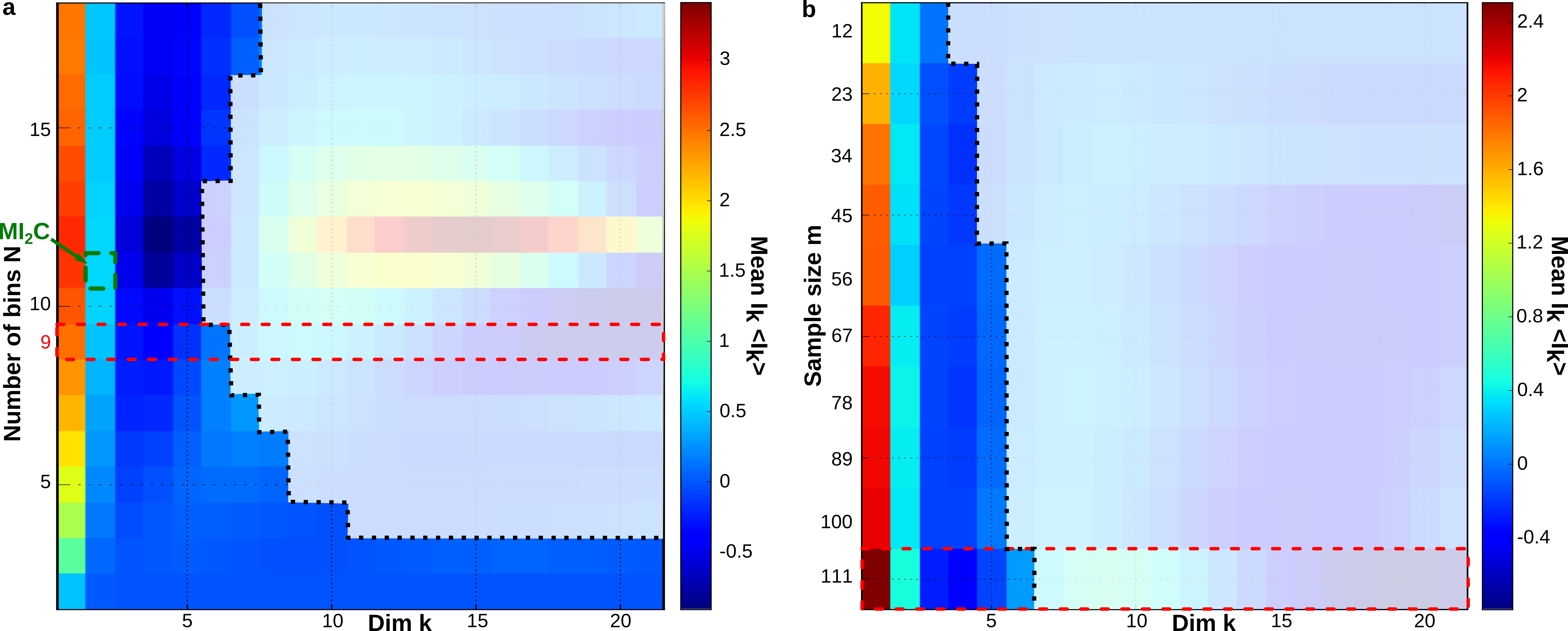}
	\caption{\textbf{Iso-sample-size ($m$) and iso-graining mean $\langle IP \rangle(k)$ landscapes.} The figure corresponds to the case of analysis with genes as variables for the population A neurons. \textbf{a,} The mean $\langle IP \rangle(k)$ paths are presented for $N=2,...,18$ and $n=21$ genes for the $m=111$ population A neurons. The "undersampling" region beyond the $k_u$ is shaded in white and delimited by black dotted line (the $k_u$ was undetermined for  $N=2,3$). For $N=2$ the mean $\langle IP \rangle(k)$ path has no non-trivial minimum (monotonically decreasing). This $N=2$ iso-graining is analog to the non condensed disordered phase of non interacting bodies, $ \forall k>1, ~ \langle IP \rangle(k)\approx 0$. All the other mean $\langle IP \rangle(k)$ paths have non-trivial critical dimensions. The condition $N=9$, $m=111$ used for the analysis is surrounded by dotted red lines. It was chosen to be in the condensed phase above the critical graining, here $N_c=3$, close to the criterion of maximal mutual information coefficient $MI_2C$ proposed by Reshef and colleagues  (bin surrounded by green dotted line) and with a not too low undersampling dimension.  \textbf{b,} The mean $\langle IP \rangle(k)$ paths are presented for $m=111,100,...,12$  population A neurons and $n=21$ genes with a number of bins $N=9$.}
	\label{figure_Supp_graining}
\end{figure}
The graining algorithm could be improved by applying direct methods of probability density estimation \cite{Scott1992}, or more promisingly persistent homology \cite{Epstein2011}. Finer methods of estimation (graining) have been developed by Reshef and colleagues \cite{Reshef2011} in order to estimate pairwise mutual-information, with interesting results. Their algorithm  presents a lower computational complexity than the estimation on the lattice of partitions, but a higher complexity than the simple one applied here. \\
What we call the iso-sampling size $I_k$ landscapes is presented in Figure \ref{figure_Supp_graining}b for mean $I_k$. Such investigation is also important since it monitors what is usually considered as the convergence (or divergence) in probability of the informations. For the estimations below the $k_u$ represented here, the information estimations are quite constant as a function of $m$, indicating the stability of the estimation with respect to the sample size.

\section*{Acknowledgement}
\textbf{Acknowledgments:} This work was funded by the European Research Council (ERC consolidator grant 616827 \textit{CanaloHmics} to J.M.G.; supporting M.T. and P.B.) and Median Technologies, developed at Median Technologies and UNIS Inserm 1072 - Universit\'{e} Aix-Marseille, and at Institut de Mathématiques de Jussieu - Paris Rive Gauche (IMJ-PRG), and thanks previously to supports and hostings since 2007 of Max Planck Institute for Mathematic in the Sciences (MPI-MIS) and Complex System Instititute Paris-Ile-de-France (ISC-PIF). This work addresses a deep and warm acknowledgement to the researchers who  helped its realization: G.Marrelec and J.P. Vigneaux; or  supported-encouraged it: H.Atlan, F.Barbaresco, H.B\'{e}nali, P.Bourgine, F.Chavane, J.Jost, A.Mohammad-Djafari, JP.Nadal, J.Petitot, A.Sarti, J.Touboul. \\

\section*{Supplementary material, contributions and previous versions}
\textbf{Previous Version:} a partial version of this work has been deposited in the method section of Bioarxiv 168740 in July 2017 and preprints \cite{Baudot2018}.\\
\textbf{Author contributions:} D.B. and P.B. wrote the paper, P.B. analysed the data ; M.T. performed the experiments ; M.T. and J.M.G. conceived and designed the experiments ; D.B., P.B., M.T. and J.M.G participated in the conception of the analysis. \\
\textbf{Supplementary material:} The software Infotopo is available at https://github.com/pierrebaudot/INFOTOPO\\

\section*{Abbreviations}
The following abbreviations are used in this manuscript:\\

\noindent 
\begin{tabular}{@{}ll}
iid & independent identicaly distributed\\
DA & Dopaminergic neuron\\
nDA & non Dopaminergic neuron\\
$H_k$ & Multivariate k-joint Entropy\\
$I_k$ & Multivariate k-Mutual-Information\\
$G_k$ & Multivariate k-total-correlation or k-multi-information\\
$MI_2C$ & Maximal 2-mutual-Information Coefficient 
\end{tabular}

\appendix

\section{Appendix: Bayes free energy and Information quantities} \label{Appendix: Bayes free energy}

\subsection{Parametric modelling}

As we mentioned in the introduction, the statistical analysis of data $X$ is confronted to a serious the risk of circularity, because the confidence in the model is dependent on the probability law it assumes and reconstructs in part. Several approaches were followed to escape from this circularity; all of them rely on the choice of a set $\Theta$ of probability laws where $\mathbb{P}_X$ is researched. For instance, maintaining the frequentist point of view, the Fisher information metric on $\Theta$ (cf. \cite{Ly2017}) determines bounds on the confidence. Another popular approach is to choose an \emph{a priori} probability $\mathbb{P}_\Theta$ on $\Theta$, and to revise this choice after all the experiments $X(z),z\in Z$, by computing the probability on $E\times\Theta$, which better explains the results (the new probability on $\Theta$ is its marginal, and for each $\theta$ in $\Theta$, the probability $P_\theta$ on $E$ is its conditional probability).
Here a more precise principle is necessary, which expresses a trade-off between the maximization of the marginal probability of the results under the constraint to be not too far from the prior. A popular example is the minimization of the Bayes Free energy $F_V(P)$, which appears as the maximum of entropy of the new \emph{a posteriori} probability under the constraint to predict in the mean the data and to depart the less possible from the \emph{a priori} probability on the probabilities. This function is given by a Kullback-Leibler distance $D_{KL}$. In the finite setting, with a uniform a priori, this consists in maximizing the entropy among the laws that predict the observed distribution. Remark that the two methods, Bayes and Fisher, are related, because in most cases
the chosen \emph{a priori} probability laws (and the data estimation) used in the function $F_V$ are given by frequencies, and
because the distance $D_{KL}(P,Q)$ is approximated by the Fisher metric at $P$ when $Q$ approaches $P$. \\

\subsection{Bethe approximation}

Let us remind that for two probability laws $P,Q$ on the same finite set $\Omega$, the Kullback-Leibler divergence from $P$ to $Q$ is defined by
\begin{equation}
D_{KL}(P,Q)=\sum_{x\in\Omega} P_x \ln\frac{P_x}{Q_x}=\mathbb{E}_P(-\ln Q)-H(P).
\end{equation}
Contrarily to its name, it is not a true distance, because it is not symmetric, however it is always positive
and it is equal to zero if and only if $P=Q$. Another drawback is that it can be $+\infty$: it is so when there exists $x$
such that $Q_x=0$ but $P_x> 0$, i.e. when $P$ is not absolutely continuous with respect to $Q$.\\
\indent The Kullback-Leibler divergence permits to define the Bayes free energy functional as follows:\\

\noindent The unknown is the probability law $P_b$ on $E\times\Theta$.
\begin{equation}
F_V(P_b)=D_{KL}(P_b,P_L\otimes P_a)=\sum_{x_L,\theta}(\ln\frac{P_b(x_L,\theta)}{P_L(x_L)P_a(\theta)})P_b(x_L,\theta),
\end{equation}
where $P_a(\theta)$ is the \emph{a priori} on the probability laws and where $P_L(s)$ represents the new partial data, collected by
a collection of variables $X_L$, and expressed by a probability law.\\
\begin{equation}
F_V(P_b)=\mathbb{E}_{P_b}(-\ln P_a+D_{KL}((X_L)*P_\theta,P_L))-H(P_b).
\end{equation}
This function looks like a free energy in Statistical Physics, that is the sum of the negentropy and
the mean of an energy function.\\
Here we assume that $\Omega=E_S$ for a family of variables
$S_i,i=1,...,N$, and the states are the possible values of the joint variable $S$.\\
\noindent Due the strict convexity of the negentropy, $F_V$ has a unique minimum, that defines the equilibrium state.\\
\indent Practically, the full entropy is difficult to estimate, thus approximations were introduced, following Bethe and Kikuchi (cf. Mori \cite{Mori2013}), generalizing the Mean Field
Theory. These approximations are no more convex in the unknown $P_b$, they are obtained by replacing the full entropy $H$ by a convenient linear combination of entropies of more accessible variables (observable quantities). It is here that the information functions $H_k$ and $I_k$ appear in the bayesian variational
calculus (cf. Mori \cite{Mori2013}):\\

\noindent Consider a simplicial complex $K$ in the simplex $\Delta([N])$, i.e. a collection of faces that contains every faces inside each face it contains, and assume $K$ a combinatorial ($PL$)  manifold of  dimension $d$, with possibly a boundary that is a combinatorial ($PL$)  manifold $\partial K$; then the Bethe function associated to $K$ is given by the two equivalent following formulas:
\begin{equation}
F_B(Q)=\mathbb{E}_Q(-\ln f)-\sum_{I \in K^{*}}(-1)^{d-|I|}H(S_I),
\end{equation}
where the sum is taken over the set $K^{*}$ of faces not contained in $\partial K$, and $|I|$ denotes the dimension of the face $I$;
\begin{equation}\label{bethelisseinfos}
F_B(Q)=\mathbb{E}_Q(-\ln f)-\sum_{J \in K}(-1)^{|J|+1}I_{|J|}(S_I;Q),
\end{equation}
where the sum is taken over all the faces of $K$, including the boundary, and $I_{|J|}(S_J;Q)$ is the higher mutual information considered everywhere above in the text.\\

\bibliographystyle{acm}
\bibliography{bibtopo}

\end{document}